\setlist[description]{leftmargin=15pt,labelindent=15pt}
\definecolor{darkgreen}{rgb}{0,0.5,0}
\renewcommand\paragraph{\@startsection{paragraph}{4}{\z@}%
                                    {1.5ex \@plus1ex \@minus.2ex}%
                                    {-1em}%
                                    {\normalfont\normalsize\bfseries}}
\newif\ifnotes\notestrue
 \definecolor{mygrey}{gray}{0.50}
 \newcommand{\notename}[2]{{{\footnotesize{\bf (#1:} {#2}{\bf ) }}}}
 \newcommand{\noteswarning}{{\begin{center} {\Large WARNING: NOTES ON}\end{center}}}
 \newcommand{\notename}[2]{{}}
 \newcommand{\noteswarning}{{}}
\newtheorem{theorem}{Theorem}[section]
\newtheorem{claim}[theorem]{Claim}
\newtheorem{lemma}[theorem]{Lemma}
\newtheorem{remark}[theorem]{Remark}
\newtheorem{thm}{Theorem}[section]
\newcommand{\expref}[2]{{\texorpdfstring{\hyperref[#2]{#1~\ref{#2}}}{#1~\ref{#2}}}} 
\newcommand{\secref}[1]{\expref{Section}{#1}}
\newcommand{\thmref}[1]{\expref{Theorem}{#1}}
\newcommand{\clmref}[1]{\expref{Claim}{#1}}
\newcommand{\tref}[1]{\expref{Theorem}{#1}}
\newcommand{\appref}[1]{\expref{Appendix}{#1}}
\newcommand{\lref}[1]{\expref{Lemma}{#1}}
\newcommand{\corref}[1]{\expref{Corollary}{#1}}
\newcommand{\conjref}[1]{\expref{Conjecture}{#1}}
\newcommand{\pref}[1]{\expref{Proposition}{#1}}
\newcommand{\figref}[1]{\expref{Figure}{#1}}
\newcommand{\boldp}{\ensuremath{\mathsf{p}}}
\newcommand{\ch}{\mathcal{C}}
\newcommand\defeq{\stackrel{\mathclap{\small\mbox{def}}}{=}}
\newcommand{\RZ}{\mathbb{Z}}
\newcommand{\R}{\mathbb{R}}
\newcommand{\E}{\mathbb{E}}
\newcommand{\N}{\mathbb{N}}
\newcommand{\FS}{\mathfrak{S}}
\newcommand{\CV}{\mathcal{V}}
\newcommand{\CD}{\mathcal{D}}
\newcommand{\ip}[2]{\langle #1,#2\rangle}
\newcommand{\sN}{\mathsf{N}}
\newcommand{\BR}{\mathbb{R}}
\newcommand{\BE}{\mathbb{E}}
\newcommand{\BP}{\mathbb{P}}
\newcommand{\CP}{\mathfrak{P}}
\newcommand{\bits}{\{0,1\}}
\newcommand{\pmone}{\{\pm1\}}
\newcommand{\ind}{\mathbf{1}}
\newcommand{\RN}{\mathbb{N}}
\newcommand{\Psijk}[1]{\Psi_{\mathbf{#1}}}
\newcommand{\CH}{\mathcal{H}}
\newcommand{\calP}{\ensuremath{\mathcal{P}}}
\newcommand{\CI}{\mathcal{I}}
\newcommand{\CE}{\mathcal{E}}
\newcommand{\bx}{\mathbf{x}}
\newcommand*\bh{\ensuremath{\boldsymbol{h}}}
\newcommand*\bj{\ensuremath{\boldsymbol{j}}}
\newcommand*\bk{\ensuremath{\boldsymbol{k}}}
\newcommand*\bl{\ensuremath{\boldsymbol\ell}}
\newcommand*\bm{\ensuremath{\boldsymbol{m}}}
\newcommand{\fhat}{\widehat{f}}
\newcommand{\dhat}{\widehat{d}}
\newcommand{\polylog}{\mathrm{polylog}}
\newcommand{\poly}{\mathrm{poly}}
\newcommand{\infnorm}[1]{\left\| #1 \right\|_{\infty}}
\newcommand{\eps}{\epsilon}
\newcommand{\disc}{\mathsf{disc}}
\newcommand{\sign}{\mathsf{sign}}
\newcommand{\CB}{\mathcal{B}}
\newcommand{\CL}{\mathcal{L}}
\newcommand{\CS}{\mathcal{S}}
\newcommand{\CN}{\mathcal{N}}
\newcommand{\CZ}{\mathcal{Z}}
\newcommand{\sfp}{\mathsf{p}}
\newcommand{\comp}[1]{\overline{#1}}
\newcommand{\err}{\mathsf{err}}
\newcommand{\CG}{\mathcal G}
\newcommand{\CT}{T}
\newcommand{\calT}{\mathcal{T}}
\newcommand{\smin}{\eps_\mathsf{min}}
\newcommand{\smax}{\eps_\mathsf{max}}
\newcommand{\cov}{\mathbf{\Sigma}}
\newcommand{\error}{\mathsf{err}}
\global\long\def\norm#1{\left\Vert #1\right\Vert }
\newcommand{\leaf}{l}
\newcommand{\bK}{\overline{K}}
\newcommand{\BS}{\mathbb{S}}
\newcommand{\p}{\mathbb{P}}
\newcommand{\sfq}{\mathsf{q}}
\newcommand{\sfu}{\mathsf{u}}
\newcommand{\tM}{M^+}
\newcommand{\op}{\mathsf{op}}
\newcommand{\diam}{\mathsf{diam}}
\newcommand{\im}{\mathrm{im}}
\newcommand{\SE}{\mathscr{S}}
\newcommand{\SA}{\mathscr{A}}
\newcommand{\SB}{\mathscr{B}}
\newcommand{\SG}{\mathscr{G}}
\newcommand{\ST}{\mathscr{T}}
\newcommand{\Tr}{\mathrm{Tr}}
\newcommand{\SZ}{\mathscr{Z}}
\newcommand{\SD}{\mathscr{D}}
\newcommand{\SL}{\mathscr{L}}
\newcommand{\arbnorm}[1]{\left\|#1\right\|_*}
\newcommand{\qlt}{q^l_{j,k}}
\newcommand{\qrt}{q^r_{j,k}}
\newcommand{\dlt}{d^l_{j,k}}
\newcommand{\drt}{d^r_{j,k}}
\newcommand{\SM}{\mathscr{M}}
\newcommand{\SN}{\mathscr{N}}
\newcommand{\ovl}{\overline{L}}
\newcommand{\putat}[3]{\begin{picture}(0,0)(0,0)\put(#1,#2){#3}\end{picture}}
\DeclareMathOperator{\nl}{nl}
\DeclareMathOperator{\bin}{bin}
\DeclareMathOperator{\sgn}{sgn}
\DeclareMathOperator{\total}{Total}
\DeclareMathOperator{\child}{child}
\DeclareMathOperator{\sib}{sib}
\DeclareMathOperator{\gap}{gap}
\DeclareMathOperator{\conge}{cong}
\DeclareMathOperator{\dist}{dist}
\let\oldabstract\abstract
\let\oldendabstract\endabstract
\renewenvironment{abstract}
{%
               {\list{}{\addtolength{\leftmargin}{1em} 
                        \listparindent 1.5em%
                        \itemindent    \listparindent%
                        \rightmargin   \leftmargin%
                        \parsep        \z@ \@plus\p@}%
                \item\relax}%
               {\endlist}%
\oldabstract}
{\oldendabstract}
\title{The Power of Two Choices in Graphical Allocation}
\author{Nikhil Bansal\thanks{CWI Amsterdam and TU Eindhoven, \texttt{N.Bansal@cwi.nl}. Supported by the  NWO VICI grant 639.023.812.} \and {Ohad Feldheim\thanks{Hebrew University of Jerusalem Israel, \texttt{ohad.feldheim@mail.huji.ac.il}. Supported by ISF grant 1327/19.}}
}
\date{}
\begin{document}

\maketitle

\begin{abstract}
\medskip
 The graphical balls-into-bins process is a generalization of the classical 2-choice balls-into-bins process, 
 where the bins correspond to vertices of an arbitrary underlying graph $G$. At each time step an edge of $G$ is chosen uniformly at random, and a ball must be assigned to either of the two endpoints of this edge.
The standard 2-choice process corresponds to the case of $G=K_n$.
 
 For any $k(n)$-edge-connected, $d(n)$-regular graph on $n$ vertices, and any number of balls,
 we give an allocation strategy 
that, with high probability, ensures a gap of $O((d/k) \log^4\hspace{-1pt}n \log \log n)$, between the load of any two bins. 
In particular, this implies polylogarithmic bounds for natural graphs such as cycles and tori, for which the classical greedy allocation strategy is conjectured to have a polynomial gap between the bins' loads. 
For every graph $G$, we also show an $\Omega((d/k) + \log n)$ lower bound on the gap achievable by any allocation strategy. This implies that our strategy achieves the optimal gap, up to polylogarithmic factors,  for every graph $G$.

Our allocation algorithm is simple to implement and requires only $O(\log(n))$ time per allocation. It can be viewed as a more global version of the greedy strategy that compares average load on certain fixed sets of vertices, rather than on individual vertices.
A key idea is to relate the problem of designing a good allocation strategy to that of finding suitable multi-commodity flows. To this end, we  consider R\"{a}cke's cut-based decomposition tree and define certain orthogonal flows on it.

    \smallskip
\noindent \textbf{Keywords.} Load-balancing, Balls-into-bins processes, graphical two-choice, R\'{a}cke decomposition.
    
\end{abstract}
\thispagestyle{empty}
\clearpage
\newpage
\setcounter{page}{1}

\section{Introduction} 

Randomized balls-into-bins models serve as useful abstractions for various problems arising in hashing, load balancing and resource allocation in parallel and distributed systems and have been extensively studied in the areas of probability, economics and algorithms (see e.g.,~\cite{RS98,DR96,AK14}).
The balls typically represent tasks or items, that need to be allocated to resources that are modeled by the bins. The goal is to balance the loads across bins as much as possible.

These models usually differ by the kind of control available to the algorithm over the allocation process.
In the classical {\em single-choice} model, the algorithm has no control and each ball is placed in a bin chosen uniformly at random. For $n$ balls and $n$ bins, it is well known that the heaviest bin has load $(1+o(1)) \ln n/\ln \ln n$ with high probability (w.h.p.). In the {\em heavily loaded case}, where the number of balls $T$ can be much larger than $n$, 
the bin loads are in the range $T/n \pm \Theta(\sqrt{(T \log n)/n})$  (provided  that $T \geq n \log n$). In particular, the deviation from the average load of $T/n$ increases with $T$ as $\Theta(\sqrt{(T\log n)/n})$. 

\paragraph{2-choice model.} Perhaps the simplest and most well-studied controlled variant of this model is the\linebreak \emph{2-choice} model. Here at each step the algorithm is given two uniformly chosen bins into one of which it must allocate the ball. This seemingly minor modification, leads to 
substantial improvements. In a seminal result, Azar, Broder, Karlin and Upfal \cite{ABKU94} showed that if a ball is placed in the least loaded of $d\geq 2$ uniformly sampled bins, then for $T=O(n)$, the maximum load reduces to $\ln \ln n/\ln d + \Theta(1)$. They also establish that this {\em greedy} allocation strategy is asymptotically optimal for this model.

These results were extended by Berenbrink, Czumaj, Steger and V{\"o}cking~\cite{BCSV06} to the 
harder setting of
arbitrary $T$. They showed that the maximum load is
$T/n + \ln \ln n/\ln d + O(1)$ with probability\linebreak $1-1/\text{poly}(n)$. Remarkably this implies that already for $d=2$ choices, the excess load over the average does not increase with $T$ at all (unlike for the case of $d=1$). A simpler proof of this result, albeit with worse tail bounds, was given by Talwar and Wieder [TW14]).

\paragraph{Graphical process.}
In many natural settings, there are restrictions on which pairs of bins can be queried or where the ball can be placed. An elegant generalization of the $2$-choice process, called the {\em graphical process}, was introduced by Kenthapadi and Panigrahy \cite{KP06}.
Here there is an underlying graph $G=(V,E)$ on $n=|V|$ vertices, and 
at each step, a uniformly random edge $e=(u,v)$ is chosen and the ball must be placed on one of the two endpoints of $e$. Notice that the standard $2$-choice process corresponds to the special case of $G=K_n$.
This motivates the following natural question:

{\em Given a graph $G$ what is the best load balance obtainable by a graphical two-choice allocation strategy?}

Extending the results for the classical $2$-choice process ($G=K_n$), 
\cite{KP06} showed that if $G$ is $n^{\epsilon}$-regular, then for $T=n$ balls, the greedy strategy, has  maximum load is $\ln \ln n + O(\log 1/\eps)$. An extension to hypergraphs was considered in \cite{Godfrey08}. 

The harder setting of arbitrary $T$, which is also the focus of our work, was considered by Peres, Talwar, Weider \cite{PTW15}. They investigated the greedy strategy,
and using an elegant majorization and potential function technique showed that the gap  between the maximum and minimum bin load is $\Theta(\log n)$  for complete graphs and
more generally, $O((\log n)/\beta)$ 
for $d$-regular graphs with edge-expansion\footnote{For any subset $S \subset V$ with $|S|\leq n/2$, $E(S,\overline{S}) \geq \beta d |S|$.} $\beta$.

\paragraph{Upper gap versus gap.}
Notice that the objective studied by \cite{PTW15} is slightly different from the one
in \cite{ABKU94,BCSV06,KP06}. Let us use the term {\em upper gap} for the difference between the maximum load and the average load, and {\em gap} for the difference between the maximum and minimum load. 
These objectives can differ sometimes, e.g.,~for $G=K_n$, the gap\footnote{The lower bound of $\Omega(\log n)$ on the gap follows (in fact for any graph $G$) because by a standard coupon collector argument, in any consecutive sequence of $\Omega(n\log n)$ steps, with constant probability some bin will not be considered at all.} is $\Theta(\log n)$ while the upper gap is $\Theta(\log \log n)$. 
However, this difference seems less relevant for general graphs, 
e.g.,~for a constant degree expander $G$, the gap is $\Theta(\log n)$ while the upper gap is $\Omega(\log n/\log \log n)$. We discuss this issue further in Sections \ref{sec:related} and \ref{sec:lb}.

\paragraph{Limitations of the greedy strategy.} While almost all previous results on the 2-choice model consider
the greedy strategy, 
it turns out that greedy can be quite sub-optimal 
for general graphs.
For example, for a $n$-vertex cycle $G$ the conjectured gap and upper gap under greedy are $\Theta(\sqrt{n})$ \cite{ANS20}\footnote{The authors consider a model where after allocating the ball to a vertices of the requested edge, the load on the two vertices is averaged. In this model, which intuitively should have a lower gap than greedy, they show that the typical gap is $\Omega(\sqrt{n})$.}. See also our simulation results in Figure~\ref{fig:Greedy Gap} below.
The best known upper bound for a cycle is $O(n \log n)$ by the result of \cite{PTW15} (as the expansion $\beta = O(1/n)$). 


Analyzing the gap for greedy, even on simple graphs like cycles, seems surprisingly hard and its study has led to several intriguing conjectures in asymmetric stochastic dynamics that seem beyond current techniques.
On the cycle, the loads are conjectured to converge after proper scaling to a Brownian motion.
For more general graphs 
the load fluctuations under greedy are conjectured to be similar to the fluctuations in classical statistical mechanics models.\footnote{In particular Peres (in private communication) suggested that, up to a $\log n$ factor the gap should be the same as that of a Gaussian free field on $G$. This suggests that even for certain non-expander graphs, such as high-dimensional balanced grids, the greedy algorithm should have polylogarithmic gaps.}
\begin{figure}[h!]
    \centering
    \includegraphics[scale=0.5]{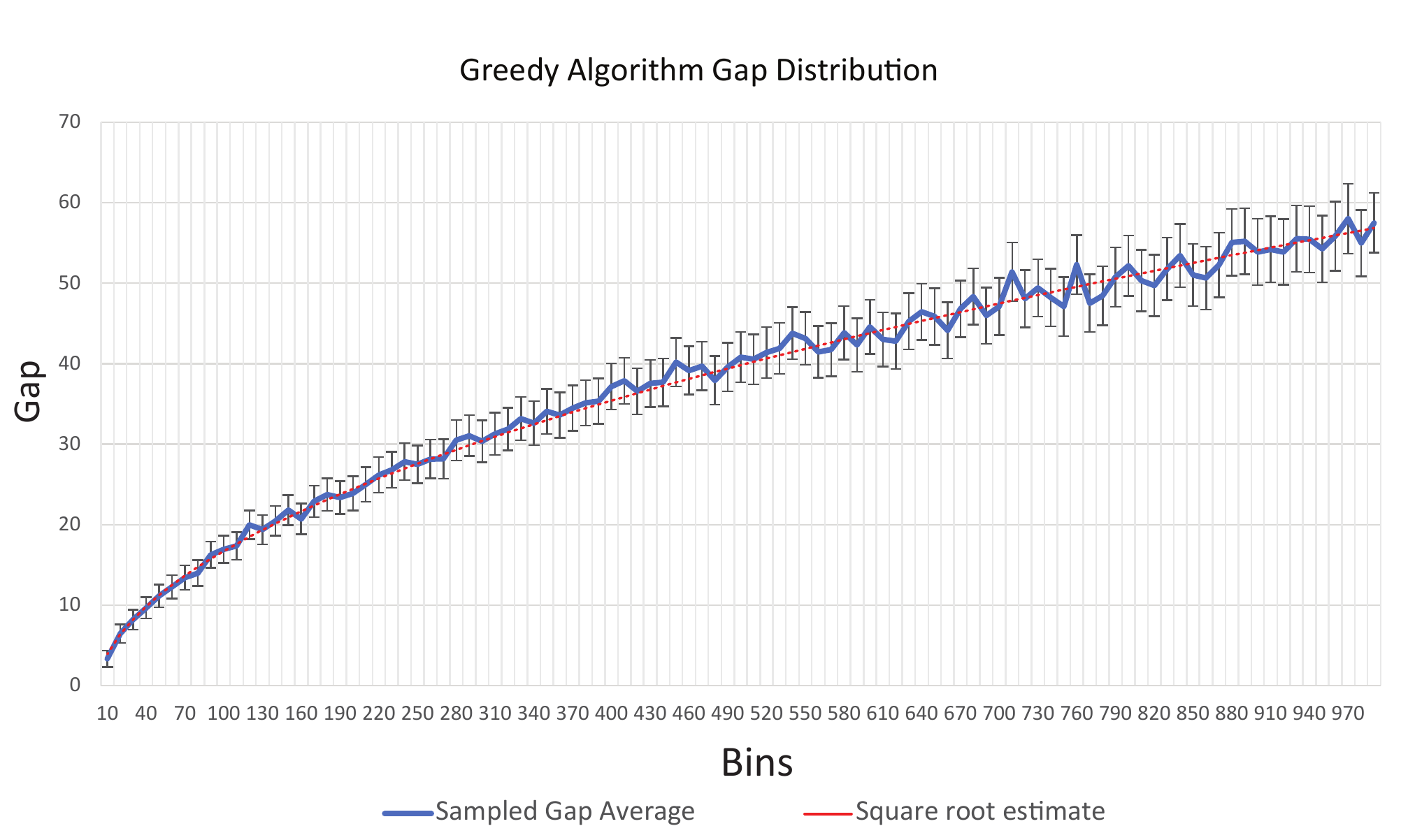}
    \caption{The gap for greedy algorithm on cycles of sizes 10 to 1000, averaged over 84 runs of $10^9$ balls. The dotted line is the function $f(x)=1.85\sqrt{x}-1$, error margins for 95\% confidence are provided. The graph clearly shows the polynomial growth of the gap.
    }
    \label{fig:Greedy Gap}
\end{figure}
 
\textbf{Beyond greedy strategy.}
An advantage of the greedy strategy is that it is only uses local information: all we need to know in order to allocate is the load of the two suggested bins. However, in many applications 
the algorithm may be able to gather more  global information to make better decisions, even though the task (ball) itself has only a few available options (bins). 
For example, in some situations transferring the task may require heavy overhead, complicated real world shipment, or overcoming physical restrictions, while the information could propagate in the network at much lower cost.
Hence, a natural question is whether there are other allocation strategies which are superior and achieve substantially better gaps than greedy on more general graphs. Ideally we would also like that these strategies be simple to implement and be realizable in a distributed setting with low communication overhead.

\subsection{Results}
\label{sec:results}

Our main result is an allocation strategy that achieves the best possible gap, up to polylogarithmic factors, for the graphical process on any graph $G$. 
More formally, we show the following. 
\begin{thm}\label{thm:main}
Let $G=(V,E)$ be any $k$ edge-connected, $d$-regular\footnote{The regularity assumption on $G$ is standard and ensures uniform expected load on all bins under a random strategy. The results extend to irregular graphs in a natural way where one measures the gap with loads normalized suitably by the degree.
} graph on $n$ vertices. There is an allocation strategy for the graphical process on $G$, that guarantees for any time $t\in\N$,  \[\gap_G(t) = O((d/k) \log^4 n \log \log n)\]
with probability at least $1-1/\text{poly}(n)$. Here $\gap_G(t)$ is the maximum difference in vertex loads at time $t$. 
\end{thm}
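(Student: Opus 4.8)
The plan is to reduce the allocation problem to a flow-design problem on R\"acke's cut-based decomposition tree of $G$, and then run a potential-function argument on the tree. First, I would recall R\"acke's theorem: for any $d$-regular graph $G$ there is a laminar family (a decomposition tree $\mathcal{T}$) of vertex subsets, whose leaves are the singletons, such that routing along $\mathcal{T}$ incurs congestion $O(\polylog n)$ relative to the optimal congestion in $G$. The key structural idea is that an allocation strategy on $G$ naturally induces a way to move ``load imbalance'' up and down this tree: when an edge $e=(u,v)$ is sampled, $u$ and $v$ lie in various tree nodes, and the decision of where to place the ball is equivalent to a choice of how to adjust the imbalances of the tree nodes separating $u$ from $v$. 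I would define, for each internal node of $\mathcal{T}$, an \emph{orthogonal flow} — a fixed distribution over which ancestor-node's imbalance to correct — chosen so that the expected drift of a suitable weighted potential (a sum over tree nodes of a convex function, e.g.\ $\cosh$ or an exponential, of the node's normalized load deviation) is strongly negative whenever the gap is large.

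The second step is the allocation rule itself. Given the sampled edge $(u,v)$, let $S$ be the (unique, by laminarity) minimal tree node containing both $u$ and $v$; the root-to-$u$ and root-to-$v$ paths in $\mathcal{T}$ diverge at $S$. The algorithm compares the current average loads of the two children of $S$ on the $u$-side versus the $v$-side (more precisely, a telescoping comparison along the two paths), and places the ball on the endpoint that decreases the more overloaded side. This is exactly the ``compare average load on fixed vertex sets'' rule advertised in the abstract; implementing it costs $O(\log n)$ per ball since $\mathcal{T}$ has depth $O(\log n)$ and one just walks the two root-to-leaf paths, which are fixed in advance. Because the tree node capacities satisfy the $k$-edge-connectivity bound ($O(d/k)$ congestion to route a unit across any tree cut), the induced drift in the potential per step will be at least proportional to (gap)$/(n \cdot (d/k)\polylog n)$ when the gap exceeds the target $O((d/k)\log^4 n\log\log n)$.

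The third step is the concentration argument. I would track the exponential potential $\Phi_t=\sum_{\text{nodes }S} w_S\bigl(e^{\alpha \Delta_S(t)}+e^{-\alpha\Delta_S(t)}\bigr)$ for a suitable smoothing parameter $\alpha=\Theta(k/(d\,\polylog n))$ and node weights $w_S$ reflecting $\mathcal{T}$'s structure, and show $\E[\Phi_{t+1}\mid \mathcal F_t]\le (1-c\alpha^2/n)\Phi_t + (\text{poly}(n))$, so that $\Phi_t$ stays $\poly(n)$ in expectation, hence w.h.p.\ by Markov plus a union bound over a $\poly(n)$-net of times. Translating a bound on $\Phi_t$ back to a bound on individual vertex loads: a vertex load deviation of $g$ forces some chain of tree nodes along its root path to absorb $g$ total deviation, which makes $\Phi_t$ at least $e^{\Omega(\alpha g/\log n)}$, giving $g=O((\log n/\alpha)\log\poly(n))=O((d/k)\polylog n)$; chasing the exact logarithmic factors yields $\log^4 n\log\log n$.

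The main obstacle I anticipate is the drift computation at the tree level: one must show that the \emph{adversarial worst case} over which edge is sampled still yields negative drift, i.e.\ that the orthogonal flows on $\mathcal{T}$ can be chosen so that \emph{every} edge of $G$, when it fires, moves the potential in the right direction on average. This is where R\"acke's decomposition is essential — the flow supported on $\mathcal{T}$ must be feasible in $G$ with low congestion, so that each graph edge is ``responsible'' for only a small, bounded share of each tree cut, and the $d/k$ factor is exactly the price of making every tree cut routable through $G$. Getting the quantitative trade-off between the congestion of these orthogonal flows, the smoothing parameter $\alpha$, and the polylog losses in R\"acke's theorem to combine into the stated bound is the technical heart of the argument; the rest is a (by now fairly standard, cf.\ \cite{PTW15,TW14}) exponential-potential drift analysis.
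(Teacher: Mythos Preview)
Your proposal captures the high-level architecture correctly --- R\"acke tree, orthogonal flows, exponential/$\cosh$ potential --- but the concrete allocation rule you describe and the way you plan to analyze it both diverge from the paper in a way that leaves a real gap.

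You propose: given the sampled edge $(u,v)$, look at the LCA $S$ of $u$ and $v$ in $\mathcal{T}$ and compare the average loads of the two children of $S$. This is \emph{not} what the paper does, and the LCA rule alone does not yield orthogonality. The paper's allocation rule is: for each edge $e=(x,y)$, sample an internal node $i$ according to a \emph{precomputed} distribution $p_e(i)=|g_i(x,y)|$, where $g_i$ is the flow obtained by routing, via the R\"acke flow templates, a specific ``Haar-like'' demand $\kappa_i(u,v)=\beta k/(|S_{\ell(i)}||S_{r(i)}|)$ between every $u\in S_{\ell(i)}$ and $v\in S_{r(i)}$. The sampled $i$ is typically \emph{not} the LCA of $x$ and $y$; indeed $p_e(i)$ is supported on all $O(\log n)$ ancestors of that LCA, because the R\"acke routing of commodity $i$ may send flow along edges lying entirely inside $S_{\ell(i)}$ or $S_{r(i)}$. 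This is precisely what buys orthogonality: the resulting per-vertex drift $d_i(\cdot)$ is constant on $S_{\ell(i)}$ and on $S_{r(i)}$ and zero outside $S_i$, so that for every $j\neq i$ one has $d_i(S_{\ell(j)})/|S_{\ell(j)}| - d_i(S_{r(j)})/|S_{r(j)}| = 0$ (the paper's Lemma~3.2). Your LCA rule does not produce constant drift on $S_{\ell(i)}$ --- it pushes load only to the particular endpoint $u$, not uniformly over $S_{\ell(i)}$ --- so the drifts at different tree nodes remain coupled.

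Because of this, your potential analysis is also harder than necessary. You set up a single global potential $\Phi_t=\sum_S w_S \cosh(\alpha\Delta_S)$ and hope for a contraction inequality; the paper instead exploits the orthogonality to \emph{decouple completely}: each sibling pair $(S_{\ell(i)},S_{r(i)})$ becomes an \emph{independent} two-bin process with a clean drift of magnitude $\beta k/(m\,q(S_i))$ toward balance, and one runs a separate $\cosh$ potential for each $i$ (the paper's Lemma~3.9). The vertex gap is then a telescoping sum of $O(\log n)$ sibling gaps, each bounded by $O((d/k)\alpha_G\log n)$ with high probability. The ``main obstacle'' you flag --- making every graph edge contribute correctly --- is dissolved by this construction rather than by a worst-case drift computation: the R\"acke templates guarantee $\sum_i p_e(i)\le 1$ (feasibility), and orthogonality guarantees no cross-interference. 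Without the Haar-style demands routed through the R\"acke templates, you would have to actually carry out the coupled global-potential argument you sketch, and it is not clear that it closes.
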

This is also the best possible 
gap for every graph $G$ up to $\text{polylog}(n)$ factors.
\begin{thm}
\label{thm:lb}
For any $d$-regular graph $G$ and for any allocation strategy for $G$, for any time $t$, with constant probability, $gap_G(t) = \Omega(d/k + \log n)$, where $k$ is the edge-connectivity of $G$.
\end{thm}

In particular Theorem \ref{thm:main} gives an allocation strategy with gap $\text{polylog}(n)$ for cycles and grids and any arbitrary connected bounded degree graph.
More  generally, it implies polylogarithmic gap for any  graph $G$ with connectivity $k = \Omega(d/\text{polylog}(n))$, without any requirements on the expansion of $G$.
 
For a graph $G$ the bound the gap in Theorem \ref{thm:main} is actually $O(\alpha_G\, (d/k) \log^2 n )$ where $\alpha_G$ is the congestion ratio for oblivious routing on $G$ based on a R\"{a}cke decomposition tree \cite{R02}\footnote{While better bounds are known for congestion in oblivious routing  \cite{AzarCFKR04, Racke08, AndersenFeige} based on a convex combination of trees, our method requires a single 
R\"acke tree as the demands for our flows depend on the tree itself.}.
It is known that \linebreak $\alpha_G = O(\log^2 n \log \log n)$ for general graphs $G$ \cite{HHR03} and better bounds are also known for specific graphs, e.g.,~$\alpha_G=O(\log n)$ for a cycle.

\paragraph{The Algorithm.}
The algorithm consists of two parts: a preprocessing stage and an online allocation stage. 
 The allocation strategy stays fixed throughout, is quite simple to implement and incurs only $O(\log n)$ worst case running time per allocation. It
 can be viewed as a more {\em global} version of the greedy strategy, where given an edge $e=(u,v)$, instead of comparing the loads on $u$ and $v$, it compares the average load on two sets selected at random from a fixed small collection and allocates the ball to $u$ or $v$ according to the result.

The preprocessing stage computes (i) a binary hierarchical decomposition of the vertices of $G$ and (ii)
for each edge $e$ a distribution $P_e$ over pairs of sibling sets $(S,S')$ in the decomposition.
For each $e$, the support of $P_e$ is only $O(\log n)$.
The sets $S$ and the distributions $P_e$ 
are then fixed and do not change over time.

In the allocation stage,
upon the request for random edge $e=(u,v)$,  a pair of sets $(S,S')$ is sampled according to $P_e$ and the ball is assigned to $u$ or $v$ depending on whether $S$ or $S'$ has larger average load. 

\paragraph{Complexity.} The allocation strategy is very efficient to implement and requires only $O(\log n)$ time per allocation step. 
In fact, this can be further reduced to $O(1)$ amortized updates per allocation step.
The total space used is $O(m \log n)$. 

The allocation strategy is also quite robust and does not require the exact knowledge of load on the sets $S$ in the decomposition.
We use this to give a {\em distributed} implementation the allocation strategy that requires only $O(1)$ amortized messages per allocation, with $O(\log n)$ bits per message.

These implementation details are discussed in Section \ref{s:implement}.


\begin{remark} Finally we remark that our results do not require the full power of two choices. In the $1+\beta$ graphical choice model (see also Section~\ref{sec:related} below), at each step an edge is given only with probability $\beta$, and with probability $1-\beta$ there is no choice and the ball is allocated to a random bin. Our bound on the gap extends directly to this model with factor $O(1/\beta)$ loss.
\end{remark}

\subsection{Preliminaries, overview and techniques}\label{sec:overview}
We now give a detailed overview of the ideas and the algorithm. We begin by describing the relevant notation.

\paragraph{Notation.}
A graphical process is specified by some fixed $d$-regular graph $G=(V,E)$. Henceforth we denote $n=|V|$ and $m=|E|. $\footnote{typically in balls-into-bins literature, $m$ is used for the number of balls, but we will think of the allocation process as indefinite, using $t$ for the index of the allocated ball.} 
Let $N(u)$ denote the neighborhood of the vertex $u$.
At each time $t=1,2,\ldots,$ an  edge $e=e_t=(u,v) \in E$ is chosen (\emph{requested}) uniformly at random, and a ball must be assigned to one of its endpoints: $u$ or $v$. We refer to the vertices of $G$ as bins, and say that vertex $u$ has load $\ell$ at time $t$, if $\ell$ balls have been assigned to it after $t$ steps.
Let $L^t:V \rightarrow \N$ denote the load vector at time $t$. 
We assume $L^0(u)=0$ for all $u \in V$ so that the total load $\|L^t\|_1 =t$ for each $t\in \N$ and the average vertex load is $t/n$. 

An allocation strategy, upon the request $e=(u,v)$, decides whether to assign the ball to $u$ or $v$ based on the current loads (and possibly on the entire history so far). The goal of the strategy is to regulate the process in order to keep  the load gap small, where the {\em gap} at time $t$ is defined as 
\[ \gap(t)=\gap_G(t) = \max_u L^t(u)  - \min_u L^t(u).\]
As $ \max_u | L^t(u) - t/n| \leq \gap_G(t) \leq  2 \max_u |L^t(u)-t/n|$, we sometimes work with $\max_u | L^t(u) - t/n|$, the maximum deviation from the average load.
For a subset $S \subset V$, 
We use $L^t(S) := \sum_{u \in S} L^t(u)$ and $\overline{L}^t(S) := L^t(S)/|S|$ to
 the total and average load on vertices in $S$ at time $t$, respectively. 


\paragraph{Edge biases, induced drift and flows.} 
Upon the arrival of an edge $e=(u,v)$, by choosing whether to allocate the ball to $u$ or $v$, an allocation strategy can {\em bias} the load toward $u$ or $v$.
Let $b_t(u,v)$ denote the bias towards $v$ for an edge $e=(u,v)$ at time $t$. So $b_t(u,v)= 2p-1 \in [-1,1]$ if the ball is allocated to $v$ with probability $p$ upon arrival of $e$. Thus, $b_t(u,v)=-b_t(v,u)$.

At any time $t$, the behavior of an allocation strategy defines the biases $b_t(u,v)\in[-1,1]$ for each edge $(u,v) \in E$, and, conversely, any valid biases at time $t$ define an allocation strategy.
Now let us consider the effect of the biases at a vertex. At each time step, the average load over all bins increases by $1/n$.
 Let \[q_t(v) =  \frac{1}{m} \sum_{u \in N(v)} \frac{(1+ b_t(u,v))}{2}  =  \frac{1}{n} + \sum_{u \in N(v)} \frac{b_t(u,v)}{2m}\] 
 denote the probability of assigning a ball to $v$ at time $t$.
 Let us call $d_t(v)=2m( q_t(v)-1/n) = \sum_{u\in N(v) } b_t(u,v)$ the \emph{drift} induced by the strategy at $v$ at time $t$. 
 Observe that $d_t(v) \in [-d,d]$ and $q_t(v) \in [0,2/n]$.
 
 A simple but useful observation is that if we view the biases $b_t(u,v)$ as a flow of $b_t(u,v)$ from $u$ to $v$, then the drifts correspond to the  total flow entering a vertex.
 Conversely, for any desired vertex drifts if there is a corresponding feasible flow with capacity at most $1$ per edge, viewing the flow as edge biases gives an allocation strategy achieving this drift vector.
 


\paragraph{A first attempt.}
To control the load gap,
we would like to set the edge biases at each time $t$, such that the underloaded vertices are more likely to receive a ball (say $q_t(v) \geq (1+\delta)/n$) while the overloaded vertices are less likely ($q_t(v) \leq  (1-\delta)/n$) to receive a ball. 
By a simple Markov chain argument, this ensures an $O(\delta^{-1} \log n)$ gap w.h.p.~at any time.
This is the same as assigning drift $\geq \delta d$ (resp.~$\leq -\delta d$) to underloaded (resp.~overloaded) vertices.


However, such a drift is impossible  to realize as a feasible flow, unless $G$ has large expansion. In particular, consider a cycle where the the vertices of the left half of are overloaded and those of the right half are underloaded. As the total flow that can cross the cut separating the two halves is at most $2$, the drifts can only have absolute value $O(1/n)$ on average, resulting in an $\Omega(n \log n)$ load deviation.
Roughly, this is the reason why the expansion condition on $G$ is required in \cite{PTW15}.

Additionally, we would also like the allocation strategy to be simple to implement, and avoid solving a flow problem at each time step, depending on which vertices are overloaded or underloaded.


We now describe our method for geting around both of these issues. Roughly speaking, we will create drifts for sets of vertices instead of drifts for individual vertices.
\paragraph{Hierarchical decomposition and orthogonal drifts.}
Let $G=(V,E)$ be the graph underlying the process.
Consider some {\em binary} hierarchical decomposition of $G$, represented by a binary tree $T = (V_T,E_T)$, with internal nodes $i \in V_T$ corresponding to subsets $S_i \subset V$ and leaves in $V_T$ to singleton sets, one for each vertex of $G$. We will identify the leaves of $T$ with the vertices of $G$. 
We use {\em node} to refer to the internal vertices of $T$ and {\em vertex} to refer to the vertices of $G$. 

Let $r$ be the root of $T$. For a leaf $u$, consider the unique path $u=i_0, i_1, \ldots, i_h=r$ from $u$ to $r$ in $T$, so that $\{u\}=S_{i_0} \subset S_{i_1} \subset \ldots \subset  S_{i_h}= V$. 
As $\overline{L}^t(S_r) = \overline{L}^t(V)= t/n$ and $\overline{L}^t(S_u)=L^t(u)$, the load deviation for $u$ at time $t$ can be upper bounded by  \[|L^t(u)-t/n |= \left|\sum_{j=1}^h \big(\overline{L}^t(S_{i_{j-1}})- \overline{L}^t(S_{i_j})\big) \right|\leq    \sum_{j=1}^h \left|\overline{L}^t(S_{i_{j-1}})- \overline{L}^t(S_{i_j})\right|,\]
and hence to control the load deviation for each $u \in V$, up to an $h=O(\log n)$ factor, it suffices to control the gap $|\overline{L}^t(S_{i_{j-1}})- \overline{L}^t(S_{i_j})|$ for each parent-child node pair.

Fix a non-leaf node $i\in V_T$, and let $\ell(i), r(i)$ denote its left and right children. 
A simple computation shows that the parent-child gap can be bounded in terms of the \emph{sibling gap} $|\overline{L}(S_{\ell(i)}) -  \overline{L}(S_{r(i)})|$.
To control this sibling gap for each pair of siblings $\ell(i)$ and $r(i)$ in the tree, we would like to create a (dynamic) \emph{balancing drift} between $S_{\ell(i)}$ and $S_{r(i)}$, that  at any time is directed towards the sibling with lower load. 

To avoid these different drifts from interfering,
a key idea is to construct the drifts to be \emph{orthogonal}, which ensures that the drift associated with $S_{\ell(i)}$ and $S_{r(i)}$ yields zero net drift on every other set $S_j\in T$. To do so we set the balancing drift to be the Haar measure associated with node $i$ in the  
hierarchical Haar decomposition basis corresponding to $T$. Simply put, we divide the drift on each of $S_{\ell(i)},S_{r(i)}$ evenly among all vertices of the corresponding set. 
An illustration of a system of orthogonal drifts is provided in Figure \ref{fig:edge flow}.

\paragraph{Small relative bias suffices.} The balancing drifts associated with $S_{\ell(i)}$ and $S_{r(i)}$ should also be sufficiently strong in order to get good bounds on $|\overline{L}(S_{\ell(i)}) - \overline{L}(S_{r(i)})|$, but on the other hand, the resulting flow (edge biases) should be  feasible. 
Suppose $k=\Omega(d)$ for this discussion.
A second key observation is that to achieve the polylogarithmic gap in Theorem \ref{thm:main}, it suffices to have the {\em total} drift between every two siblings 
$S_{\ell(i)}$ and $S_{r(i)}$ be $\Omega(d/\text{polylog}(n))$, i.e., the drifts for vertices in $S_{\ell(i)}, S_{r(i)}$ need only be $\Omega(d/|S_{\ell(i)}|\text{polylog}(n))$ and $\Omega(d/|S_{r(i)}|\text{polylog}(n))$ on average.
It may seem counter-intuitive at first, that the total drift does not need to scale with the sizes of $S_{\ell(i)}$ and $S_{r(i)}$. The reason for this is that what matters is the concentration of $|\overline{L}(S_{\ell(i)}) - \overline{L}(S_{r(i)})|$ which is also normalized by the sizes of the two sets.

\begin{figure}[ht!]
    \centering
    \includegraphics[scale=0.38]{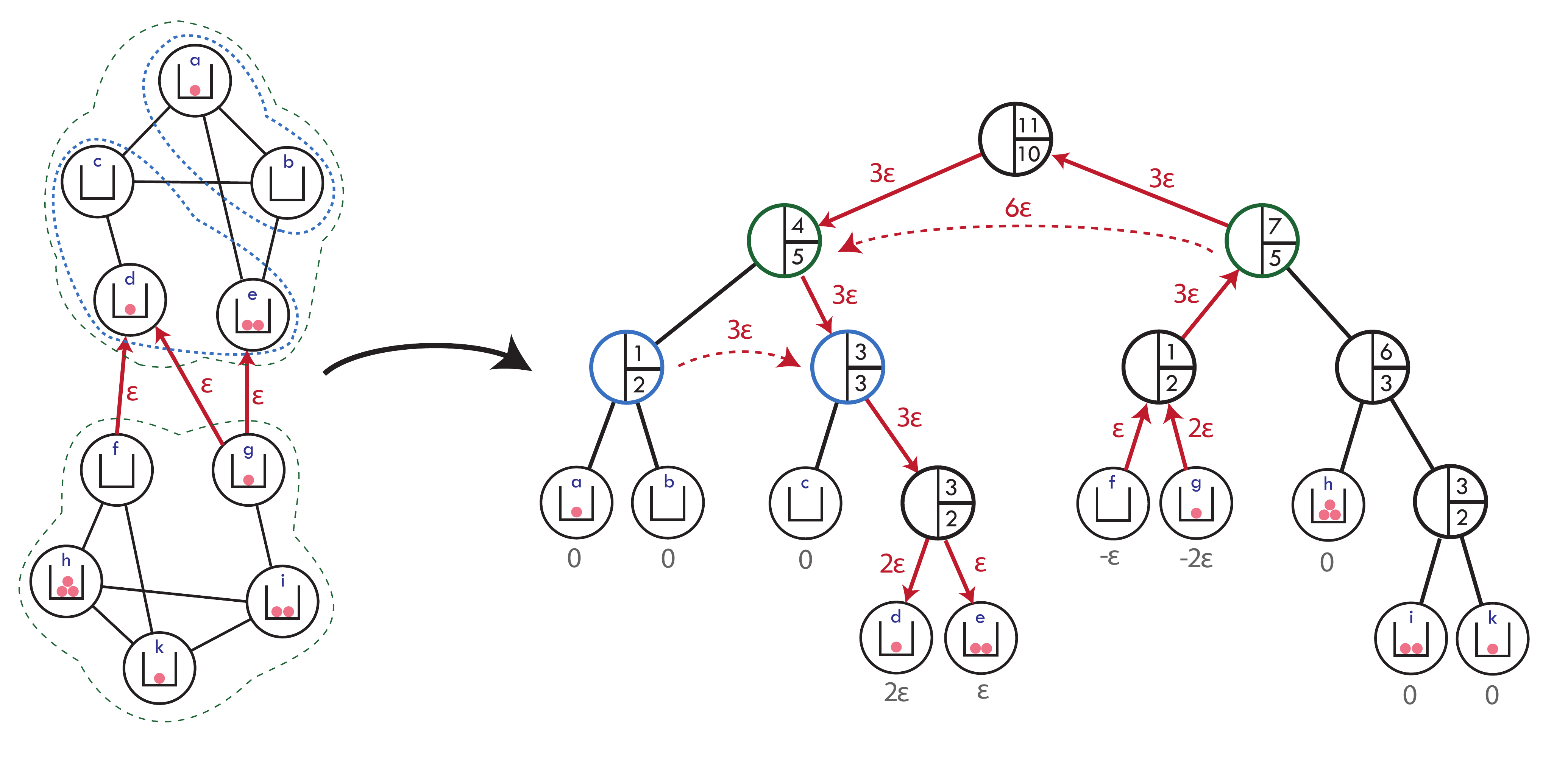}\\
    \putat{-140}{29}{\textcolor{darkgreen}{$S_2$}}
    \putat{-135}{208}{\textcolor{darkgreen}{$S_1$}}
    \putat{-3}{161}{\textcolor{darkgreen}{\scalebox{0.7}{$S_1$}}}
    \putat{131}{161}{\textcolor{darkgreen}{\scalebox{0.7}{$S_2$}}}
    \putat{-167}{188}{\textcolor{orange}{\scalebox{0.7}{$S_{11}$}}}
    \putat{-178}{159}{\textcolor{orange}{\scalebox{0.7}{$S_{12}$}}}
    \putat{-63}{124}{\textcolor{orange}{\scalebox{0.5}{$S_{11}$}}}
    \putat{-2}{124}{\textcolor{orange}{\scalebox{0.5}{$S_{12}$}}}\\[-17pt]
    \includegraphics[scale=0.38]{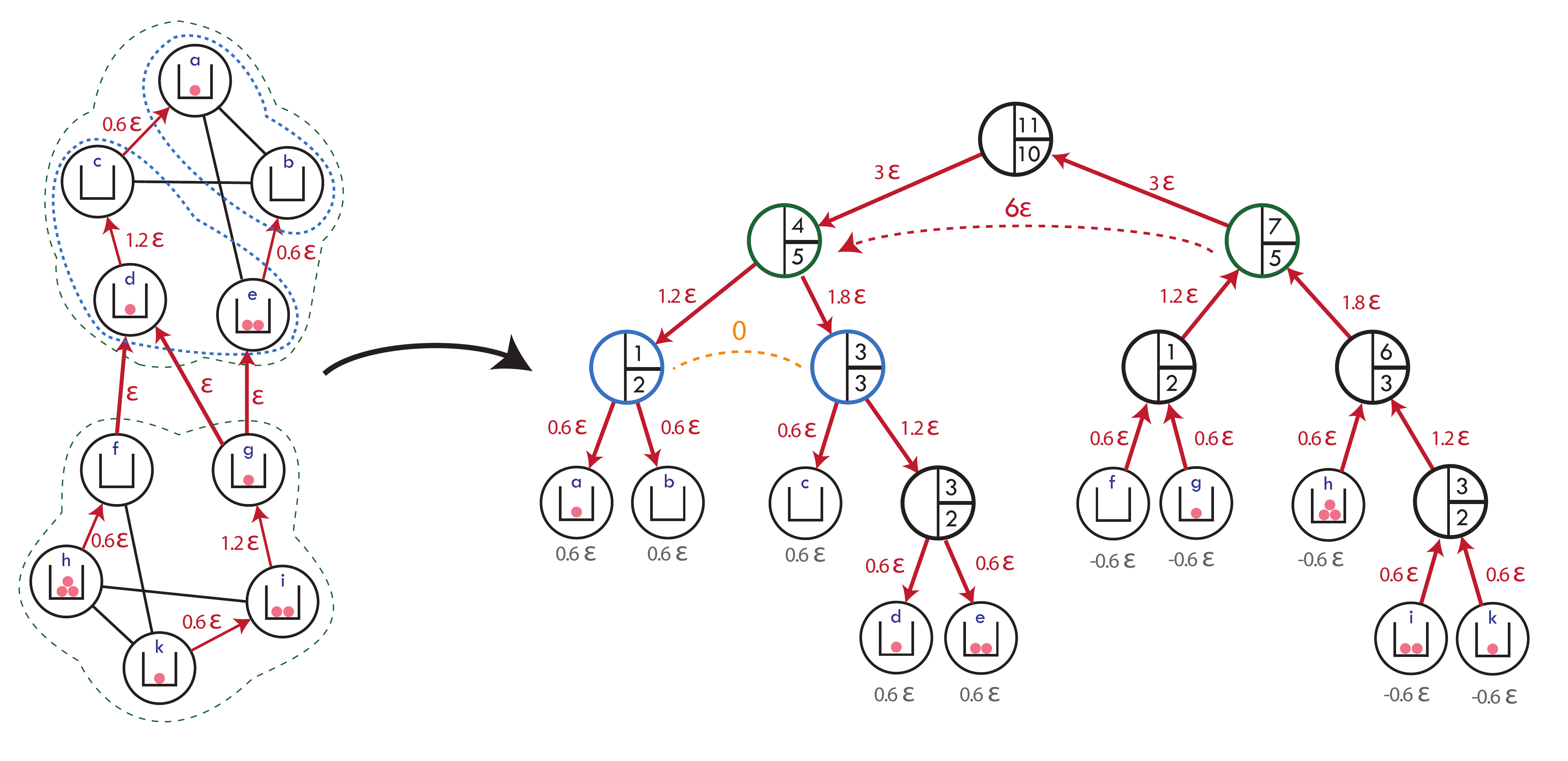}\\
    \putat{-140}{29}{\textcolor{darkgreen}{$S_2$}}
    \putat{-135}{208}{\textcolor{darkgreen}{$S_1$}}
    \putat{-3}{161}{\textcolor{darkgreen}{\scalebox{0.7}{$S_1$}}}
    \putat{131}{161}{\textcolor{darkgreen}{\scalebox{0.7}{$S_2$}}}
    \putat{-167}{188}{\textcolor{orange}{\scalebox{0.7}{$S_{11}$}}}
    \putat{-178}{159}{\textcolor{orange}{\scalebox{0.7}{$S_{12}$}}}
    \putat{-63}{124}{\textcolor{orange}{\scalebox{0.5}{$S_{11}$}}}
    \putat{-2}{124}{\textcolor{orange}{\scalebox{0.5}{$S_{12}$}}}\\[-10pt]  
    
    \caption{The figure shows a $10$ vertex graph $G$ with vertices depicted by bins and the dots in a bin indicating the current load. The weights on the edges specify the edge biases (illustrated by red). The graphs and the vertex loads on the top and the bottom {\em left} are identical, but the edge biases (illustrated by red arrows) are different.
    As $S_2$ has higher average load than $S_1$ ($\overline{L}_{S_1}=7/5$ and $\overline{L}_{S_2}=4/5$), on both {\em top} and {\em bottom}, we create $3 \epsilon$ units of balancing drift from $S_2 $ into $S_1$ (to balance the gap between $S_2$ and $S_1$).
    The drift on {\em bottom} is orthogonal, while the one on {\em top} is not.\\
    On the {\em right} side, we see the resulting drifts on the sets in $G$'s decomposition tree (both trees are identical). 
    Inside a node we depict the average load (as balls divided by bins) for the corresponding set. The drifts on {\em top} create an undesired relative drift from the underloaded set $S_{11}$ to the overloaded set $S_{12}$ (the horizontal dashed line).
    On the other hand, on bottom right, 
    there is only a drift from $S_2$ to $S_1$ and the relative drift between any other pair of siblings is $0$. E.g., the drift entering $S_{11}$ and $S_{12} $ is exactly in proportion to their sizes $2$ and $3$. So the drift does not interfere with balancing $|\overline{L}_{S_{11}} - \overline{L}_{S_{12}}|$.  The same holds for all other pairs of siblings except $S_1$ and $S_2$. 
    \label{fig:edge flow}}
\end{figure}

\paragraph{Using R\"acke Trees to realize the drift.} 
 
To realize the above system of drifts by feasible flows, the final piece is R\"acke trees and the multicommodity flow templates associated with them.

A  R\"acke tree $R$ of a graph $G=(V,E)$ corresponds to a hierarchical decomposition $R$ together with a collection of flow templates $\{f_{uv}\}_{u,v\in V}$ for each pair of vertices $u,v$. Each internal vertex $i\in R$ corresponds to a set $S_i\subset V$, where the leaves of correspond to the singleton vertices of $G$. The capacity of a parent-child edge $(i,j)\in R$ is the capacity of the corresponding cut $(S_j,S_j^c)$ in $G$. The flow templates $\{f_{uv}\}_{u,v\in V}$ specify how to route one unit of flow from $u$ to $v$ in $G$ and have the following remarkable property: let $D=\{\kappa(u,v)\}_{u,v \in V}$ be {\em any} multicommodity demand vector where $\kappa(u,v)$ units of commodity $k_{u,v}$ to be routed from $u$ to $v$. If $D$ can be routed feasibly on $R$ (along the unique $u$ to $v$ path on the tree) subject to the edge capacities in $R$, then the corresponding flow in $G$ produced using the flow templates has load at most $\alpha_G c_e$ on any edge $e \in E$, where $c_e$ is the capacity of $e$. We call $\alpha_G$ the \emph{congestion ratio} of $G$ with respect to $R$.


The best bound on $\alpha_G$ is $O(\log^2 n \log \log n)$ \cite{HHR03} for general graphs, and  the corresponding $R$ and flow templates $f_{uv}$ can be constructed in polynomial time. Almost linear time constructions with slightly worse $\alpha_G$ are also known \cite{RST14}.
The tree $R$ is also  \emph{well-balanced} so that $|R_j|/|R_i|\le \frac34$ for each child $R_j$ of $R_i$, and thus it has depth $O(\log n)$. 



In the preprocessing stage, given $G$ we first construct the R\"acke tree $R$ and modify it into a binary tree $T$.
For each internal node $i$ of $T$, we define the orthogonal drift between the corresponding sets $S_{\ell(i)}$ and $S_{r(i)}$ as described above. Together with the flow templates, which are fixed, this defines {\em fixed} probabilities $p_e(i)$ that each edge $e$ will use to bias 
according to the sign of $\overline{L}(S_{\ell(i)}) - \overline{L}(S_{r(i)})$.
As the drifts are orthogonal and the sets $S_i$ and the flow templates are fixed, this allows us to decouple the drifts for different $i$.

The resulting allocation strategy after preprocessing is extremely simple to describe and efficient to implement and requires only $O(\log n)$ running time per allocation request.

\subsection{Related work and models}
\label{sec:related}
The literature on ball-into-bins processes is extensive and 
it is impossible to cover even a small portion of the developments and applications.
The first appearance of a 2-choice type result was in \cite{KLM96} in the context of online hashing. Following the result of Azar et al.,~\cite{ABKU94} several variations of the model have been studied.
 Variations include models in which balls are eliminated over time \cite{Mthesis91,CFMMSU98} -- either by age or at random and parallel allocation of the balls with limited communication \cite{Stemann96,ACM98}. Many of the earlier results are surveyed in Mitzenmacher's Thesis \cite{Mthesis91} 
and in 
his survey with Richa and Sitaraman \cite{MRS01}. A more recent survey is due to Wieder \cite{Wieder}.

In his thesis  Mitzenmacher suggested the model of $1+\beta$ choice for $\beta<1$, where the algorithm is given two choices with probability $\beta$ and only one choice with probability $1-\beta$. His motivation for introducing this 
model stems from a problem in queuing theory. V\"{o}cking \cite{Vocking03} show that for $d$-choice, non-uniform choices can improve over the greedy algorithm of \cite{ABKU94}, resulting in maximum load of $\Theta(\log\log (n)/d)$ (cf.,~$\Theta(\log\log (n)/\log d)$). 

The heavily loaded case of the two-choice problem was first analyzed in  \cite{BCSV06} (see a neat and short proof by Talwar and Wieder \cite{TW14}). In \cite{PTW15}, Peres, Talwar and Wieder considered the $1+\beta$ choice model for complete graphs and showed that there both gap and upper gap are $\Theta((\log n)/\beta)$. The drift and potential methods introduced in this work 
allowed the authors to relate this result to the graphical case for expanders in \cite{PTW15} and inspired methods used here as well.

\section{Algorithm}
Let $G=(V,E)$ be the underlying graph for the graphical process and let $k = k(G)$ denote the edge-connectivity (the size of the minimum cut) of  $G$.
The algorithm consists of two parts: a preprocessing stage and an online allocation stage. The preprocessing stage takes $G$ as input and computes a binary decomposition tree $T$ of $G$, together with some information on how an edge should bias its allocation based on the imbalance in average load on certain siblings in $T$. 
This information is used to realize a simple online allocation strategy.

Let $I_T$ denote the set of internal nodes of $T$. 
For an internal node $i \in i_T$, let $S_{\ell(i)}$ and $S_{r(i)}$ denote the subsets of $V$ corresponding to the left (resp.~right) child $\ell(i)$ (resp.~$r(i)$) of $i$.
The goal of each node $i\in I_T$ is to control $|\overline{L}^t(S_{\ell(i)})-\overline{L}^t(S_{r(i)})|$, the difference in average load on $S_{\ell(i)}$ and $S_{r(i)}$ at the end of time $t$.

The preprocessing stage also outputs the following (fixed) vectors for each edge $e=(x,y) \in E$. \begin{enumerate}
    \item A probability vector $p_e:I_T\cup \{\emptyset\}\to [0,1]$ specifying the probability that $e$ will determine the allocation based on comparing $\overline L(S_{\ell(i)})$ and $\overline L(S_{r(i)})$.
    \item  A direction vector $\sigma_e: I_T \rightarrow  \pm1$ indicating to which endpoint of $e$ the ball should be assigned if $\overline L(S_{\ell(i)})-\overline L(S_{r(i)})$ is positive.
\end{enumerate}

Before we discuss the preprocessing steps, we first describe the allocation strategy.

\subsection{An online allocation strategy}\label{sec:allocstat}


At each time $t+1$ for $t=0,1,2,\dots,$ the allocation strategy does the following. 

\paragraph{Allocation.} Given a request for an edge $e=(x,y)$: 
\begin{enumerate}
    \item Select an $i \in I_{T}\cup \{\emptyset\}$, according to the precomputed distribution $p_e$.
    
    \item If $i=\emptyset$ allocate the ball randomly to either endpoint of $e$ with probability $1/2$.
    \item Otherwise, consider the quantity $\sigma_e(i)\, (\overline{L}^t(S_{\ell(i)})- \overline{L}^t(S_{r(i)})$.
    \begin{enumerate}
    \item If this is positive, allocate the ball to $y$.\footnote{The vectors $\sigma$ will satisfy $\sigma_{(x,y)}(i) = -\sigma_{(y,x)}(i)$, so the allocation in independent of whether we write $e=(x,y)$ or $e=(y,x)$.}
    \item If this is negative, allocate the ball to $x$.
    \item Otherwise if it is exactly zero,  allocate the ball to an endpoint of $e$  uniformly at random.
    \end{enumerate}
\end{enumerate}

\paragraph{Update.} Update $\overline{L}^t(S)$ for the sets $S$ containing the vertex to which the ball was assigned.

\subsection{Preprocessing}
The preprocessing consists of four stages. In the first stage we compute a R\"acke cut-based decomposition $R=(V_R,E_R,c_R)$ tree for $G$, together with the flow templates $f_{uv}$ for each  ordered pair $u,v \in V$. In the second stage we use $R$ to construct a  binary decomposition tree $T$. In the third stage we compute a balancing drift and associated flows corresponding to each node $ i \in I_T$.
Finally we use these flows to construct the vectors $p_e$ and $\sigma_e$ required to run the allocation strategy.

\subsubsection{Step 1: Construct a R\"acke decomposition.}
Given $G$, compute a R\"acke cut-based decomposition $R=(V_R,E_R,c_R)$ tree for $G$, together with the flow templates $f_{uv}$ for each ordered pair $u,v \in V$, as defined in Section \ref{sec:overview}. This is done using the algorithm of \cite{HHR03} so that congestion ratio of $G$ with respect to $R$ is $\alpha_G =O(\log^2 n \log \log n)$.

\subsubsection{Step 2: From general tree to a binary one.}\label{subs: 3.2.2}

In this step, we convert $R$ into a {\em binary} hierarchical decomposition tree $T = (V_T,E_T)$ of $G$. The nodes of $T$ will define the sets
whose average load we will balance in the allocation strategy. 

\begin{enumerate}
\item Apply the following  step repeatedly to $R$ until all non-leaf nodes have degree $2$.

\item Choose a node $i$ with $p>2$ children $j_1,\ldots,j_p$.
Let $w(h) = |S_{j_h}|/|S_i|$ for $h\in [p]$, so that $\sum_{h=1}^p w(h)=1$.

\begin{enumerate}
\item\label{item: a} If $w(h) > 1/4$ for some $h$, make $j_h$ the left child of $i$. Create a new right child $b$ with the remaining $p-1$ children other than $j_h$. The node $b$ corresponds to the set $S_i \setminus S_{j_h}$.

\item\label{item: b} Otherwise, arbitrarily partition $\{j_1,\ldots,j_h\}$ into two sets $A$ and $B$ so that $w(A),w(B) \in [1/4,3/4]$. Create two children $a$ and $b$ of $i$,
where the children of $a$ (resp.~$b$) are the nodes in $A$ (resp.~$B$). The nodes $a$ and $b$ corresponds to sets $\cup_{j \in A} S_j$ and $\cup_{j \in B} S_j$.
\end{enumerate}
\end{enumerate}
Clearly, $T$ is binary. Also $T$ and $R$  have the same set of leaves as $R$ is a contraction of $T$.

\subsubsection{Step 3: Defining drifts and computing flows.}
Let $I_T$ denote the set of internal (non-leaf) vertices of $T$.
For $i\in I_T$, let $\ell(i)$ and $r(i)$ denote its left and right children in $T$.
Let $\beta = 1/(8 \alpha_G)$ and recall that $k$ is the edge connectivity of $G$. 
For each $i\in I_T$, we create a commodity $i$, and send 
 \begin{equation}
     \label{eq:kappa}
     \kappa_{i}(u,v) =  \frac{\beta k }{|S_{\ell(i)}||S_{r(i)}|}\ind(u\in S_{\ell(i)}, v\in S_{r(i)})
 \end{equation}
units of flow in $T$ from each leaf $u \in S_{\ell(i)}$ to each leaf $v \in S_{r(i)}$.
Let $d_i(z)$ denote the amount of commodity $i$ {\em entering} vertex $z$, so that for each $u \in S_{\ell(i)}$, $v \in S_{r(i)}$ and $w \in V \setminus S_i$, we have
\begin{equation}
    \label{eq:drifts}
    d_i(u)=  -\beta k/|S_{\ell(i)}|,  \qquad d_i(v)= \beta k/|S_{r(i)}|\quad \text{ and  }\quad  d_i(w)=0.
\end{equation}
As we shall see, the allocation strategy on $G$ will precisely produce the drift vector $d_i$ or $-d_i$ for balancing $\overline{L}(S_{\ell(i)})$ and  $\overline{L}(S_{r(i)})$, where the sign depends on which one of these loads is higher.

Consider the flow templates $f_{uv}$ for each pair $u,v \in V$  in $G$, given by the oblivious routing associated with the tree $R$. 
Let $g_i$ be the flow of commodity $i$ produced in $G$ by routing the flows $\kappa_i(u,v)$ using the flow templates $f_{uv}$. That is,
for an edge $e = (x,y)$ of $G$, let
\begin{equation}
    \label{eq:gi}
    g_{i}(x,y) =  \sum_{u,v\in V} f_{uv}(x,y) \kappa_i(u,v).
\end{equation}

Note that the flow $g_{i}$ only depends on the graph $G$ (via $R, T, \{f_{uv}\}_{u,v\in V}$). In particular, it does not depend on the current load vector $L^t$ and hence is fixed over time. We use these $g_{i}(x,y)$ to define $p_e$ and $\sigma_e$.

\subsubsection{Step 4: Defining $p_e$ and $\sigma_e$.}
For each edge $e$, and for each $i\in I_T$, we set \begin{equation}
    \label{eq:pe}
    \sigma_e(i)=\sgn{g_{i}(e)}  \quad \text{ and  }  \quad p_e(i)=|g_{i}(e)|.
\end{equation}
Finally, set $p_e(\emptyset)=1-\sum_{i\in I_T}p_e(i)$.
This concludes the preprocessing\footnote{Note that $\sigma(x,y) = -\sigma(y,x)$ as $g_{i}(x,y) = - g_{i}(y,x)$ (being a flow). This does not affect the allocation strategy as writing  $e=(x,y)$ or $e=(y,x)$ flips both $\sigma_e(i)$ and the $x$ and $y$ in steps 3(a)-3(c).}.


\section{Analysis}\label{sec:anal}

In this section we show  that the preprocessing stage produces $p_e, \sigma_e$ and an allocation strategy with the desired properties and use these to establish Theorem \ref{thm:main}. To do so, we first study the drifts realized by the algorithm and then establish congestion and concentration bounds.

\subsection{Verifying the preprocessing}
We begin by showing that $T$ is well-balanced.
\begin{claim}\label{cl:depth}
\label{lem:ratio}
Let $a, b$ be two nodes in $T$ such that $a$ is an ancestor of $b$, and let $q$ be the distance from $a$ to $b$. Then $|S_b| \leq (3/4)^{\lfloor q/2 \rfloor } |S_a|$. 
In particular, $T$ has depth $O(\log n)$. 
\end{claim}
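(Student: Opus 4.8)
The plan is to analyze the two reduction rules in Step 2 (items~\ref{item: a} and~\ref{item: b}) and show that every application of a rule strictly decreases the relative size $|S_{\text{child}}|/|S_{\text{parent}}|$ by a constant factor within two levels of descent. Concretely, I would track what happens to the set sizes as we pass from a node to its grandchild in $T$, and argue that $|S_b|/|S_a| \le 3/4$ whenever $b$ is a grandchild of $a$ (distance $q=2$); the general statement then follows by iterating this bound $\lfloor q/2\rfloor$ times along the ancestor path from $a$ to $b$ (and using the trivial bound $|S_b|\le |S_a|$ to absorb a possible leftover single step when $q$ is odd).

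The heart of the argument is the case analysis for a single reduction step applied at a node $i$ with children obtained from an $R$-node of degree $p>2$. In case~\ref{item: a}, $i$ gets left child $j_h$ with $w(h)>1/4$ and a new right child $b$ with $S_b = S_i\setminus S_{j_h}$, so $|S_b|/|S_i| = 1-w(h) < 3/4$; the left child $j_h$ has $|S_{j_h}|/|S_i| = w(h)$, which could be close to $1$, but $j_h$ is an original $R$-node, and in $R$ (which is well-balanced) each child has relative size at most $3/4$ — moreover after binarization any further descent from $j_h$ is itself governed by the same rules. In case~\ref{item: b}, both new children $a$ and $b$ have relative size $w(A),w(B)\in[1/4,3/4]$, so the drop is immediate. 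The point to nail down is that a node of $T$ which is \emph{not} one of these freshly created intermediate nodes — i.e.\ an original node of $R$ — has the property that passing to any of its children in $T$ already loses a factor $\le 3/4$ (because $R$ is well-balanced with ratio $\le 3/4$, by the stated property of R\"acke trees), while a freshly created intermediate node may fail to lose a factor in one step (its ``heavy'' side in case~\ref{item: a}) but is guaranteed to lose one within the next step. Pairing up consecutive steps in both regimes yields the factor $3/4$ per two levels.

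Thus the key steps, in order, are: (i) recall from Step~2 and the R\"acke-tree property that every \emph{original} $R$-node, viewed in $T$, has all children of relative size $\le 3/4$; (ii) show that each of the two reduction rules, when it creates an intermediate node whose heavy child has relative size possibly $>3/4$, guarantees that this heavy child is an original $R$-node (in case~\ref{item: a} the heavy left child $j_h$ is literally an $R$-node; in case~\ref{item: b} no child has relative size $>3/4$ at all); (iii) conclude that along any root-to-node path, at least one out of every two consecutive edges decreases the relative size by a factor $\le 3/4$; (iv) telescope to get $|S_b|\le(3/4)^{\lfloor q/2\rfloor}|S_a|$; (v) since $|S_r|=n$ and leaves have size $1$, the depth bound $\operatorname{depth}(T)=O(\log n)$ follows by setting $|S_b|=1$, $|S_a|=n$.

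The main obstacle I anticipate is item~(ii): carefully checking that the ``bad'' (heavy) child created by rule~\ref{item: a} is always an honest $R$-node rather than another intermediate node, so that the very next descent step is controlled by the R\"acke balance property — and making sure the bookkeeping handles the repeated application of the rules correctly (a node of $R$ with many children may be split into a chain of intermediate nodes, and I must verify the ``at most one uncontrolled step in a row'' invariant survives this chain). Once that invariant is in hand, the telescoping and the $O(\log n)$ depth are routine.
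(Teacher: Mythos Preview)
Your proposal is correct and follows essentially the same route as the paper: show that along any length-$2$ path $f\to g\to h$ in $T$, at least one of the two edges shrinks the set size by a factor $\le 3/4$, then telescope. Your claims (i)--(iv) are all valid, and your worry about (ii) is resolvable exactly as you suspect (the $j$'s encountered when processing an intermediate node are always original children of some $R$-node, hence $R$-nodes themselves).

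That said, the paper's case analysis is slightly slicker and sidesteps your anticipated obstacle (ii) entirely. Rather than tracking whether heavy children are $R$-nodes, the paper cases directly on whether the \emph{middle} node $g$ lies in $R$. If $g\in R$, then your claim (i) gives $|S_h|\le(3/4)|S_g|$. If $g\notin R$, then $g$ is a freshly created node, and the key observation is that \emph{every} newly created node---the node $b$ in rule~\ref{item: a} (with relative size $1-w(h)<3/4$) and both nodes $a,b$ in rule~\ref{item: b} (with relative sizes in $[1/4,3/4]$)---has relative size at most $3/4$ of its parent. Hence $|S_g|\le(3/4)|S_f|$ directly, with no need to reason about what kind of node $g$'s heavy child might be. This makes the ``chain of intermediate nodes'' bookkeeping you flag in your last paragraph a non-issue: every intermediate-to-intermediate edge already shrinks by $3/4$.
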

\begin{proof}
Consider some path $f,g,h$ of length $2$ in $T$, where $f$ is the ancestor of $h$. It suffices to show that either $|S_h|\leq (3/4) |S_g|$ or $|S_g|\leq (3/4) |S_f|$.
If $g$ was a node in $R$, this follows by the well-balancedness of a R\"{a}cke tree for the edge $(g,h)$. Otherwise $g\notin R$, and no matter whether 
 $(f,g)$ was produced according to rule \eqref{item: a} or \eqref{item: b} in Section~\ref{subs: 3.2.2}, we have $|S_g|\le (3/4)|S_f|$.
 \end{proof}

The demand vectors $d_i$ satisfy a useful orthogonality property, as illustrated in Figure~\ref{fig:edge flow}. For $W\subset V$, let $d_i(W)=\sum_{w\in W}d_i(w)$ denote the total flow of commodity $i$ entering the set $W$.
\begin{lemma}
\label{lem:orthogonal}
 For any $i,j \in I_T$, with $i\neq j$ we have that 
 \begin{equation}\label{eq: orth}  \frac{d_{i}(S_{\ell(j)})}{|S_{\ell(j)}|} -     \frac{d_{i}(S_{r(j)})}{|S_{r(j)}|}=0.
 \end{equation}
 For $j=i$ we have
  \begin{equation}
      \label{eq:disi}\frac{d_{i}(S_{\ell(i)})}{|S_{\ell(i)}|} -     \frac{d_{i}(S_{r(i)})}{|S_{r(i)}|}=  - \beta k \left(\frac{1}{|S_{\ell(i)}|} + \frac{1}{|S_{r(i)}|}  \right) .
  \end{equation}
\end{lemma}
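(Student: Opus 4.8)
The key observation is that the drift vector $d_i$ is supported on $S_i$ and, within $S_i$, it is constant on each of $S_{\ell(i)}$ and $S_{r(i)}$: it equals $-\beta k/|S_{\ell(i)}|$ on $S_{\ell(i)}$ and $+\beta k/|S_{r(i)}|$ on $S_{r(i)}$, and $0$ outside $S_i$. This is exactly \eqref{eq:drifts}. So to evaluate $d_i(W)$ for any set $W$ I only need to know $|W\cap S_{\ell(i)}|$ and $|W\cap S_{r(i)}|$, giving $d_i(W) = -\beta k \,|W\cap S_{\ell(i)}|/|S_{\ell(i)}| + \beta k\,|W\cap S_{r(i)}|/|S_{r(i)}|$.

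For the $j=i$ case \eqref{eq:disi}, I would just plug $W=S_{\ell(i)}$ and $W=S_{r(i)}$ into this formula: $d_i(S_{\ell(i)}) = -\beta k$ and $d_i(S_{r(i)}) = \beta k$, so the left side of \eqref{eq:disi} is $-\beta k/|S_{\ell(i)}| - \beta k/|S_{r(i)}|$, as claimed. For the $i\neq j$ case \eqref{eq: orth}, the plan is to use the tree structure: since $S_{\ell(j)}$ and $S_{r(j)}$ are the two children of the node $j$ in the binary tree $T$, and $S_{\ell(i)}$, $S_{r(i)}$ are the children of $j$'s sibling-node structure at $i$, the relevant sets are nested or disjoint. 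Concretely, the key structural fact is: for two internal nodes $i \neq j$ of $T$, the set $S_i$ is either disjoint from $S_{\ell(j)}\cup S_{r(j)} = S_j$, or contained in one of $S_{\ell(j)}, S_{r(j)}$, or contains $S_j$. (This follows because in a hierarchical decomposition the sets form a laminar family, and $S_{\ell(i)}\cup S_{r(i)}=S_i$, $S_{\ell(j)}\cup S_{r(j)}=S_j$.) I would do a short case analysis:
\begin{itemize}
\item If $S_i \cap S_j = \emptyset$: then both $S_{\ell(j)}$ and $S_{r(j)}$ are disjoint from the support of $d_i$, so $d_i(S_{\ell(j)}) = d_i(S_{r(j)}) = 0$ and \eqref{eq: orth} holds trivially.
\item If $S_i \subseteq S_{\ell(j)}$ (the case $S_i\subseteq S_{r(j)}$ is symmetric): then $S_i\cap S_{r(j)}=\emptyset$, so $d_i(S_{r(j)})=0$; and $S_i\cap S_{\ell(j)}=S_i$, so $d_i(S_{\ell(j)}) = d_i(S_i) = d_i(S_{\ell(i)}) + d_i(S_{r(i)}) = -\beta k + \beta k = 0$. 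Again \eqref{eq: orth} holds.
\item If $S_j \subsetneq S_i$: then $j$ is a descendant of $i$ in $T$, so the node $j$ lies in the subtree rooted at $\ell(i)$ or at $r(i)$; say it is in the subtree of $\ell(i)$ (the other case is symmetric), so $S_j \subseteq S_{\ell(i)}$, hence both $S_{\ell(j)}$ and $S_{r(j)}$ are contained in $S_{\ell(i)}$, where $d_i$ is the constant $-\beta k/|S_{\ell(i)}|$. Then $d_i(S_{\ell(j)})/|S_{\ell(j)}| = -\beta k/|S_{\ell(i)}| = d_i(S_{r(j)})/|S_{r(j)}|$, and \eqref{eq: orth} holds.
\end{itemize}

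The only subtlety — and what I'd consider the main thing to get right rather than a genuine obstacle — is establishing the laminar/nesting dichotomy cleanly and checking it handles the case where $i$ and $j$ are incomparable nodes whose sets are nonetheless disjoint (covered by the first bullet) versus the case where one set strictly contains the other. Since $T$ is a genuine tree decomposition (every node's set is the union of its children's sets, leaves are singletons), the sets $\{S_i\}$ form a laminar family and the trichotomy "$S_i \subseteq S_{\ell(j)}$, or $S_i \subseteq S_{r(j)}$, or $S_i \cap S_j = \emptyset$, or $S_j \subseteq S_i$" is immediate from laminarity plus $S_j = S_{\ell(j)}\sqcup S_{r(j)}$. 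Everything else is the two-line arithmetic above, using only that $d_i$ is piecewise constant with total mass $-\beta k$ on $S_{\ell(i)}$ and $+\beta k$ on $S_{r(i)}$.
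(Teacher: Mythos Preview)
Your proposal is correct and follows essentially the same approach as the paper: a case analysis on the relative position of $i$ and $j$ in the tree (equivalently, the laminar relationship between $S_i$ and $S_j$), using that $d_i$ vanishes outside $S_i$, has zero total mass on $S_i$, and is constant on each of $S_{\ell(i)}$ and $S_{r(i)}$. The paper phrases the three cases as $j\notin T_i$ and $i\notin T_j$, $i\in T_j$, and $j\in T_i$, which match your three bullets exactly; the arithmetic in each case is the same.
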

 \begin{proof}
Fix $i\in I_T$ and let $T_i$ denote the subtree of $I_T$ rooted at $i$. We first consider $j\neq i$. 
Recall that the flow of commodity $i$, satisfying \eqref{eq:kappa} and \eqref{eq:drifts},  is only between nodes of $S_i$ and that $d_i(w)=0$ for $w \notin S_i$.
So if $j \notin T_i$ and $i \notin T_j$, then \eqref{eq: orth} holds trivially as $S_j \cap S_i=\emptyset$  and  $d_{i}(S_{\ell(j)}) =d_{i}(S_{r(j)}) =0$.  

Next, if $i \in T_j$ (and $i\neq j)$ then $S_i$ is
contained entirely in either $S_{\ell(j)}$ or $S_{\ell(i)}$. As the total commodity $i$ flow  leaving $S_i$ is $0$, this gives that  $d_{i}(S_{\ell(j)}) =d_{i}(S_{r(j)}) =0$. 

Next, suppose $j \in T_i$ (and $j\neq i$) and assume without loss of generality that $j$ lies in the right subtree of $i$. Then positive flow enters each vertex of $S_j$ and the flow entering $S_{\ell(j)}$ is exactly 
 \begin{equation}
\label{eq:demij}
     d_i(S_{\ell(j)}) =  \sum_{u \in S_{\ell(j)}} 
     d_i(u) = 
     \beta k \frac{ |S_{\ell(j)}|} {|S_{r(i)}|}.
 \end{equation}
 Similarly $d_i(S_{r(j)}) =  \beta k|S_{r(j)}| /|S_{r(i)}|$, 
 and the lemma follows. 
 Likewise, if $j$ lies in the left subtree of $i$ (so that the flow leaves $S_{\ell(j)}$ and $S_{r(j)}$), the same argument holds, up to a change of sign.

 Finally for $j=i$, we have $d_i(S_{\ell(j)}) =  -\beta k$ and $d_i(S_{r(j)})=\beta k$, and the lemma follows.
 \end{proof}

\begin{lemma}
\label{lem:tot-demand}
For any node $j\in I_T$ we have 
$\sum_{i\in I_T}|d_i(S_j)|\le 8 \beta k=  k/\alpha_G$.
\end{lemma}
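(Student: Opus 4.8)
The plan is to bound $\sum_{i\in I_T}|d_i(S_j)|$ by analyzing which commodities $i$ contribute a nonzero term, and then summing geometrically over the decomposition tree. By the orthogonality computation in \lref{lem:orthogonal} (specifically \eqref{eq:demij} and the surrounding case analysis), $d_i(S_j)\ne 0$ only when $i$ and $j$ lie on a common root-to-leaf path in $T$: either $j\in T_i$ (then $d_i(S_j)=\pm\beta k\,|S_j|/|S_{\ell(i)}|$ or $\pm\beta k\,|S_j|/|S_{r(i)}|$, depending on which subtree of $i$ contains $j$), or $i\in T_j$ with $i\ne j$ (then $d_i(S_j)$ collects all of commodity $i$'s source/sink mass, so $|d_i(S_j)|=\beta k$ exactly since the two sibling sets $S_{\ell(i)},S_{r(i)}$ are both contained in $S_j$ and carry total drift $-\beta k$ and $+\beta k$), or $i=j$ (then $|d_i(S_j)|=|-\beta k+\beta k|=0$, so this term actually vanishes). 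For all other pairs $S_i\cap S_j=\emptyset$ or one set leaves the other entirely with net zero, so the term is $0$.

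So the sum splits into two parts: the \emph{ancestor part}, over internal nodes $i$ that are strict ancestors of $j$, each contributing exactly $\beta k$; and the \emph{descendant part}, over internal nodes $i$ in the subtree $T_j$ (with $i\ne j$), contributing at most $\beta k\,|S_j|/\min(|S_{\ell(i)}|,|S_{r(i)}|)$ — but I should rewrite this more carefully: when $j\in T_i$, say $j$ is in the right subtree of $i$, the term is $\beta k\,|S_j|/|S_{r(i)}|$, and $S_j\subseteq S_{r(i)}$ so this is at most... hmm, that bound is at most $\beta k$ only if $|S_j|\le |S_{r(i)}|$, which holds — wait, that's the ancestor direction. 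Let me re-sort: I'll split as (a) $i$ a strict ancestor of $j$, where the relevant child of $i$ on the path to $j$ is a set containing $S_j$, so $|d_i(S_j)|=\beta k\,|S_j|/|S_{\text{child}}|\le \beta k$; and (b) $i$ a strict descendant of $j$, where $S_i\subseteq S_j$ and $d_i(S_j)$ captures all source and sink mass of commodity $i$, giving $|d_i(S_j)|=\beta k$ exactly; and (c) $i=j$ contributing $0$.

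For part (a), the ancestors of $j$ form a path of length $O(\log n)$ by \clmref{cl:depth}, so there are $O(\log n)$ such terms — but that only gives $O(\beta k\log n)$, which is too weak; I need the geometric decay. The right way: for an ancestor $i$ at distance $q$ above $j$ along the tree, the child of $i$ containing $S_j$ has size $\ge (4/3)^{\lfloor (q-1)/2\rfloor}|S_j|$ by \clmref{cl:depth}, so the contribution $\beta k\,|S_j|/|S_{\text{child}}|\le \beta k\,(3/4)^{\lfloor (q-1)/2\rfloor}$, and summing the geometric series over $q=1,2,\dots$ gives at most $\beta k\cdot\frac{2}{1-3/4}\cdot(\text{const})=O(\beta k)$; being careful with the $\lfloor q/2\rfloor$ rounding, the sum is at most $2\sum_{p\ge 0}(3/4)^p\,\beta k = 8\beta k$ in the crude bound, but I expect $4\beta k$ with the factor of two from the two $q$-values per $p$. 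For part (b): descendants $i$ of $j$ each contribute exactly $\beta k$, and I again use \clmref{cl:depth} — the sizes $|S_i|$ for $i$ at a fixed depth below $j$ sum to at most $|S_j|$ since they partition (a subset of) $S_j$, and at depth $2p$ below $j$ each has $|S_i|\le (3/4)^p|S_j|$; but I need to count \emph{internal nodes}, not weight them, so I should instead observe that since every internal node in $T_j$ has exactly two children and the leaves are singletons, $|I_{T_j}|=|S_j|-1$ — that gives only $(|S_j|-1)\beta k$, far too weak. The fix must be that part (b) terms do \emph{not} all equal $\beta k$: re-examining \eqref{eq:demij}, when $j\in T_i$ the contribution is $\beta k\,|S_j|/|S_{\text{sibling-side}}|$ which is small, and when $i\in T_j$... let me recheck the case $i\in T_j$, $i\ne j$: the source set $S_{\ell(i)}$ and sink set $S_{r(i)}$ are \emph{both} inside $S_j$, so $d_i(S_j)=d_i(S_{\ell(i)})+d_i(S_{r(i)})=-\beta k+\beta k=0$. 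So part (b) also vanishes identically! Only the ancestor part (a) survives, and it sums to $O(\beta k)$; tracking constants through the $\lfloor q/2\rfloor$ geometric sum should yield exactly $8\beta k = k/\alpha_G$ as claimed.

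The main obstacle is getting the constant exactly right in the geometric sum over ancestors: one must carefully use \clmref{cl:depth} with the floor in the exponent, account for the fact that consecutive levels $q=2p$ and $q=2p+1$ can share the same size bound $(3/4)^p$, and verify that $\sum 2(3/4)^p = 8$ matches, so that $\sum_{i}|d_i(S_j)|\le 8\beta k$ with $\beta=1/(8\alpha_G)$ giving $k/\alpha_G$. I would also double-check the boundary case where $j$ is close to the root (fewer than $O(\log n)$ ancestors), which only helps, and the case where $j$ itself is a child produced by the binarization step in Section~\ref{subs: 3.2.2} rather than an original R\"acke node — but \clmref{cl:depth} already handles all nodes of $T$ uniformly, so no separate treatment is needed.
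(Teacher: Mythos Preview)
Your proposal is correct and, once you finish your self-corrections, takes essentially the same approach as the paper: show that $d_i(S_j)=0$ unless $i$ is a strict ancestor of $j$, then for each such ancestor bound $|d_i(S_j)|\le \beta k\,|S_j|/|S_{\text{child}(i)}|\le \beta k\,(3/4)^{\lfloor(\dist(i,j)-1)/2\rfloor}$ via \clmref{cl:depth}, and sum the geometric series. The paper handles the final sum slightly differently (using $(3/4)^{1/2}\le 7/8$ rather than your explicit $2\sum_{p\ge 0}(3/4)^p=8$), but the argument is the same.
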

\begin{proof}
Fix a node $j \in I_T$. 
For any $i$ that lies in the subtree $T_j$ rooted at $j$,
we have that $d_i(S_j)=0$ as $d_i(u)=0$ for all $u\notin S_i$, $d_i(S_i)=0$ and  $S_i \subseteq  S_j$.


Next, consider any $i$ such that $j \notin T_i$ (and also $i \notin T_j$ by the case above), then $S_j \cap S_i = \emptyset$ and we have that $d_i(S_j)=0$ as $d_i(w)=0$ for $w \notin S_i$.
 
Thus it suffices to consider nodes $i$ which are ancestors of $j$.
Suppose, without loss of generality, that $j$ is in the left subtree of $i$. 
Then, by Claim~\ref{lem:ratio} and \eqref{eq:demij}, 
\[
|d_i(S_j)| = \Big|\sum_{u \in S_j} d_i(u)\Big| \le \beta k |S_j|/|S_{\ell(i)}| \leq \beta k (3/4)^{\lfloor (\dist(i,j)-1)/2 \rfloor},
\]
where $\dist(i,j)$ is the distance between $i$ and $j$.
The result follows by summing up over all the ancestors $i$ of $j$ and using that $(3/4)^{1/2} \leq 7/8$.
\end{proof}

We  now show that $p_e$ is a valid probability vector.
\begin{lemma}
\label{lem:R-load-n}
For every $e\in E$ of the graph $G$, we have that $\sum_{i\in I_T} p_e(i)\le 1$.
\end{lemma}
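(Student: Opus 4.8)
\textbf{Proof plan for \lref{lem:R-load-n}.} The quantity to bound is $\sum_{i\in I_T} p_e(i) = \sum_{i\in I_T} |g_i(e)|$, the total congestion imposed on the single edge $e$ by all the flows $g_i$ simultaneously. The natural approach is to invoke the defining property of the R\"acke tree: if a multicommodity demand can be routed feasibly on $R$ subject to the capacities $c_R$, then routing it in $G$ via the flow templates $f_{uv}$ incurs congestion at most $\alpha_G c_e$ on every edge $e$. So the plan is: (i) assemble all the commodities $i\in I_T$ into one aggregate demand vector $D$, (ii) show $D$ routes feasibly on $R$ with load at most $c_R(a)$ on every tree edge $a$, and conclude the total load of $\sum_i g_i$ on $e$ is at most $\alpha_G c_e$, and finally (iii) observe that $\alpha_G c_e \le \alpha_G \cdot 1 = \alpha_G$ does not quite give what we want --- we need the stronger normalization coming from the choice $\beta = 1/(8\alpha_G)$, so the scaling must be tracked carefully.

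The key step is (ii): bounding the load of the aggregate demand on each edge $a=(i',j')$ of $R$ (equivalently of $T$). For a tree edge separating $S_{j'}$ from its complement, the flow of commodity $i$ crossing it is exactly the net demand of commodity $i$ that must pass through that cut, which is $|d_i(S_{j'})|$ when $j'$ is on the ``far'' side --- and \emph{this is exactly the quantity controlled by \lref{lem:tot-demand}}, which gives $\sum_{i\in I_T} |d_i(S_{j'})| \le 8\beta k = k/\alpha_G$. Now, the capacity $c_R(a)$ of the tree edge is, by construction of the R\"acke tree, the capacity of the corresponding cut $(S_{j'}, S_{j'}^c)$ in $G$, which is at least $k$ since $G$ is $k$-edge-connected. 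Hence the total load on tree edge $a$ is at most $k/\alpha_G \le c_R(a)/\alpha_G$. Thus $D$ scaled up by $\alpha_G$ routes feasibly on $R$, i.e. $\alpha_G D$ is $R$-feasible, so by the congestion-ratio property the induced flow $\sum_i \alpha_G g_i$ in $G$ has load at most $\alpha_G c_e$ on edge $e$; dividing by $\alpha_G$ gives $\sum_i |g_i(e)| \le c_e \le 1$, as every edge of $G$ has unit capacity in our normalization.

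One subtlety to handle cleanly is the passage between $T$ and $R$: the commodities and demands $d_i$ are defined with respect to the binary tree $T$, whereas the congestion-ratio property and the flow templates $f_{uv}$ live on $R$. Since $R$ is a contraction of $T$ (they share the same leaf set, as noted after Step 2), every cut $(S_{j'}, S_{j'}^c)$ arising from a node of $T$ refines to a cut realizable in $R$, or one argues directly that routing the $\kappa_i$ on $T$ and then applying $f_{uv}$ is the same as routing on $R$ and applying $f_{uv}$, because the flow templates only depend on source-destination pairs, not on intermediate tree structure. Alternatively, and perhaps more transparently, I would bound the $R$-congestion of $D$ directly: for each edge of $R$ with its cut, the demand crossing it is a sum of terms $|d_i(\cdot)|$ over the $T$-nodes lying strictly below, and \lref{lem:tot-demand} still applies verbatim since it is stated for all $j \in I_T$.

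The main obstacle I anticipate is not conceptual but bookkeeping: making sure the factor of $8$ in \lref{lem:tot-demand} and the factor $\alpha_G$ in the congestion ratio combine exactly through $\beta = 1/(8\alpha_G)$ to yield a bound of $1$ (rather than, say, $2$ or $1/2$) on $\sum_i |g_i(e)|$, and making sure the edge capacities $c_e = 1$ in $G$ versus the cut capacities $c_R(a) \ge k$ in $R$ are invoked consistently. Everything else --- that each $g_i$ is a genuine flow realizing the drift $d_i$, that the demand crossing a tree cut equals the total drift entering the separated set --- is either immediate from the definitions \eqref{eq:kappa}--\eqref{eq:gi} or already established in \lref{lem:orthogonal} and \lref{lem:tot-demand}.
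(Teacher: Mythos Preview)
Your proposal is correct and follows essentially the same route as the paper's own proof: both reduce the bound on $\sum_{i\in I_T}|g_i(e)|$ to bounding the total multicommodity load on each edge of $R$, invoke \lref{lem:tot-demand} to get $\sum_i |d_i(S_{j'})| \le k/\alpha_G \le c_R(j,j')/\alpha_G$, and then apply the R\"acke congestion-ratio property. Your handling of the $T$ versus $R$ passage (nodes of $R$ are nodes of $T$, so \lref{lem:tot-demand} applies directly to the cut set $S_{j'}$) matches the paper's one-line remark that the edge $(j,j')$ in $R$ corresponds to a path in $T$.
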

\begin{proof}
\allowdisplaybreaks
Fix an edge $e=(x,y)$ of $G$.
Recall that for each node $i\in I_T$, we defined $p_i(e)=|g_i(x,y)|$ in \eqref{eq:pe} where the flow $g_i$ is obtained by mapping the commodity $i$ flow in $R$ to $G$ using the flow templates as defined in \eqref{eq:gi}.
So $\sum_{i\in I_T} p_i(e)$ is exactly the total multi-commodity flow through the edge $e$.

By the property of R\"acke trees, it suffices to show that the total multi-commodity flow through any edge $(j,j')$ of $R$ is at most $1/\alpha_G$ times its capacity.
Moreover as $G$ is $k$ edge-connected, each edge in $R$ has capacity at least $k$, and it suffices to show that this is at most $k/\alpha_G$.

Let $j'$ be the child of $j$ in $R$. Then the edge $(j,j')$ corresponds to path in $T$ with $j$ an ancestor of $j'$. 
The flow across $(j,j')$ in $R$ of commodity $i$ is due to demands $\kappa_i(u,v)$ for all pairs $(u,v)$ with exactly one of $u$ or $v$ is contained in $S_j$, i.e., $|\{u,v\} \cap S_j|=1$.
For each $i \in I_T$, the total commodity $i$ flow that leaves or enters the set  $S_j$ is exactly $|d_i(S_j)|$ and hence by Lemma \ref{lem:tot-demand}, the total flow across $(j,j')$ is at most $k/\alpha_G$.
\end{proof}

\subsection{Realized flows and drifts}
The vectors $p_e$, $\sigma_e(i)$ are constructed so that the allocation strategy at each time achieves a certain drift vector.
Fix a time $t$. Given a load vector $L^t$, for each $i\in i_T$ define
\[Q_{L^{t}}(i) =\begin{cases}
 +1    &  \text{ if } \overline{L}^t(S_{\ell(i)}) >  \overline{L}^t(S_{r(i)}), \\
 -1    &  \text{ if } \overline{L}^t(S_{\ell(i)}) <  \overline{L}^t(S_{r(i)}), \\
 0  & \text{ otherwise.} 
\end{cases}
\]

Notice that using these $Q_{L^t}(i)$ our allocation strategy at time $t+1$ as described in Section \ref{sec:allocstat}
 can be expressed more compactly as follows. When edge $e=(x,y)$ is requested, with probability $p_e(i)$, assign the ball to $y$  with probability $(1+\sigma_e(i)Q_{L^{t}}(i))/2$ and to $x$ otherwise. 
In other words, the strategy creates a bias of $\sigma_e(i) Q_{L^t(i)}$ with probability $p_e(i)$, towards $y$ upon request to $(x,y)$ at time $t+1$.

Summing up over $i$, the overall bias towards $y$  upon the request to edge $e=(x,y)$ is 
\begin{equation}
\label{eq:bias}
b_t(x,y) = \sum_{i\in I_T}Q_{L^t}(i)\sigma_e(i)p_e(i) = 
\sum_{i\in I_T}\sgn \big(g_i(x,y)\big)\left|g_i(x,y)\right| Q_{L^t}(i)=\sum_{i\in I_T}g_i(x,y) Q_{L^t}(i),
\end{equation}
where the second equality uses the definition of $\sigma_e(i)$ and $p_e(i)$ in \eqref{eq:pe}.

In other words, the edge biases $b_t(x,y)$ are exactly realized by the sum of flows $g_i(x,y)$ weighted by the
signs $Q_{L^t(i)}$ which are based on the current loads.

Using this we can now describe the probability $q_t(y)$ that a vertex $y \in V$ is assigned a ball at step $t+1$.
Henceforth, we will focus on a fixed time and drop the dependence on $t$ in $L^t, \overline{L}^t, Q_{L^t}, b_t(x,y)$ and $q_t(y)$ for ease of notation.  These quantities should  always be viewed as functions of $t$.

\begin{lemma}\label{lem:comp dsec}
At every time $t$, writing $L=L^t$, for each vertex $y$ of $G$ the allocation strategy assigns a ball to $y$ at time $t+1$ with probability 
\[ q(y) = \frac{1}{n} + \frac{1}{2m} \sum_{i\in I_T}d_i(y) Q_L(i)
.\]
Equivalently, the drift $d^t(y) = 2m (q(y)-1/n)$ at time $t$ at each node $y$ is exactly $\sum_{i\in I_T}d_i(y) Q_L(i)$.
\end{lemma}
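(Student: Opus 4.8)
The plan is to compute $q(y)$ directly from the definition of the allocation strategy, summing over the two ways that $y$ can receive a ball: either from a requested edge incident to $y$, or (with the residual probability $p_e(\emptyset)$) from a uniformly random allocation. First I would write $q(y)$ by conditioning on the requested edge, which is uniform among the $m$ edges: $q(y) = \frac1m\sum_{x\in N(y)} \Pr[\text{ball}\to y \mid e=(x,y)]$. For a fixed edge $e=(x,y)$, unpacking the compact description of the strategy given just before the lemma, the ball goes to $y$ with probability $\frac12\big(1 + b(x,y)\big)$ where $b(x,y) = \sum_{i\in I_T} g_i(x,y) Q_L(i)$ is the bias computed in \eqref{eq:bias}; this already accounts correctly for the $i=\emptyset$ case, since that contributes bias $0$. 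Substituting gives
\[
q(y) = \frac1m \sum_{x\in N(y)} \frac{1 + \sum_{i\in I_T} g_i(x,y)Q_L(i)}{2} = \frac{d}{2m}\cdot\frac1d\sum_{x\in N(y)}1 \;+\; \frac1{2m}\sum_{i\in I_T} Q_L(i)\sum_{x\in N(y)} g_i(x,y).
\]
Since $G$ is $d$-regular and $2m = nd$, the first term is exactly $1/n$.

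For the second term I would recognize $\sum_{x\in N(y)} g_i(x,y)$ as the net flow of commodity $i$ entering the vertex $y$ in $G$, i.e. $d_i(y)$ as defined in Section~3.3. The one point needing care here is that $d_i(y)$ was originally defined as the commodity-$i$ flow entering $y$ with respect to the demands $\kappa_i$ routed \emph{on the tree $T$}, whereas here we have the flow $g_i$ in $G$ obtained via the flow templates $f_{uv}$. I would therefore note that the flow templates route exactly one unit from $u$ to $v$ for each ordered pair, so $g_i$ is a feasible flow in $G$ meeting the demand $\kappa_i(u,v)$ between every pair $u,v$; consequently the net flow into any vertex $y$ under $g_i$ equals the net demand at $y$, which is precisely $d_i(y)$ from \eqref{eq:drifts} (equivalently, flow conservation at internal vertices of the template paths, together with the endpoint contributions, telescopes to the demand vector). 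Plugging this in yields $q(y) = \frac1n + \frac1{2m}\sum_{i\in I_T} d_i(y) Q_L(i)$, and the ``equivalently'' clause is then just the definition $d^t(y) = 2m(q(y) - 1/n)$.

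The main obstacle, such as it is, is bookkeeping rather than a genuine difficulty: one must be careful that the bias $b(x,y)$ really does equal $\sum_i g_i(x,y)Q_L(i)$ including the $\emptyset$ branch (it does, since $\sigma_e,p_e$ are supported on $I_T$ and the $\emptyset$ branch is an unbiased coin), and that the identity $\sum_{x\in N(y)}g_i(x,y) = d_i(y)$ is applied with the correct sign convention (flow \emph{into} $y$, matching the sign convention $d_i(u) = -\beta k/|S_{\ell(i)}|$ for $u\in S_{\ell(i)}$ in \eqref{eq:drifts}). Once these two identifications are made, the computation is a two-line substitution using $d$-regularity and $2m=nd$.
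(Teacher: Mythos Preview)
Your proposal is correct and follows essentially the same route as the paper: write $q(y)=\frac1m\sum_{x\in N(y)}\frac{1+b(x,y)}{2}$, plug in $b(x,y)=\sum_i g_i(x,y)Q_L(i)$ from \eqref{eq:bias}, use $d/2m=1/n$, and identify $\sum_{x\in N(y)}g_i(x,y)=d_i(y)$. Your extra paragraph justifying this last identification via flow conservation of the templates is more explicit than the paper's one-line ``by definition,'' but the argument is the same.
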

\begin{proof}
As each edge $e$ is requested uniformly at random, by \eqref{eq:bias} the probability of allocating the ball to $y$ is exactly 
\[  q(y) = \frac{1}{m}  \sum_{x \in N(y)} \frac{1+b(x,y)}{2} =  \frac{d}{2m} + 
\frac{1}{2m} \sum_{x \in N(y)}\sum_{i\in I_T}g_i(x,y) Q_{L}(i).\]
As $2m=nd$ and $\sum_{x \in N(y)} g_i(x,y)$ is the commodity $i$ flow entering $y$,  which by definition  is exactly $d_i(y)$, we obtain that 
\[ q(y) = \frac{1}{n} +  \frac{1}{2m} \sum_{i\in I_T}d_i(y) Q_L(i). \qedhere \]
\end{proof}

 For a subset of vertices $W\subset V$, let $q_t(W)=\sum_{w\in W}q_t(w)$ (denoted $q(W)$ henceforth)  denote the probability of allocating a ball at time $t+1$ to some vertex in $W$. 
 
 We can now compute the relative drift between any two sibling sets $S_{\ell(i)}$ and $S_{r(i)}$. 
\begin{lemma}
\label{cor:qsi-normalized-diff}
For a node $i\in I_T$ and its children $\ell(i)$ and $r(i)$,
\[ \frac{q(S_{\ell(i)})}{|S_{\ell(i)}|} - \frac{q(S_{r(i)})}{|S_{r(i)}|} = \frac {-\beta k Q_L(i)}{2m} \left(\frac{1}{|S_{\ell(i)}|}+   \frac{1}{|S_{r(i)}|}\right).
\]
\end{lemma}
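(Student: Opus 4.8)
The plan is to compute $q(S_{\ell(i)})$ and $q(S_{r(i)})$ directly from \lref{lem:comp dsec} and then isolate the contribution of commodity $i$ using the orthogonality property \lref{lem:orthogonal}. First I would write, for any set $W\subset V$,
\[
q(W) = \sum_{w\in W} q(w) = \frac{|W|}{n} + \frac{1}{2m}\sum_{i'\in I_T} d_{i'}(W)\, Q_L(i'),
\]
which follows by summing the expression in \lref{lem:comp dsec} over $w\in W$ and recalling that $d_{i'}(W)=\sum_{w\in W} d_{i'}(w)$.

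Next I would apply this with $W=S_{\ell(i)}$ and $W=S_{r(i)}$, divide by $|S_{\ell(i)}|$ and $|S_{r(i)}|$ respectively, and subtract. The $|W|/n$ terms cancel since $|S_{\ell(i)}|/|S_{\ell(i)}| = |S_{r(i)}|/|S_{r(i)}| = 1$, leaving
\[
\frac{q(S_{\ell(i)})}{|S_{\ell(i)}|} - \frac{q(S_{r(i)})}{|S_{r(i)}|}
= \frac{1}{2m}\sum_{i'\in I_T} Q_L(i')\left(\frac{d_{i'}(S_{\ell(i)})}{|S_{\ell(i)}|} - \frac{d_{i'}(S_{r(i)})}{|S_{r(i)}|}\right).
\]
Now \lref{lem:orthogonal} says that the bracketed quantity vanishes for every $i'\neq i$, so only the $i'=i$ term survives, and for that term \eqref{eq:disi} gives exactly $-\beta k\big(1/|S_{\ell(i)}| + 1/|S_{r(i)}|\big)$. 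Multiplying by $Q_L(i)/(2m)$ yields the claimed identity.

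There is essentially no obstacle here — the lemma is a direct bookkeeping consequence of the two preceding lemmas, and the only thing to be careful about is the sign and the normalization factors ($2m = nd$, the division by the set sizes before subtracting rather than after). The "content" was already front-loaded into \lref{lem:comp dsec} (the realized drift equals $\sum_{i'} d_{i'}(y)Q_L(i')$) and \lref{lem:orthogonal} (orthogonality of the drift templates), so this statement is really just assembling those pieces.
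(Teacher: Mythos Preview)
Your proposal is correct and follows essentially the same route as the paper: sum the per-vertex identity from \lref{lem:comp dsec} over $S_{\ell(i)}$ and $S_{r(i)}$, normalize and subtract, then invoke \lref{lem:orthogonal} to kill every $i'\neq i$ and \eqref{eq:disi} for the surviving term. The paper's proof is line-for-line the same argument.
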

Observe that this only depends on $Q_L{(i)}$, i.e.~on how they average loads on $S_{\ell(i)}$ and $S_{r(i)}$ compare, and not on anything else (this is exactly where the orthogonality property is useful). This will allow us to establish concentration bounds on the difference $|\overline{L}(S_{\ell(i)}) - \overline{L}(S_{r(i)})|$.

\begin{proof}
By Lemma~\ref{lem:comp dsec}, we have  that 
\[q(S_{\ell(i)}) = \sum_{y \in S_{\ell(i)}} q(y) = |S_{\ell(i)}|/n + \sum_{j\in i_T} d_j(S_{\ell(i)}) Q_L(j)/2m,\] and  an analogous expression for $q(S_{r(i)})$ with $S_{\ell(i)}$ replaced by $S_{r(i)}$. Together this gives,
\[   \frac{q(S_{\ell(i)})}{|S_{\ell(i)}|} - \frac{q(S_{r(i)})}{|S_{r(i)}|} = \frac{1}{2m}  \sum_{j \in i_T }  Q_L(j) \left(\frac{d_j(S_{\ell(i)})}{|S_{\ell(i)}|}  -    \frac{d_j(S_{r(i)})}{|S_{r(i)}|}\right). \]
By the orthogonality property in Lemma~\ref{lem:orthogonal}, the only non-zero contribution in the sum above is due to the term $j=i$. Using \eqref{eq:disi} the result follows.
\end{proof}

\subsection{Controlling the gap through concentration}

With the results above, we are now ready to prove Theorem~\ref{thm:main}. To this end, we first state a bound on the gap between the average load on sets corresponding to any two siblings in $T$. 
Recall that $d$ and $k$ are the degree and edge-connectivity of $G$.
\begin{lemma}
\label{lem:main}
Writing $\overline{L}=\overline{L}^t$, for any time $t$ and any non-leaf node $i \in I_T$, for every constant $c>0$,
\[ \Pr \left[\,|\overline{L}(S_{\ell(i)}) -  \overline{L}(S_{r(i)})|  > 8( d/k) \alpha_G \, c \log n \right] = O(n^{-c}).\]
\end{lemma}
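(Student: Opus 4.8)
Fix the node $i \in I_T$ and the time $t$, and consider the one-dimensional quantity $X^t := \overline{L}^t(S_{\ell(i)}) - \overline{L}^t(S_{r(i)})$. The strategy is to show that $X^t$ behaves like a random walk with a restoring drift towards $0$ whose strength is $\Theta(\beta k/m)$ per step (scaled by the relevant normalization), so that a standard concentration/hitting-time argument for such walks yields the stated tail bound. Concretely, at step $t+1$ the value $X^t$ changes by $\pm(1/|S_{\ell(i)}|)$ if the ball lands in $S_{\ell(i)}$, by $\mp(1/|S_{r(i)}|)$ if it lands in $S_{r(i)}$, and by $0$ otherwise. By \lref{cor:qsi-normalized-diff}, the conditional expected change is
\[
\E[X^{t+1} - X^t \mid L^t] = \frac{q(S_{\ell(i)})}{|S_{\ell(i)}|} - \frac{q(S_{r(i)})}{|S_{r(i)}|} = \frac{-\beta k\, Q_L(i)}{2m}\left(\frac{1}{|S_{\ell(i)}|} + \frac{1}{|S_{r(i)}|}\right),
\]
and since $Q_L(i) = \sgn(X^t)$, this is exactly $-\gamma \cdot \sgn(X^t)$ with $\gamma := \frac{\beta k}{2m}\left(\frac{1}{|S_{\ell(i)}|} + \frac{1}{|S_{r(i)}|}\right) > 0$: a genuine contraction towards $0$ of magnitude $\gamma$ whenever $X^t \neq 0$.

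**Key steps.** First I would record the step-size bound: $|X^{t+1}-X^t| \le \max(1/|S_{\ell(i)}|, 1/|S_{r(i)}|) \le \frac{1}{|S_{\ell(i)}|}+\frac{1}{|S_{r(i)}|} =: s$, and note that the ratio $\gamma/s = \beta k/(2m)$ controls the "reversion per unit displacement." Then I would invoke a standard lemma on concentration of Markov chains with multiplicative/additive drift — of the form: if a real-valued process $Y^t$ has bounded increments $|Y^{t+1}-Y^t|\le s$ and satisfies $\E[Y^{t+1}-Y^t \mid \mathcal{F}_t] \le -\gamma$ whenever $Y^t > 0$ (symmetrically for $Y^t<0$), then for all $t$, $\Pr[|Y^t| > \lambda] \le C\exp(-c'\gamma\lambda/s^2)$ or equivalently a bound via an exponential supermartingale $e^{\eta |Y^t|}$ with $\eta \asymp \gamma/s^2$. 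Applying this with $Y^t = X^t$, $s$ and $\gamma$ as above gives $\eta \asymp \gamma/s^2$; using $\gamma/s = \beta k/(2m)$ and $s \le 2/\min(|S_{\ell(i)}|,|S_{r(i)}|)$ one gets $\eta \gtrsim \frac{\beta k}{2m}\cdot\frac{\min(|S_{\ell(i)}|,|S_{r(i)}|)}{2}$. Plugging $\beta = 1/(8\alpha_G)$, $2m = nd$, and bounding the set sizes, a tail bound at level $\lambda = 8(d/k)\alpha_G\, c\log n$ should come out to $O(n^{-c})$ after the arithmetic; I would want to double-check that the set-size factors cancel cleanly (they must, since the quantity being controlled is already normalized by the set sizes — this is the whole point of the "small relative bias suffices" discussion). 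One subtlety is the presence of the $Q_L(i)=0$ case (ties), where there is no drift; this is handled because ties only help concentration (the walk is then a bounded martingale for that step), and the drift lemma tolerates $\E[\cdot]\le -\gamma\cdot\sgn$ with equality-to-zero allowed at the origin.

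**Main obstacle.** The delicate point is the exponential-supermartingale / drift-concentration argument itself: one needs the bound to hold uniformly over all $t$ (not just in a stationary regime), starting from $X^0 = 0$, and one needs the increments to be handled correctly — in particular the second-order term $\E[(X^{t+1}-X^t)^2\mid\mathcal{F}_t]$ which multiplies $\eta^2$ in the expansion of $e^{\eta X^{t+1}}$ must be dominated by the first-order drift term $-\eta\gamma$, forcing $\eta \lesssim \gamma/s^2$. Getting the constants to line up so that the threshold is exactly $8(d/k)\alpha_G c\log n$ (rather than off by a constant factor) requires being careful about whether one bounds $s$ by $\frac{1}{|S_{\ell(i)}|}+\frac{1}{|S_{r(i)}|}$ or by $\max$, and about the factor of $2$ hiding in $Q_L(i) = \sgn(X^t)$ versus the indicator form. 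I expect this is exactly where the paper will cite or prove a clean "concentration for drift processes" lemma (plausibly borrowed from the potential-function machinery of \cite{PTW15}), and the rest is bookkeeping; I would structure the proof as (i) identify the drift via \lref{cor:qsi-normalized-diff}, (ii) quote the drift-concentration lemma, (iii) substitute parameters.
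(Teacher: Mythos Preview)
Your plan is the same as the paper's in spirit --- compute the drift via \lref{cor:qsi-normalized-diff} and run an exponential-potential argument (the paper uses $\Phi=\cosh(\alpha\,\cdot)$, which is just the symmetrized $e^{\eta|X|}$). But there is a real gap in your parameter accounting, and the set-size factors do \emph{not} cancel with the choice $\eta\asymp\gamma/s^2$ you wrote down.

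Concretely: with $\gamma = \frac{\beta k}{2m}\,s$ and $s = 1/|S_{\ell(i)}|+1/|S_{r(i)}|$, your $\eta\gtrsim \frac{\beta k}{2m}\cdot\frac{\min(|S_{\ell(i)}|,|S_{r(i)}|)}{2}$ gives, at the target $\lambda = 8(d/k)\alpha_G\,c\log n$, the exponent $\eta\lambda \gtrsim \frac{\min(|S_{\ell(i)}|,|S_{r(i)}|)}{2n}\,c\log n$. This is $\Theta(c\log n)$ only near the root of $T$; for deep nodes (e.g.\ singletons) it is $O((c\log n)/n)$ and the bound is vacuous. The problem is that $s^2$ is the wrong proxy for $\E[(\Delta X)^2\mid\mathcal F_t]$: your walk is \emph{lazy}, since $X^t$ changes only when the ball lands in $S_i$, which happens with probability $q(S_i)\approx|S_i|/n$. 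The correct second moment is $\E[(\Delta X)^2\mid\mathcal F_t]=q(S_{\ell(i)})/|S_{\ell(i)}|^2+q(S_{r(i)})/|S_{r(i)}|^2\lesssim 2s/n$, smaller than $s^2$ by roughly $n/|S_i|$. Plugging this in yields $\eta\asymp \beta k/d$ independent of the set sizes, and then the arithmetic does close.

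The paper sidesteps this by \emph{conditioning} on the ball landing in $S_i$: it sets $\pi_1=|S_{\ell(i)}|/|S_i|$, $\pi_2=|S_{r(i)}|/|S_i|$, tracks $\delta=\ell_1/\pi_1-\ell_2/\pi_2$, and proves a clean standalone two-bin lemma (\lref{lem:2pt}) for this non-lazy process via the $\cosh$ potential. The remaining step is to check that the conditional bias $\varepsilon_t=\beta k\,Q_L(i)/(2m\,q(S_i))$ satisfies $|\varepsilon_t|\ge\varepsilon:=k/(8d\alpha_G|S_i|)$, which needs the bound $q(S_i)\le 2|S_i|/n$. So your outline is salvageable, but you must either condition on $S_i$ as the paper does, or replace $s^2$ by the actual conditional variance; the generic ``bounded increments $\le s$ plus drift $\ge\gamma$'' lemma you quote is not sharp enough here.
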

Let us first see how this directly implies Theorem \ref{thm:main}.
\begin{proof}[Proof of Theorem \ref{thm:main}]
Fix any non-leaf node $i \in I_T$ and consider some child $j \in  \{\ell(i), r(i)\}$ of $i$. Let $j'$ be the sibling of $j$.
We first observe that the difference between average load of siblings is no less than the difference for a parent-child pair. This follows as
\begin{equation}
    \label{eq:rela}
    \left|\overline{L}(S_j) - \overline{L}(S_{j'})\right|
= \left|\frac{L(S_j)}{|S_j|} - \frac{L(S_i)-L(S_j)}{|S_{j'}|}\right|
= \frac{|S_i|}{|S_{j'}|} \left| \overline{L}(S_j) -  \overline{L}(S_i)\right| \geq  \left| \overline{L}(S_j) -  \overline{L}(S_i)\right|,
\end{equation} 
where we use the fact that $L(S_i) = L(S_j)+L(S_{j'})$, $|S_i|=|S_j| +|S_{j'}|$ and $|S_i| \geq |S_{j'}|$.

 Fix some vertex $u$ of $G$, and consider the path $u=i_0,i_1,\ldots,i_h=r$  in $T$ from  the leaf $u$ to the root $r$. 
 As $L(u) = \overline{L}(u)$ for a leaf $u$ and $\overline{L}(r)$ is the average load over $V$,
 the deviation of the load $u$ from the global average is  
 \[ \left|\overline{L}(u) - \overline{L}(r)\right| 
 = \left|\sum_{g=1}^h \big(\overline{L}(S_{i_{g-1}}) - \overline{L}(S_{i_{g}})\big)\right| \leq\sum_{g=1}^h \left|\overline{L}(S_{i_{g-1}}) - \overline{L}(S_{i_{g}})\right| .\] 
Applying Lemma \ref{lem:main} with $c>1$, taking  
a union bound over the $O(n)$ edges $(i,j) \in E_T$ and using \eqref{eq:rela}, we get that w.h.p. $|\overline{L}(S_i) - \overline{L}(S_j)|\le ( d/k) \alpha_G \, c' \log n$ for all edges $(i,j) \in E_T$.  As the height of $T$ is $h_T = O(\log n)$ by Claim~\ref{cl:depth}, 
this gives that,  w.h.p., for every vertex $u$,
\[\left|\overline{L}(u) -  \overline{L}(r)\right|  = O(( d/k) \alpha_G \, \log^2 n).\]
Recalling that $\alpha_G = O(\log^2 n \log \log n)$ gives the claimed result.
\end{proof}

\subsection{Proof of the concentration estimate}
We  now prove Lemma \ref{lem:main}.
We first develop a concentration lemma to be used in our analysis. 
\paragraph{2-point concentration.} Consider the following set up. There are two bins, $1$ and $2$, associated with
\emph{steady state probabilities} $\pi_1, \pi_2$ which satisfy $\pi_1+\pi_2=1$. 
At each time step a ball arrives and
let $\ell^t_1, \ell^t_2$ denote the loads of the bins at the end of time $t$. In addition, there is  fixed parameter $\varepsilon\leq \min(\pi_1,\pi_2)/2$.

Suppose the allocation process at $t+1$ is as follows:
choose a parameter $\varepsilon_t>\varepsilon$  (independent of future allocations, but possibly depending on the history of the process and external randomness) which almost surely satisfies $\varepsilon_t\le \max(\pi_1,\pi_2)/2$.
If $\ell^t_1/\pi_1 > \ell^t_2/\pi_2$, then allocate the ball to bin 1 with probability $\pi_1-\varepsilon_t$ or to bin 2 otherwise. If $\ell^t_1/\pi_1 > \ell^t_2/\pi_2$,  allocate the ball to bin 1 with probability $\pi_1 + \varepsilon_t$ or to bin 2 otherwise. Finally, if $\ell^t_1/\pi_1 = \ell^t_2/\pi_2$ allocate the ball to bin 1 with probability $\pi_1$ or to bin 2 otherwise.

For any time $t$, let
$\delta(t) = \ell^t_1/\pi_1 - \ell^t_2/\pi_2$ denote the normalized difference in the loads of the bin. Then we have the following concentration bound for this 2-point process.
\begin{lemma}
\label{lem:2pt}
For any time $t$ and any $x\geq 0$ it holds that 
\[\Pr[|\delta(t)| \geq  x/\varepsilon] = O\big(\exp(-x/8)\big).\]
\end{lemma}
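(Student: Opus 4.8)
\textbf{Proof plan for Lemma~\ref{lem:2pt} (the 2-point concentration bound).}

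The plan is to track the non-negative potential $\Phi(t) = e^{\lambda|\delta(t)|}$ for a suitable small constant $\lambda$ (I expect $\lambda = \Theta(\varepsilon)$, so that $x/\varepsilon$ scales correctly) and show that $\E[\Phi(t)]$ stays bounded by a constant for all $t$, after which Markov's inequality gives the claim. First I would understand the one-step dynamics of $\delta(t)$: in one step, exactly one bin receives the ball, so $\delta$ changes by $+1/\pi_1$ (ball to bin~1) or by $-1/\pi_2$ (ball to bin~2). The key structural fact is the \emph{self-correcting drift}: by the allocation rule, when $\delta(t)>0$ (i.e. bin~1 is relatively overloaded) the ball goes to bin~1 only with probability $\pi_1 - \varepsilon_t$, so $\E[\delta(t+1)-\delta(t)\mid \text{history}] = (\pi_1-\varepsilon_t)/\pi_1 - (\pi_2+\varepsilon_t)/\pi_2 = -\varepsilon_t(1/\pi_1+1/\pi_2) \le -\varepsilon(1/\pi_1+1/\pi_2) < 0$, and symmetrically a positive drift when $\delta(t)<0$. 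So $|\delta(t)|$ has a drift of magnitude at least $\varepsilon(1/\pi_1+1/\pi_2) \ge 2\varepsilon$ pulling it toward $0$ (using $\pi_1,\pi_2\le 1$), whenever $|\delta(t)|$ is not tiny.

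The main step is a standard drift-to-concentration argument (in the spirit of the potential-function / supermartingale analysis of Peres--Talwar--Wieder \cite{PTW15} and Hajek-type bounds). I would fix $\lambda$ small enough (roughly $\lambda \le c\,\varepsilon$ for a small absolute constant $c$, also ensuring $\lambda/\pi_i$ is bounded so the increments $e^{\pm\lambda/\pi_i}$ are controlled via $e^z \le 1+z+z^2$ for $|z|$ bounded) and show a one-step inequality of the form
\[
\E\big[e^{\lambda|\delta(t+1)|}\;\big|\;\text{history}\big] \le \big(1-\gamma\big)\,e^{\lambda|\delta(t)|} + K
\]
for some $\gamma \in (0,1)$ and constant $K$, valid provided $|\delta(t)|$ is above a threshold $O(1/\varepsilon)$; when $|\delta(t)|$ is below that threshold, $e^{\lambda|\delta(t)|}$ is already $O(1)$ so the bound holds trivially with $K$ absorbed. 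Here the negative term $-\gamma e^{\lambda|\delta(t)|}$ comes from expanding the exponential and using the drift bound above: the first-order term contributes $-\lambda\varepsilon(1/\pi_1+1/\pi_2)e^{\lambda|\delta(t)|} \le -2\lambda\varepsilon e^{\lambda|\delta(t)|}$, while the second-order (variance) term contributes $O(\lambda^2/\pi_{\min}^2)e^{\lambda|\delta(t)|}$, which is dominated once $\lambda$ is a small enough multiple of $\varepsilon$ (using $\varepsilon \le \min(\pi_1,\pi_2)/2$, so $\varepsilon/\pi_{\min}$ is bounded — this is exactly where the hypothesis $\varepsilon \le \min(\pi_1,\pi_2)/2$ and $\varepsilon_t \le \max(\pi_1,\pi_2)/2$ are used to keep all ratios $O(1)$). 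Iterating this recursion and taking expectations gives $\E[e^{\lambda|\delta(t)|}] \le K/\gamma + e^{\lambda|\delta(0)|} = O(1)$ for all $t$, since $\delta(0)=0$. Then Markov gives $\Pr[|\delta(t)| \ge x/\varepsilon] = \Pr[e^{\lambda|\delta(t)|} \ge e^{\lambda x/\varepsilon}] \le O(1)\,e^{-\lambda x/\varepsilon}$, and with $\lambda = \varepsilon/8$ (choosing the constant to match the statement) this is $O(e^{-x/8})$.

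The part I expect to be most delicate is handling the \emph{boundary behaviour} near $\delta(t)=0$: the drift argument only gives a clean pull toward $0$ when $|\delta(t)|$ is bounded away from $0$, and when $\delta(t)$ is small the increment $\pm 1/\pi_i$ can flip its sign, so one must argue that $e^{\lambda|\delta|}$ cannot blow up there — which it cannot, since it is $O(1)$ in that regime, but the case analysis (and the precise choice of threshold, $|\delta| > 2/(\varepsilon(1/\pi_1+1/\pi_2))$ or similar, so that the drift term beats the fluctuation term) needs to be done carefully. A secondary technical point is that $\varepsilon_t$ is adversarial within $[\varepsilon,\max(\pi_1,\pi_2)/2]$ and may depend on the past; the argument must only use $\varepsilon_t \ge \varepsilon$ for the drift lower bound and $\varepsilon_t \le \max(\pi_1,\pi_2)/2 \le 1/2$ to keep the allocation probabilities valid and the increments bounded — conditioning on the history makes the supermartingale-type step go through regardless of how $\varepsilon_t$ is chosen.
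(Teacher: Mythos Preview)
Your plan is essentially the paper's: an exponential potential, a one-step drift recursion, and Markov. Two refinements are worth noting.

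First, the paper uses $\Phi(t)=\cosh(\alpha\,\delta(t))$ with $\alpha=\varepsilon/8$ instead of $e^{\lambda|\delta(t)|}$. Since $\cosh$ is smooth and even, the Taylor expansion of $\Delta\Phi$ in powers of $\alpha\,\Delta\delta$ goes through uniformly with no case split at $\delta=0$; this is precisely the clean way around the ``boundary behaviour'' you correctly flagged as the delicate part.

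Second, your quantitative comparison of drift vs.\ fluctuation is too crude as written and would not give $\lambda=\Theta(\varepsilon)$. You lower-bound the first-order term by $2\lambda\varepsilon$ (via $1/\pi_1+1/\pi_2\ge 2$) while upper-bounding the second-order term by $O(\lambda^2/\pi_{\min}^2)$; balancing those forces $\lambda=O(\varepsilon\,\pi_{\min}^2)$, which ruins the exponent when $\pi_{\min}$ is small. The fix (and what the paper does) is to keep the common factor $\gamma:=1/\pi_1+1/\pi_2$ in \emph{both} terms: the drift is $-\lambda\varepsilon_t\gamma\le-\lambda\varepsilon\gamma$, and the key second-moment computation is
\[
\E[(\Delta\delta)^2]=(\pi_1-\varepsilon_t)\tfrac{1}{\pi_1^2}+(\pi_2+\varepsilon_t)\tfrac{1}{\pi_2^2}=\gamma+\varepsilon_t\bigl(\tfrac{1}{\pi_2^2}-\tfrac{1}{\pi_1^2}\bigr)=O(\gamma),
\]
not $O(1/\pi_{\min}^2)$. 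Once both terms carry the same $\gamma$, the comparison $\lambda\varepsilon\gamma$ vs.\ $\lambda^2\gamma$ yields $\lambda=\varepsilon/8$ directly, and $\E[\Phi(t)]\le 2$ follows by the geometric recursion you described.
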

\begin{proof}
We define the potential function  
$\Phi(t) =  \cosh (\alpha \delta(t))$, where  $\alpha = \varepsilon/8$.
So $\Phi(0) =1$ initially at $t=0$.  

Fix a time $t$ and let $\Delta \Phi = \Phi(t+1) - \Phi(t)$ and denote $\delta=\delta(t)$ and  $\Delta \delta = \delta(t+1)  - \delta(t)$.
Then, by Taylor expansion, and using $(d/dx) \cosh(x) = \sinh(x)$ and $(d/dx) \sinh x = \cosh x$,  
\[\Delta \Phi = \sinh (\alpha \delta) \Big(\alpha \Delta \delta + (\alpha \Delta \delta)^3/3! + \ldots\Big)  + \cosh (\alpha \delta)  \Big(\alpha (\Delta \delta)^2/2! + (\alpha \Delta \delta)^4/4! + \ldots\Big). \]
Let $\gamma =1/\pi_1  + 1/\pi_2$. As $\varepsilon \leq \max(\pi_1,\pi_2)/2$ we have that $\gamma \varepsilon \leq 1$. As either $\ell_1^t$ or $\ell_2^t$ rises by $1$, we have $|\Delta \delta| \leq \gamma$, and hence $|\alpha \Delta \delta| \leq 1/2$.
Bounding $|(\alpha \Delta \delta)^i | \leq (\alpha \Delta \delta)^2 2^{-i+2}$ for $i\geq 2$, and using that $ \cosh x -1 \leq |\sinh(x)| \leq \cosh x$ for all $x$,
\[  \Delta \Phi \leq \sinh (\alpha \delta) (\alpha \Delta \delta) + \cosh(\alpha \delta) (\alpha \Delta \delta)^2 .\] 
We now bound $\E[\Delta \delta]$ and $\E[(\Delta \delta)^2]$. For $\delta>0$,
\[\E (\Delta \delta)  =  (\pi_1 - \varepsilon_t)\frac{1}{\pi_1} - (\pi_2 + \varepsilon_t)\frac{1}{\pi_2}   =  -\varepsilon_t \left(\frac{1}{\pi_1} + \frac{1}{\pi_2}\right) = -\varepsilon_t\gamma \le -\varepsilon \gamma,\]
while for $\delta< 0$, we have $\E[\Delta \delta] = \varepsilon_t \gamma\ge \varepsilon \gamma$. For $\delta=0$ we have $\E[\Delta \delta]=0$.

For $\delta>0$, we have
\[\E[(\Delta \delta)^2] =  (\pi_1 - \varepsilon_t) \left(\frac{1}{\pi_1}\right)^2 + (\pi_2+ \varepsilon_t)\left(\frac{1}{\pi_2}\right)^2 = \gamma  + \varepsilon_t(-1/\pi_1^2 + 1/\pi_2^2) \leq \gamma + \varepsilon_t \gamma^2 \leq 2 \gamma,\]
where the last step uses the fact that $\varepsilon_t \gamma \leq 1$. The same bound also holds for $\delta< 0$ and for $\delta=0$.

This gives
\begin{eqnarray*} \E[\Delta \Phi(t)\ |\ \Phi(t)] & \leq &  \alpha \sinh (\alpha \delta) \E[\Delta \delta] + 2 \alpha^2 \cosh(\alpha \delta) \E[(\Delta \delta)^2] \\
& \leq  &  - \alpha |\sinh (\alpha \delta) | \varepsilon \gamma + 4 \alpha^2 \gamma \cosh(\alpha \delta) \\
& \leq & -\alpha \varepsilon \gamma (\Phi(t)-1) + \alpha \varepsilon \gamma \Phi(t)/2 = -\alpha \varepsilon \gamma (\Phi(t) -2) /2.
\end{eqnarray*}
Thus, taking expectation and using $\Delta(\Phi(t))=\Phi(t+1)-\Phi(t)$ we obtain
\[\E[\Phi(t+1)]\le \alpha \varepsilon \gamma +(1-\alpha \varepsilon \gamma/2)\E[\Phi(t)].\]
Taking induction over $t$ and recalling that $\Phi(0)=1$ we obtain 
\[\E[\Phi(t)]\le 2\big(1-(1-\tfrac{\alpha \varepsilon\gamma}{2})^t\big)+(1-\tfrac{\alpha \varepsilon\gamma}{2})^t\le 2.\]
The result now follows by applying Markov's inequality to $\Phi(t)$.
\end{proof}

\paragraph{Proof of Lemma \ref{lem:main}.}
We apply the set up above to bound $\left|\ovl(S_{\ell(i)})-\ovl(S_{r(i)})\right|$. Letting bin 1 correspond to $S_{\ell(i)}$ and bin 2 to $S_{r(i)}$, we restrict ourselves to the time steps when the ball is allocated to some vertex in $S_i$. Let $\pi_1 = |S_{\ell(i)}|/|S_i|$ and $\pi_2 = |S_{r(i)}|/|S_i|$ so that $\pi_1+\pi_2=1$. 

By definition of $q=q_t$, the probability of allocating the ball to $S_{\ell(i)}$ conditioned on it being allocated to $S_i$ is $q(S_{\ell(i)})/q(S_i)$, and similarly $q(S_{r(i)})/q(S_i)$ for $S_{r(i)}$. 

Let us write $\pi_1 -\varepsilon_t = q(S_{\ell(i)})/q(S_i)$ and $\pi_2 + \varepsilon_t = q(S_{r(i)})/q(S_i)$.  Dividing the first and second equations by $\pi_1$ and $\pi_2$ and subtracting the first from the second give that 
\[ \varepsilon_t \left(\frac1{\pi_1}+\frac1{\pi_2}\right)  = \frac{ q(S_{r(i)})}{ \pi_2 q(S_i)}  - \frac{ q(S_{\ell(i)})}{\pi_1 q(S_i)} = \frac{|S_i|}{q(S_i)} \left( \frac{ q(S_{r(i)})}{ |S_{r(i)}|} -   \frac {q(S_{\ell(i)})}{ |S_{\ell(i)}|}    \right),  \]
which by Lemma~\ref{cor:qsi-normalized-diff}  equals
\[\frac{|S_i|}{q(S_i)} \frac{\beta k Q_L(i)}{2m} \left(\frac{1}{|S_{\ell(i)}|} +  \frac{1}{|S_{r(i)}|} \right) =  \frac{1}{q(S_i)} \frac{\beta k Q_L(i)}{2m} \left(\frac{1}{\pi_1} +  \frac{1}{\pi_2} \right).\]
This yields  $\varepsilon_t =  \beta k Q_L(i)/ q(S_i)$. 

Since $Q_L(i) \in \{-1,1\}$ and $q(S_i) \leq 2|S_i|/n$ (as $q(v)\leq d/m =2/n$ for any vertex $v$), we deduce that
$|\varepsilon_t| \geq \beta k n/(2m|S_i|)$.
We thus define 
\begin{equation}\label{eq:bdef}
\varepsilon := \frac{\beta k n}{2m |S_i|}=\frac{k}{8d \alpha_G |S_i|},\end{equation}
so that $\varepsilon \leq |\varepsilon_t|$.
Applying Lemma \ref{lem:2pt}, we obtain
\[ \Pr\left[ \left| \frac{L(S_{\ell(i)})}{\pi_1} - \frac{L(S_{r(i)})}{\pi_2}  \right| \geq x/\varepsilon \right] = O\big(\exp(-x/8)\big).\]
As $\pi_1 = |S_{\ell(i)}|/|S_i|,  \pi_2 = |S_{r(i)}|/|S_i|$, and using the value of $\varepsilon$ in  \eqref{eq:bdef}, this becomes
\[  \Pr\left[ \left| \ovl(S_{\ell(i)}) - \ovl(S_{r(i)})  \right| \geq  8x(d/k) \alpha_G\right] = O\big(\exp(-x/8)\big),\]
which implies the lemma. \qed

\subsection{Implementation}
\label{s:implement}
To implement the allocation strategy, the algorithm can maintain a dynamic table of size $O(n)$ containing the load on each set $S$  in the decomposition.
At any time, upon an edge request $e=(u,v)$, the algorithm samples $i \in I_T \cup \{\emptyset\}$ with the probability $p_e(i)$ and looks up the two table entries containing loads on $S_{\ell(i)}$ and $S_{r(i)}$. When a ball is assigned to some vertex $v$
only the $O(\log n)$ entries for $S$ with $v \in S$ are updated. 

\begin{remark} In fact, the number of updates to the table per time step can be reduced to $O(1)$. 
In the distributed implementation part below we show that the bound in Theorem \ref{thm:main} holds even if at any time the entry for the total load of any set $S$ is not very accurate and can differ from the actual load on $S$ by $O(|S| \log n)$.
We can use this slack to let each vertex $v$ make a batch update to the entry for each $S$  with $v \in S$ only every $O(n \log n)$ steps.
As each vertex lies in $O(\log n)$ sets, the total number of batch updates is $O(n \log n)$ per $O(n \log n)$ steps, implying $O(1)$ amortized update time per allocation step.
\end{remark}

Finally, storing the vectors $p_e$ and $\sigma_e$ does not require much memory either, as 
for each edge $e$, the vectors $p_e$ and $\sigma(e)$ have support $O(\log n)$. This follows by the property of R\"acke trees and the flow templates that an edge $e$ is used in $f_{uv}$ only if both endpoints of $e$ lie in $S_{a(u,v)}$ and $S_{a(u,v)}$ where $a(u,v)$ is the least common ancestor of $u$ and $v$ in $T$.
As $T$ has depth $O(\log n)$, for each edge $e$, there are only $O(\log n)$ relevant sets $S_i$ with $p_e(i)\neq 0$. So the total memory used to store these (static) vectors is $O(m \log n)$.

\paragraph{Distributed implementation.}
The implementation above assumed  that each vertex had read/write access to a centralized table containing the loads $L(S)$.
We now consider a distributed setting where 
vertex maintains its own local table with possibly {\em stale} entries $L(S)$ and 
give  a distributed implementation that requires only $O(1)$ amortized messages per allocation, with $O(\log n)$ bits per message. The key underlying observation is  that allocation strategy is quite robust and does not require exact knowledge of the load $L(S)$ on the sets $S$ in the decomposition. In particular, the bound in Theorem \ref{thm:main} also holds when $L(S)$ is inaccurate up to $O(|S| \log n)$ for each set $S$.

More precisely, suppose for any set $S_i$, the allocation strategy is guaranteed to produces the balancing drift in the correct direction only when $|\overline{L}(S_{\ell(i)}) - \overline{L}(S_{r(i)}| \geq c \log n$ and otherwise the direction can be arbitrary, i.e.,~if $|\overline{L}(S_{\ell(i)}) - \overline{L}(S_{r(i)}|  \leq \log n$, then the sign $Q_L(i)$ can be arbitrary.
This affects the gap  $|\overline{L}(S_{\ell(i)}) - \overline{L}(S_{r(i)}|$ in the 2-point concentration argument by only an additive $c \log n$ so that we now have 
\[ \Pr\big[|\overline{L}(S_{\ell(i)}) - \overline{L}(S_{r(i)})| \geq  8 x (d/k)\alpha_G + c \log n \big] = O\big(\exp(-x/8)\big).\]
As $(d/k)\alpha_G \geq 1$ this gives a similar tail bound as the original allocation strategy whenever $x  = \Omega(\log n)$.

So
each vertex can maintain its own
stale estimate of $L(S)$ with error up to $O(|S|\log n)$.
Moreover,
as any set $S$ has at most $2|S|/n$ probability of being assigned a ball at ant time, w.h.p., it suffices for  a vertex needs to update its load entry for $S$ in only $O(n \log n)$ steps.
Let us focus on a fixed set $S$ in the decomposition.
The R\"acke decomposition ensures that the induced graph $G[S]$ on each such set $S$ is connected.
So after every interval of $O(n \log n)$ time steps, the vertices in $S$ can aggregate the information on how many balls in total were assigned to $S$ during that interval. As $S$ is connected, this  can 
be implemented directly using $O(|S|)$ messages of length $O(\log n)$ in total.

As the sum of the sizes of all the sets at a given level in the decomposition tree $T$ is at most $n$ and as $T$ has depth $O(\log n)$, the total number of messages sent during an interval of size $O(n \log n)$ is $O(n \log n)$. This implies an amortized communication overhead of $O(1)$ message per time step.

\section{Lower Bounds}
\label{sec:lb}
We describe various simple but instructive lower bounds. First we prove Theorem \ref{thm:lb} in Section \ref{sec:lb1}. Then, in Section \ref{sec:lb2} we show an $\Omega(\log(n)/\log \log n)$ lower bound on the upper gap for bounded degree graphs. In particular this implies that on bounded degree expanders, the upper gap is roughly of the same order as  the gap, in contrast with the complete graph where the upper gap is $O(\log \log n)$ and the gap is $\Omega(\log n)$. 

We also remark that \cite{PTW15} showed that for complete graphs, under the $1+\beta$ choice model for a fixed $\beta<1$, the upper gap is $\Omega(\log n)$ and hence similar to the gap. Again this is in contrast to $\beta=1$ (the 2-choice model) where the upper gap  $O(\log \log n)$ and gap is $\Omega(\log n)$. 

\subsection{Proof of Theorem \ref{thm:lb}}
\label{sec:lb1}
We show for any $d$-regular $k$ edge-connected graph $G$, under any allocation strategy, the gap is $\Omega(d/k +\log n)$  with at least constant probability.

The $\Omega(\log n)$ bound follows from the following folklore argument. Fix any time $t$ and 
consider any interval $I$ of $O(n \log n)$ steps just before $t$. For a fixed vertex, at any time step, the probability that some edge incident to it is chosen is $d/m = 2/n$. Hence, by standard coupon-collector argument, with $\Omega(1)$ probability, there is some vertex $v$ for which no incident edge is chosen during $I$. So during $I$, the load of $v$ cannot change under any strategy, while the average load increases by $\log n$. 

Hence, to prove Theorem \ref{thm:lb}, it suffices to show the following.
\begin{lemma}
Let $G$ be any $d$-regular, $k$ edge-connected graph. Then for any allocation strategy, and at any time $t$ the gap is $\Omega(d/k)$ with constant probability.
\end{lemma} 
\begin{proof}
Let $C=(S,\overline{S})$ be some minimum cut in $G$, and let $S$ be the larger side with $|S|\geq n/2$.
Fix a time $t$ and 
consider some time interval $I$ of length $T$ just before $t$. We specify $T$ later. Let $Y$ be the number of edges requested with both endpoints in $S$.
As an edge is requested uniformly at random at each time, and as there are $(d|S|-k)/2$ such edges, we obtain
\[ \E[Y] = (d|S|-k)/2 \cdot T/m =  (|S|-k/d)T/n.\]
As $k \leq d$ and  $|S|\geq n/2$, this at least $T/3$, provided that $n\geq 6$.

By standard estimates, with at least constant positive probability, we have $Y \geq  \E[Y] + \sqrt{T}$.
Conditioned on this event, the average load on vertices in $S$ during $I$ exceeds the stationary load of $T/n$ by at least
\[  \frac{Y}{|S|} - \frac{T}{n} \geq  \frac{\E[Y] + \sqrt{T}}{|S|} - \frac{T}{n} =   \frac{(|S|d-k)T}{|S|dn} + \frac{\sqrt{T}}{|S|} -  \frac{T}{n} = \frac{\sqrt{T}}{|S|}  - \frac{k T}{nd|S|}.\]
Choosing $T = cn^2d^2/k^2$ for small enough $c$ and as $|S|\geq n/2$, this is
$\Omega( d/k)$.
\end{proof}
\subsection{Upper gap for bounded degree graphs}
\label{sec:lb2}
Next we show that even for bounded degree expanders, under the greedy strategy the maximum load at time $t$ is typically $t/n + \tilde{\Omega}(\log n)$. This is unlike for complete graphs where the upper gap is $O(\log \log n)$.
\begin{lemma}
For any $d$-regular graph $G=(V,E)$ with $d=O(1)$, at any time $t$ and for any allocation strategy, with constant probability the maximum vertex load is $t/n + \Omega(\log n/\log \log n)$. 
\end{lemma}
\begin{proof} Fix a time $t$, and consider the interval $I$ of length $|E|=dn/2$ just before $t$. Let $t'=t-|I|$ be time at the beginning of $I$, and let $a = t'/n$ denote the average load at $t'$. Let $s=\log n/(4\log\log n)$.
We will show that with constant probability, either $t$ or $t'$ has upper gap $\tilde{\Omega}(\log n)$.

If some vertex has load $a+s/4$ at time $t'$, then we are already done. Otherwise, by an averaging argument, there can be at most $n/4$ vertices with load at most $ a-s$ at time $t'$. Call an edge {\em bad} if it is incident to such a vertex, and let $B$ be the set of bad edges.
As $G$ is $d$-regular, $|B| \leq d(n/4) \leq |E|/2$.

As $|I|=|E|$, and each edge is requested uniformly and independently during $I$, with constant probability, some edge $e\in E\setminus B$ will be requested at least $4s$ times. If this event occurs, then for any allocation strategy, the load on some endpoint of $e$ increases by at least $2s$ during $I$, and hence becomes at least $a-s+2s=a+s$ at time $t$. On the other hand, the average load only increases by $|I|/n=d/2 = O(1)$, to become $a+d/2$ at time $t$. This implies the result.
\end{proof}

\subsection*{Concluding Remarks}
Our result gives, in particular, an allocation strategy with asymptotically polylogarithmic gap for any $d$-regular $d$-connected graph. One may wonder whether this strategy gives any improvement in practical settings where the number of bins is not too large. To show that it indeed does, we conclude the paper with with a simulation result of our strategy for cycles, compared against the asymptotic behavior of the greedy strategy (Figure~\ref{fig:Flow Gap}).

\begin{figure}[ht!]
    \centering
    \includegraphics[scale=0.5]{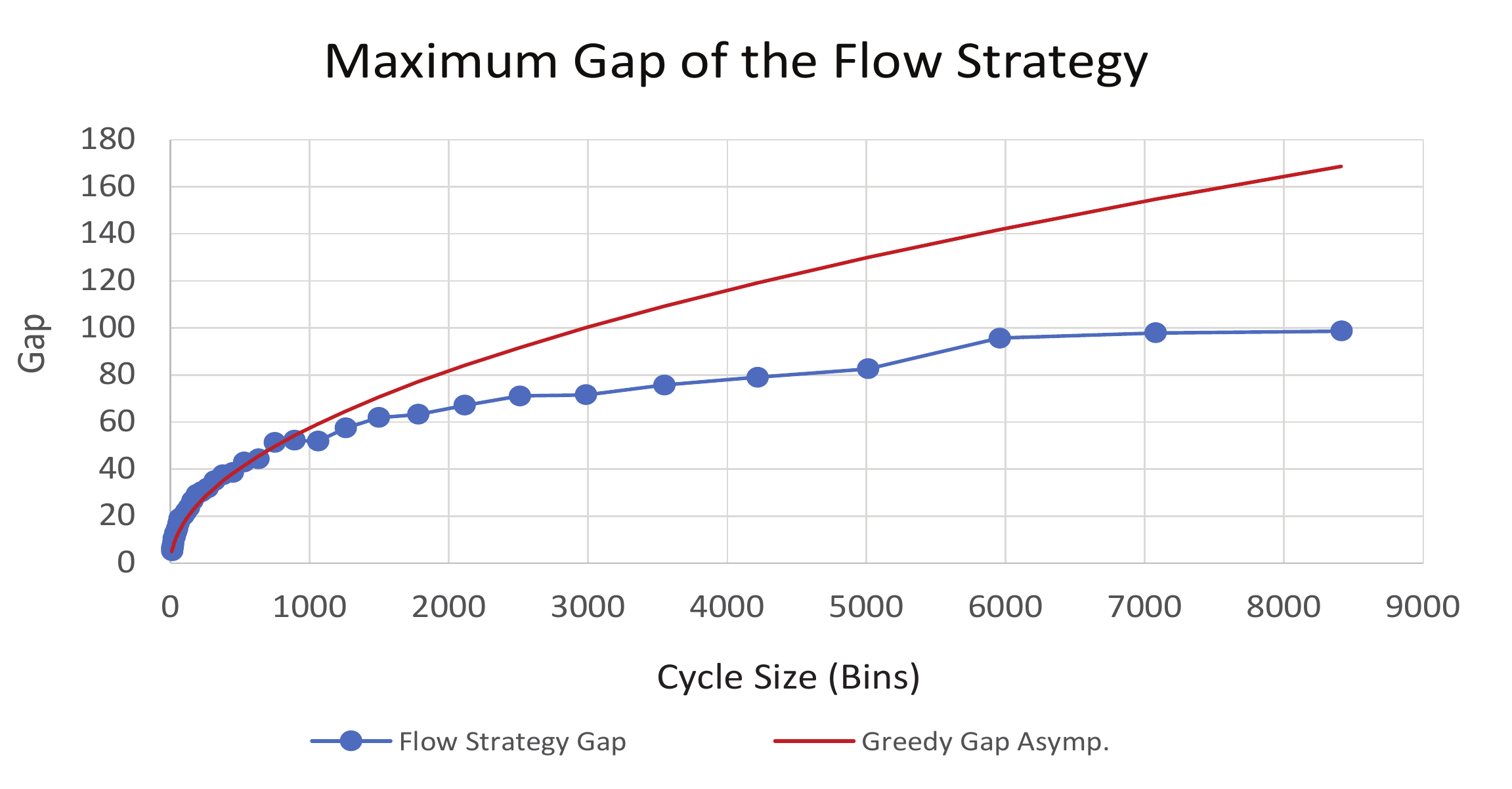}
    \caption{The gap for our flow algorithm on cycles for the best possible R\"{a}cke tree, averaged over 32 runs of $n^{2.5}$ balls for cycles of sizes $n=10$ to $8400$.   
    }
    \label{fig:Flow Gap}
\end{figure}

\section*{Acknowledgments}
We  thank  Guy Bensky and Shlomo Ron for their coding assistance in making our simulations.
\small{
\bibliographystyle{alpha}
\bibliography{ref}
}
\end{document}\documentclass{article}[11pt]

\usepackage[fencedCode,inlineFootnotes,citations,definitionLists,hashEnumerators,smartEllipses,hybrid]{markdown}
\usepackage[utf8]{inputenc}
\usepackage[T1]{fontenc}
\usepackage[scr=txupr]{mathalfa}
\usepackage{xcolor}
\usepackage[british]{babel}
\usepackage[autostyle]{csquotes}
\usepackage{mathtools}
\usepackage{amsmath, amsthm,amssymb,amstext,amsbsy} 
\usepackage{tabularx}
\usepackage{thmtools,thm-restate}
\usepackage{fullpage}
\usepackage{todonotes}[color=white]
\usepackage{enumitem}
\usepackage{caption}
\captionsetup{width=1.2\linewidth} 
\setlist[description]{leftmargin=15pt,labelindent=15pt}

\usepackage[bookmarks=false]{hyperref}  
\definecolor{darkgreen}{rgb}{0,0.5,0}
\hypersetup{linktocpage=true,colorlinks=true,citecolor=red,linkcolor=darkgreen}

\makeatletter
 \addtolength{\partopsep}{-2mm}
 \setlength{\parskip}{5pt plus 1pt}
 \addtolength{\abovedisplayskip}{-3mm}
 \addtolength{\textheight}{35pt}
\makeatother
\usepackage{setspace}



\newif\ifnotes\notestrue

\ifnotes
 \usepackage{color}
 \definecolor{mygrey}{gray}{0.50}
 \newcommand{\notename}[2]{{\textcolor{red}{\footnotesize{\bf (#1:} {#2}{\bf ) }}}}
 \newcommand{\noteswarning}{{\begin{center} {\Large WARNING: NOTES ON}\end{center}}}
 
\else
 
 \newcommand{\notename}[2]{{}}
 \newcommand{\noteswarning}{{}}
\fi

\newcommand{\nnote}[1]{{\color{red}\notename{Nikhil}{#1}}}
\newcommand{\onote}[1]{{\color{blue}\notename{Ohad}{#1}}}

\newtheorem{theorem}{Theorem}[section]
\newtheorem{claim}[theorem]{Claim}
\newtheorem{propos}[theorem]{Proposition}
\newtheorem{lemma}[theorem]{Lemma}
\newtheorem{corollary}[theorem]{Corollary}
\newtheorem{conjecture}[theorem]{Conjecture}
\newtheorem{remark}[theorem]{Remark}
\newtheorem{definition}[theorem]{Definition}
\newtheorem{obs}{Observation}
\newtheorem{thm}{Theorem}[section]

\newcommand{\expref}[2]{{\texorpdfstring{\hyperref[#2]{#1~\ref{#2}}}{#1~\ref{#2}}}} 
\newcommand{\secref}[1]{\expref{Section}{#1}}
\newcommand{\thmref}[1]{\expref{Theorem}{#1}}
\newcommand{\clmref}[1]{\expref{Claim}{#1}}
\newcommand{\tref}[1]{\expref{Theorem}{#1}}
\newcommand{\appref}[1]{\expref{Appendix}{#1}}
\newcommand{\lref}[1]{\expref{Lemma}{#1}}
\newcommand{\corref}[1]{\expref{Corollary}{#1}}
\newcommand{\conjref}[1]{\expref{Conjecture}{#1}}
\newcommand{\pref}[1]{\expref{Proposition}{#1}}
\newcommand{\figref}[1]{\expref{Figure}{#1}}

\newcommand{\boldp}{\ensuremath{\mathsf{p}}}

\newcommand{\ch}{\mathcal{C}}
\newcommand\defeq{\stackrel{\mathclap{\small\mbox{def}}}{=}}
\newcommand{\RZ}{\mathbb{Z}}
\newcommand{\R}{\mathbb{R}}
\newcommand{\E}{\mathbb{E}}
\newcommand{\N}{\mathbb{N}}
\newcommand{\FS}{\mathfrak{S}}
\newcommand{\CV}{\mathcal{V}}
\newcommand{\CD}{\mathcal{D}}
\newcommand{\ip}[2]{\langle #1,#2\rangle}
\newcommand{\sN}{\mathsf{N}}
\newcommand{\BR}{\mathbb{R}}
\newcommand{\BE}{\mathbb{E}}
\newcommand{\BP}{\mathbb{P}}
\newcommand{\CP}{\mathfrak{P}}
\newcommand{\bits}{\{0,1\}}
\newcommand{\pmone}{\{\pm1\}}
\newcommand{\ind}{\mathbf{1}}
\newcommand{\RN}{\mathbb{N}}
\newcommand{\Psijk}[1]{\Psi_{\mathbf{#1}}}
\newcommand{\CH}{\mathcal{H}}
\newcommand{\calP}{\ensuremath{\mathcal{P}}}
\newcommand{\CI}{\mathcal{I}}
\newcommand{\CE}{\mathcal{E}}
\newcommand{\bx}{\mathbf{x}}
\newcommand*\bh{\ensuremath{\boldsymbol{h}}}
\newcommand*\bj{\ensuremath{\boldsymbol{j}}}
\newcommand*\bk{\ensuremath{\boldsymbol{k}}}
\newcommand*\bl{\ensuremath{\boldsymbol\ell}}
\newcommand*\bm{\ensuremath{\boldsymbol{m}}}
\newcommand{\fhat}{\widehat{f}}
\newcommand{\dhat}{\widehat{d}}
\newcommand{\polylog}{\mathrm{polylog}}
\newcommand{\poly}{\mathrm{poly}}
\newcommand{\infnorm}[1]{\left\| #1 \right\|_{\infty}}
\newcommand{\eps}{\epsilon}
\newcommand{\disc}{\mathsf{disc}}
\newcommand{\sign}{\mathsf{sign}}
\newcommand{\CB}{\mathcal{B}}
\newcommand{\CL}{\mathcal{L}}
\newcommand{\CS}{\mathcal{S}}
\newcommand{\CN}{\mathcal{N}}
\newcommand{\CZ}{\mathcal{Z}}
\newcommand{\sfp}{\mathsf{p}}
\newcommand{\comp}[1]{\overline{#1}}
\newcommand{\err}{\mathsf{err}}
\newcommand{\CG}{\mathcal G}
\newcommand{\CT}{T}
\newcommand{\calT}{\mathcal{T}}
\newcommand{\smin}{\eps_\mathsf{min}}
\newcommand{\smax}{\eps_\mathsf{max}}
\newcommand{\cov}{\mathbf{\Sigma}}
\newcommand{\error}{\mathsf{err}}
\global\long\def\norm#1{\left\Vert #1\right\Vert }
\newcommand{\leaf}{l}
\newcommand{\bK}{\overline{K}}
\newcommand{\BS}{\mathbb{S}}
\newcommand{\p}{\mathbb{P}}
\newcommand{\sfq}{\mathsf{q}}
\newcommand{\sfu}{\mathsf{u}}
\newcommand{\tM}{M^+}
\newcommand{\op}{\mathsf{op}}
\newcommand{\diam}{\mathsf{diam}}
\newcommand{\im}{\mathrm{im}}
\newcommand{\SE}{\mathscr{S}}
\newcommand{\SA}{\mathscr{A}}
\newcommand{\SB}{\mathscr{B}}
\newcommand{\SG}{\mathscr{G}}
\newcommand{\ST}{\mathscr{T}}
\newcommand{\Tr}{\mathrm{Tr}}
\newcommand{\SZ}{\mathscr{Z}}
\newcommand{\SD}{\mathscr{D}}
\newcommand{\SL}{\mathscr{L}}
\newcommand{\arbnorm}[1]{\left\|#1\right\|_*}
\newcommand{\qlt}{q^l_{j,k}}
\newcommand{\qrt}{q^r_{j,k}}
\newcommand{\dlt}{d^l_{j,k}}
\newcommand{\drt}{d^r_{j,k}}
\newcommand{\SM}{\mathscr{M}}
\newcommand{\SN}{\mathscr{N}}
\newcommand{\ovl}{\overline{L}}
\newcommand{\putat}[3]{\begin{picture}(0,0)(0,0)\put(#1,#2){#3}\end{picture}}

\DeclareMathOperator{\nl}{nl}
\DeclareMathOperator{\bin}{bin}
\DeclareMathOperator{\sgn}{sgn}
\DeclareMathOperator{\total}{Total}
\DeclareMathOperator{\child}{child}
\DeclareMathOperator{\sib}{sib}
\DeclareMathOperator{\gap}{gap}
\DeclareMathOperator{\conge}{cong}

\let\oldabstract\abstract
\let\oldendabstract\endabstract
\makeatletter
\renewenvironment{abstract}
{\renewenvironment{quotation}%
               {\list{}{\addtolength{\leftmargin}{1em} 
                        \listparindent 1.5em%
                        \itemindent    \listparindent%
                        \rightmargin   \leftmargin%
                        \parsep        \z@ \@plus\p@}%
                \item\relax}%
               {\endlist}%
\oldabstract}
{\oldendabstract}
\makeatother


\title{Well-Balanced Allocation on General Graphs}
\author{Nikhil Bansal\thanks{CWI Amsterdam and TU Eindhoven, \texttt{N.Bansal@cwi.nl}. Supported by the  NWO VICI grant 639.023.812.} \and {Ohad Feldheim\thanks{Hebrew University of Jerusalem Israel, \texttt{ohad.feldheim@mail.huji.ac.il}. Supported by ISF grant 1327/19.}}
}

\date{}

\begin{document}

\maketitle

\begin{abstract}

 We study the graphical generalization of the 2-choice balls-into-bins process, where rather than choosing any two random bins, the bins correspond to vertices of an underlying graph, and only the bins connected by an edge can be chosen.
 
    For any $k(n)$ edge-connected, $d(n)$-regular graph on $n$ vertices and any number of balls, we give an allocation strategy 
 which guarantees that the maximum gap between the bin loads is $O((d/k) \log^4\hspace{-1pt}n \log \log n)$, with high probability. We further show that the dependence on $k$ is tight and give an $\Omega((d/k) + \log n)$ lower bound on the gap achievable by any allocation strategy, for any graph $G$. In particular, our result gives polylogarithmic bounds for natural graphs such as cycles and tori, where the classical greedy allocation appears to result in a polynomial gap. Previously such a bound was known only for graphs with good expansion.

    The construction is based on defining certain orthogonal flows on cut-based R\"{a}cke decomposition of graphs. The allocation algorithm itself, however, is simple to implement and takes only $O(\log(n))$ time per allocation, and can be viewed as a global version of the greedy strategy that compares average load on sets of vertices, rather than on individual vertices.
    
    \smallskip
\noindent \textbf{Keywords.} Load-balancing, Balls-into-bins processes, graphical two-choice, R\'{a}cke decomposition.
    
\end{abstract}

\section{Introduction} 

Randomized balls-into-bins models serve as useful abstractions for various problems arising in hashing, load balancing and resource allocation in parallel and distributed systems and have been extensively studied in the areas of probability, economics and algorithms (see e.g.,~\cite{RS98,DR96,AK14}).
The balls typically represent tasks or items, that need to be allocated to resources that are modeled by the bins, and the goal is to minimize either the maximum load or the gap between the maximum and minimum load as much as possible.

The models differ by what kind of control is available to the algorithm over the allocation process.
In the classical {\em single-choice} model, the algorithm has no control and each ball is placed in a bin chosen uniformly at random. For $m$ balls and $n$ bins, it is well known that
for $m=n$, the heaviest bin has load $(1+o(1)) \ln n/\ln \ln n$ with high probability (w.h.p.).
For $m\geq n \log n$, the bins have load in the range  $m/n \pm \Theta(\sqrt{(m \log n)/n})$ i.e.,~deviating from the average load of $m/n$ by $\Theta(\sqrt{(m \log n)/n})$. 

Perhaps the simplest and most well-studied controlled variant of this model is the \emph{2-choice} model. In this model, at each step the algorithm is given two uniformly chosen bins into one of which it must allocate the ball. This modification, which may appear minor, leads to 
substantial improvements. In a seminal result, Azar, Broder, Karlin and Upfal \cite{ABKU94} showed that if a ball is placed in the least loaded of $d\geq 2$ uniformly sampled bins, then for $m=O(n)$, the maximum load reduces to $\ln \ln n/\ln d + \Theta(m/n)$. They also establish that, asymptotically, this {\em greedy} allocation strategy is optimal for this model.
These results were extended by Berenbrink, Czumaj, Steger and V{\"o}cking~\cite{BCSV06} to arbitrary $m$, who showed that the maximum load is
$m/n + \ln \ln n/\ln d + O(1)$ with probability $1-1/\text{poly}(n)$. It is worth noting that even for $d=2$ choices, the excess load over the average does not increase with $m$, unlike for the case of $d=1$.

\paragraph{Graphical process.}
In many natural settings, there are restrictions on which pairs of bins can be queried or where the ball can be placed. An elegant generalization of the $2$-choice process, called the {\em graphical process}, was introduced by Kenthapadi and Panigrahy \cite{KP06}.
Here there is an underlying graph $G=(V,E)$ on $n=|V|$ vertices, and 
at each step, a uniformly random edge $e=(u,v)$ is chosen and the ball must be placed on one of the two endpoints of $e$. Notice that the $2$-choice process corresponds to $G=K_n$.
This motivates the following natural question:

{\em Given a graph $G$ what is the best load balance obtainable by a graphical two-choice allocation strategy?}

Extending the results for the classical $2$-choice process ($G=K_n$), 
\cite{KP06} showed that if $G$ is $n^{\epsilon}$-regular, then for $m=n$ balls, the greedy strategy, has  maximum load is $\ln \ln n + O(\log 1/\eps)$. An extension to hypergraphs was considered in \cite{Godfrey08}. 

The setting of arbitrary $m$, also the focus of our work, was considered by Peres, Talwar, Weider \cite{PTW15}. They investigated the greedy strategy,
and showed a gap of $\Theta(\log n)$ for regular expander graphs between the maximum and minimum bin load. 
This gap is the best possible for any strategy, up to constants.
More generally they showed that the gap is $O((\log n)/\beta)$ 
for any $d$-regular graph with edge-expansion\footnote{For any subset $S \subset V$ with $|S|\leq n/2$, $E(S,\overline{S}) \geq \beta d |S|$.} $\beta$.

\begin{remark}
Let us define  {\em upper gap} as the difference between the maximum load and the average load, and {\em gap} as the maximum difference between the bins loads. 
Both of these objectives have been studied extensively, and they can sometimes differ significantly, e.g.,~for $G=K_n$ where the gap is $\Theta(\log n)$ and the upper gap is $\Theta(\log \log n)$. However, this difference often disappears in the setting of general graphs, or more general choice models that we consider here. For example, for a constant degree expander $G$, the gap is $\Theta(\log n)$ while the upper gap is $\Omega(\log n/\log \log n)$. We discuss this further in Sections \ref{sec:related} and \ref{sec:lb}.
\end{remark}

\paragraph{Graphs with low expansion and limitations of the greedy strategy.} While the upper-bound of $O((\log n)/\beta)$ by Peres et al.,~implies a poly-logarithmic gap for well-expanding graphs, it gives only polynomially large $O(n \log n)$ and $O(n^{1/2} \log n)$ bounds for graphs such as cycles or two-dimensional grids, with low expansion.
The dependence on $\beta$ in the result of \cite{PTW15} is not tight
and they leave open the problem of getting the right bound, even for these simple classes of graphs.

While for certain non-expander graphs, such as high-dimensional balanced grids, it is expected that the greedy algorithm will result in  polylogarithmic gaps, and only the tools to prove it are missing, there are many graphs on which the greedy strategy appears to be inherently quite sub-optimal. An instructive example is that of a cycle, which has also been studied on its own. 
Here, the conjectured gap and upper gap estimates for the greedy strategy are $\Omega(\sqrt{n})$. Some evidence for this conjecture was given by Alistarh, Nadiradze and Sabour in \cite{ANS20}. They consider a model in which after allocating the ball to one of the vertices of the requested edge, the new load of the two vertices becomes the average of their load. Even in this model, which intuitively should have a lower gap than the standard greedy algorithm, they show an $\Omega(\sqrt{n})$ lower bound for the typical gap. We also provide supporting simulation results for this prediction in figure~\ref{fig:Greedy Gap}.

\begin{figure}[h!]
    \centering
    \includegraphics[scale=0.5]{greedy_alg_gap2.pdf}
    \caption{The gap for the greedy algorithm on cycles, averaged over 84 runs of $10^9$ balls for cycles of sizes from 10 to 1000. The dotted guide is the function $f(x)=1.85\sqrt{x}-1$, error margins for 95\% confidence are provided. The graph clearly shows the polynomial growth of the gap.
    }
    \label{fig:Greedy Gap}
\end{figure}

More generally, it has been conjectured that the load fluctuations under greedy 
are similar to the fluctuations in classical statistical mechanics models. In particular Peres (in private communication) suggested that, up to a $\log n$ factor the gap should be the same as that of a Gaussian free field on $G$. Both these conjectures and simulation results suggest that on very poorly expanding graphs, such as cycles and unbalanced tori the imbalance in load should be polynomial in $n$.

This raises the natural question whether there exist other allocation strategies, ideally simple to implement, which achieve substantially better gaps on such graphs.

\subsection{Our Results}
\label{sec:results}
We give an allocation strategy that achieves the best possible gap, up to polylogarithmic factors, for the graphical process on any graph $G$. 
More formally, we show the following.
\begin{thm}\label{thm:main}
Let $G=(V,E)$ be any $k$ edge-connected, $d$-regular graph on $n$ vertices. There is an allocation strategy for the graphical process on $G$, that guarantees for any time $t\in\N$,  \[\gap_G(t) = O((d/k) \log^4 n \log \log n)\]
with probability at least $1-1/\text{poly}(n)$. Here $\gap_G(t)$ is the maximum difference in vertex loads at time $t$. 
\end{thm}
In addition, we show that this is the best possible, up to polylogarithmic factors, for every graph $G$. 
\begin{thm}
\label{thm:lb}
For any $d$-regular graph $G$ and for any allocation strategy for $G$, for any time $t$, with constant probability, $gap_G(t) = \Omega(d/k + \log n)$, where $k$ is the edge-connectivity of $G$.
\end{thm}

Theorem \ref{thm:main} implies an allocation strategy with gap $\text{polylog}(n)$ for cycles and grids, and more generally for any  graph $G$ with $k = \Omega(d/\text{polylog}(n))$.
 
In fact, our strategy is shown to have gap $O(\alpha_G (d/k) \log^2 n )$ where $\alpha_G$ is the congestion ratio for oblivious routing on $G$ based on a R\"{a}cke decomposition tree \cite{R02} (see Section \ref{sec:prel}).
The bound in Theorem \ref{thm:main} follows from a result of Harrelson, Hildrum and Rao~\cite{HHR03} who showed that  $\alpha_G = O(\log^2 n \log \log n)$ for any $G$.  

\paragraph{The Algorithm.}
 The allocation strategy is simple and incurs only $O(\log n)$ worst case running time per allocation. It can be viewed as a more {\em global} version of the greedy strategy, where instead of comparing the loads on two vertices, one
compares the average load on two random sets chosen from a fixed small collection.

More specifically, the algorithm maintains a table of size $O(n)$ with entries consisting of the average load on certain subsets $S$ of $V$.
When a random edge $e=(x,y)$ is requested, it picks two sets $S,S'$ according to a {\em fixed} distribution $P_e$ specifying probabilities $p_e(S,S')$ over pairs of these subsets. It then assigns the ball to $x$ or $y$ depending on which among $S$ and $S'$ has larger average load.
The sets $S$ form a balanced hierarchical decomposition of $V$, so when a ball is assigned to some vertex $v$, only the $O(\log n)$ entries for sets $S$ containing $v$ are updated.

\paragraph{Ideas and Techniques.} 
The main tools in obtaining Theorem \ref{thm:main} are \emph{binary tree decompositions} and \emph{orthogonal flows}. This method, inspired by discrepancy theory tools, defines pairs of sets on the graph in a hierarchical fashion 
and constructs the distributions $P_e$ using a mixture of probabilistic allocation strategies. Each of these strategies affects only the relative allocation probabilities of the sets in a pair, without affecting these for other pairs. For a more thorough overview of our method see Section~\ref{sec:overview}.

The lower bound in Theorem \ref{thm:lb} is based on a simple observation is that if we fix any minimum cut $(S,\overline{S})$ of size $k$, there are not enough edges crossing the cut to balance out the random load fluctuations that arise due to the requests for edges with both endpoints  in $S$.

Finally, we make a few remarks concerning the model and our results.

\begin{remark} The requirement that $G$ is regular is standard in the area, and is made so that the expected load under a random strategy is equal on all bins. The results extend to irregular graphs in a natural way where one measures the gap with loads normalized suitably by the degree. 
\end{remark}
\begin{remark} Our results do not require the full power of two choices. In the $1+\beta$ graphical choice model, at each step an edge is given only with probability $\beta$, and with probability $1-\beta$ the algorithm is given no choice and the ball is allocated to a random bin. Our bound on the gap extends directly to this model with factor $O(1/\beta)$ loss. More background on this model is provided in Section~\ref{sec:related} below.
\end{remark}
\begin{remark}
While better bounds are known for congestion in oblivious routing  \cite{AzarCFKR04, Racke08, AndersenFeige}, our method requires a single 
R\"acke decomposition tree as the demands we define for our flows depend on the tree itself.
\end{remark}

\subsection{Related work and models}
\label{sec:related}
The literature on ball-into-bins processes is extensive and 
it is impossible to cover even a small portion of the developments and applications.
The first appearance of a 2-choice type result was in \cite{KLM96} in the context of online hashing. Following the result of Azar et al.,~\cite{ABKU94} several variations of the model have been studied.
 Variations include models in which balls are eliminated over time \cite{Mthesis91,CFMMSU98} -- either by age or at random and parallel allocation of the balls with limited communication \cite{Stemann96,ACM98}. Many of the earlier results are surveyed in Mitzenmacher's Thesis \cite{Mthesis91} 
and in 
his survey with Richa and Sitaraman \cite{MRS01}. A more recent survey is due to Wieder \cite{Wieder}.

In his thesis  Mitzenmacher suggested the model of $1+\beta$ choice for $\beta<1$, where the algorithm is given two choices with probability $\beta$ and only one choice with probability $1-\beta$. His motivation for introducing this 
model stems from a problem in queuing theory. V\"{o}cking \cite{Vocking03} show that for $d$-choice, non-uniform choices can improve over the greedy algorithm of \cite{ABKU94}, resulting in maximum load of $\Theta(\log\log (n)/d)$ (cf.,~$\Theta(\log\log (n)/\log d)$). 

The heavily loaded case of the two-choice problem was first analyzed in  \cite{BCSV06} (see a neat and short proof by Talwar and Wieder \cite{TW14}). In \cite{PTW15}, Peres, Talwar and Wieder considered the $1+\beta$ choice model for complete graphs and showed that there both gap and upper gap are $\Theta((\log n)/\beta)$. The drift and potential methods introduced in this work 
allowed the authors to relate this result to the graphical case for expanders in \cite{PTW15} and inspired methods used here as well.

\subsection{Notation and Preliminaries}\label{sec:prel}
A graphical process is specified by some fixed underlying $d$-regular graph $G=(V,E)$. Henceforth we denote $n=|V|$ and $m=|E|$ (typically in balls-into-bins literature, $m$ is used for the number of balls, but we will think of the allocation process as indefinite, using $t$ for the index of the allocated ball).
At each time step $t=1,2,\ldots,$ an  edge $e=e_t=(u,v) \in E$ is chosen (\emph{requested}) uniformly at random, and a ball must be assigned to one of its endpoints $u$ or $v$. We often refer to the vertices of $G$ as bins.
A vertex $u$ has load $\ell$ at time $t$, if $\ell$ balls have been assigned to it after $t$ allocations.
Let $L^t:V \rightarrow \N$ denote the load vector at time $t$. 
We assume $L^0(u)=0$ for all $u \in V$ so that the total load $\|L^t\|_1 =t$ for each $t\in \N$. Hence, the average load at a vertex at time $t$ is $t/n$. 

An allocation strategy, upon the request $e=(u,v)$, decides whether to assign the ball to $u$ or to $v$ (possibly based on the entire history so far). The goal of the strategy is to minimize the gap, where we define the gap at time $t$ as 
\[ \gap(t)=\gap_G(t) = \max_u L^t(u)  - \min_u L^t(u).\]
As $ \max_u | L^t(u) - t/n| \leq \gap_G(t) \leq  2 \max_u |L^t(u)-t/n|$, sometimes we will work with $\max_u | L^t(u) - t/n|$ and we refer to this as the  maximum deviation from the average load.
For a subset $S \subset V$, we denote the total load on vertices in $S$ by $L^t(S) := \sum_{u \in S} L^t(u)$, and the average load by $\overline{L}^t(S) := L^t(S)/|S|$. 

\paragraph{Hierarchical cut-based decomposition.}
A {\em hierarchical decomposition} of $G$ is a recursive partition of the vertex set $V$ until each resulting set is a singleton vertex.
Such a decomposition is naturally viewed as a rooted tree $T=(V_T,E_T)$, where each node $i\in V_T$ corresponds to a subset $S_i \subset V $. The root $r$ of $T$ corresponds to $V$, the leaves to singleton sets $\{u\}$ for $u\in V$. 
For any node $i \in V_T$, its children $j_1,\ldots,j_k$ correspond 
to the sets $S_{j_1},\ldots,S_{j_k}$ obtained by partitioning $S_i$.

Removing an edge $(i,j)$ of $T$, where $j$ is a child of $i$, partitions the leaves of $T$ (or nodes of $G$) into $S_j$ and $V\setminus S_j$. In a {\em cut-based} decomposition, we associate an edge $(i,j)\in E_T$ with the cut $(S_j,V\setminus S_j)$ of $G$, and set its capacity 
$c_T(i,j) = C_G(S_j)$, the capacity of the cut $(S_j, V \setminus S_j)$ in $G$.

We will work with $G$ and its decomposition $T$. To avoid confusion, we use $u, v,x,y$ to index the vertices of $G$, and $i,j,j'$ to index the vertices of $T$. 
Sometimes we index the leaves of $T$ by $u,v$ to highlight the correspondence with vertices of $G$. 
We will refer to the vertices of $T$ as nodes.

\paragraph{
R\"{a}cke Trees and Oblivious Routing.}
Let $G=(V,E,c)$ be an undirected graph with edge capacities $c_e \geq 0$. 
An oblivious routing for $G$ specifies for each ordered pair of nodes $u,v \in V$, a  flow template $f_{uv}: V \times V \rightarrow \R$ on the edges, that describes how to send one unit of flow from $u$ to $v$. So each $f_{uv}$ satisfies
\[ f_{uv}(x,y) = -f_{uv}(y,x), \text{ for all } x,y \in V, \text{ and } \sum_y  f_{uv}(x,y) = {1}_{(x=u)} - 1_{(x=v)}   \text{  for all } x\in V.\] 
Let  $\vec{d} = (d(u,v))_{u,v}$ be any multicommodity demand vector with commodities $(u,v)\in V\times V$ and demands $d(u,v)\geq 0$. An oblivious routing of $\vec{d}$ on $G$ sends $d(u,v)$ units of each commodity $(u,v)$ according to $f_{uv}$.

In a breakthrough work, R\"{a}cke \cite{R02} showed that for any edge capacitated $G=(V,E,c)$, there exists a cut-based decomposition tree $R=(V_R,E_R,c_R)$, and flow templates $f_{uv}$ for each $u,v \in V$ satisfying the following remarkable property.
Consider any demand vector $\vec{d}$.
On $R$, as the $d(u,v)$ units are routed along the unique path from leaves $u$ to $v$, let  $g_{R,\vec{d}\hspace{1pt}}(i,j)=
\sum_{u \in S_j, v \notin S_j} d(u,v) + \sum_{u \notin S_j, v \in S_j}$, be the total flow across $(i,j) \in E_R$ and let 
$\conge(R,\vec{d}) = \max_{(i,j) \in E_R} g_{R,\vec{d}\hspace{1pt}}(i,j)/c_R(i,j)$
be the maximum edge-congestion on $R$.
Similarly, consider the oblivious $g_{G,\vec{d}}$ for $\vec{d}$ on $G$ and let  $\conge(G,\vec{d}) = \max_{e \in E} g_{G,\vec{d}\hspace{1pt}}(e)/c_e $ be the maximum edge-congestion on $G$.
Then for any $\vec{d}$,
\[ \conge(R,\vec{d}) \leq  \conge(G,\vec{d}) \leq \alpha_G\, \conge(R,\vec{d}).\]
We refer to $\alpha_G$ as the congestion ratio for oblivious routing on $G$. The best bound on $\alpha_G$ is $O(\log^2 n \log \log n)$ \cite{HHR03}, and  the corresponding $R$ and flow templates $f_{uv}$ can be found in polynomial time. Almost linear time constructions with slightly worse $\alpha_G$ are also known \cite{RST14}.
Below, we also use the fact that in both of these constructions, the tree $R$ satisfies $|S_j| \leq 3/4 |S_i|$ for each edge $(i,j)$ and that $R$ has depth $O(\log n)$.

\section{Overview}\label{overview}
Before describing the details, we first provide a high level overview of the idea and the algorithm.

Upon the arrival of an edge request $e=(u,v)$, by choosing whether to allocate the ball to $u$ or $v$, an allocation strategy can {\em bias} the expected load toward $u$ or $v$. 
As the vertex loads fluctuate below and above the average load over time, these biases must depend on the current load vector, and the goal is to design an allocation strategy that results in a self-regulating process that keeps the load deviations small.

\paragraph{Balancing average load on sets.} Let $G=(V,E)$ be the graph underlying the process.
Consider some {\em binary} hierarchical decomposition of $G$, represented by a binary tree $T = (V_T,E_T)$, with nodes $i \in V_T$ corresponding to subsets $S_i \in V$ and leaves --- to vertices in $G$.
Let $r$ be the root of $T$, and for a leaf $u$, consider the unique path $u=i_0, i_1, \ldots, i_h=r$ from $u$ to $r$ in $T$, so that $\{u\}=S_{i_0} \subset S_{i_1} \subset \ldots \subset S_{i_h}= V$. 
As $\overline{L}^t(r) = t/n$ and $\overline{L}^t(u)=L^t(u)$, the load deviation for $u$ at time $t$ is at most \[|L^t(u)-t/n |= \left|\sum_{j=1}^h \big(\overline{L}^t(S_{i_{j-1}})- \overline{L}^t(S_{i_j})\big) \right|\leq    \sum_{j=1}^h \left|\overline{L}^t(S_{i_{j-1}})- \overline{L}^t(S_{i_j})\right|,\]
and hence to control the load deviation for each $u \in V$, up to an $h=O(\log n)$ factor, it suffices to control the gap $|\overline{L}^t(S_{i_{j-1}})- \overline{L}^t(S_{i_j})|$ for each parent-child node pair.

Fix a non-leaf node $i\in V_T$, and let $\ell(i), r(i)$ denote its left and right children. To bound the average load gap between $S_i$ and its children, 
a simple computation shows that it suffices to bound $|\overline{L}(S_{\ell(i)}) - \overline{L}(S_{r(i)})|$. So a natural idea is to assign each non-leaf node $i$ the task of balancing $\overline{L}(S_{\ell(i)})$ and $\overline{L}(S_{\ell(i)})$. To do this, one can try to assign biases to edges to create a \emph{relative bias} from the most loaded among $S_{\ell(i)}$ and $S_{r(i)}$ to the least loaded.
Also the relative biases should be strong enough for each sibling pair, to get good bounds on the balance. 
In particular, if $k=\Omega(d)$, then to achieve the desired polylogarithmic gap in Theorem \ref{thm:main}, the total bias between every two siblings 
$S_{\ell(i)}$ and $S_{r(i)}$ must be $\Omega(d/\text{polylog}(n))$.

\begin{figure}[ht!]
    \centering
    \includegraphics[scale=0.38]{EdgeFlow.pdf}\\
    \putat{-140}{29}{\textcolor{darkgreen}{$S_2$}}
    \putat{-135}{208}{\textcolor{darkgreen}{$S_1$}}
    \putat{-3}{161}{\textcolor{darkgreen}{\scalebox{0.7}{$S_1$}}}
    \putat{131}{161}{\textcolor{darkgreen}{\scalebox{0.7}{$S_2$}}}
    \putat{-167}{188}{\textcolor{orange}{\scalebox{0.7}{$S_{11}$}}}
    \putat{-178}{159}{\textcolor{orange}{\scalebox{0.7}{$S_{12}$}}}
    \putat{-63}{124}{\textcolor{orange}{\scalebox{0.5}{$S_{11}$}}}
    \putat{-2}{124}{\textcolor{orange}{\scalebox{0.5}{$S_{12}$}}}\\[-17pt]
    \includegraphics[scale=0.38]{EdgeFlow2.pdf}\\
    \putat{-140}{29}{\textcolor{darkgreen}{$S_2$}}
    \putat{-135}{208}{\textcolor{darkgreen}{$S_1$}}
    \putat{-3}{161}{\textcolor{darkgreen}{\scalebox{0.7}{$S_1$}}}
    \putat{131}{161}{\textcolor{darkgreen}{\scalebox{0.7}{$S_2$}}}
    \putat{-167}{188}{\textcolor{orange}{\scalebox{0.7}{$S_{11}$}}}
    \putat{-178}{159}{\textcolor{orange}{\scalebox{0.7}{$S_{12}$}}}
    \putat{-63}{124}{\textcolor{orange}{\scalebox{0.5}{$S_{11}$}}}
    \putat{-2}{124}{\textcolor{orange}{\scalebox{0.5}{$S_{12}$}}}\\[-10pt]  
    
    \caption{On the \textbf{left}, two different edge biases on a graph $G$ are illustrated by red arrows. Each of these is aimed at creating a relative bias from $S_2$ with average load $\overline{L}(S_2)=\tfrac75$ to $S_1$ with  $\overline{L}(S_2)=\tfrac45$. 
    On the \textbf{right} are their impact on the graph's decomposition tree. Each node of the tree represents a vertex subset $S$ in the graph. Inside the node is the average load for the set (balls divided by bins). The horizontal dashed edges indicate the relative bias between two siblings. The edge from a $S$ to its parent is oriented by the total bias on that $S$. \textbf{Above} we see the non-orthogonal flow incurred by the edges the cross the green cut. Observe how these create an undesired relative bias from the lightly loaded $S_{11}$ to the heavily loaded $S_{12}$.
    \textbf{Below} we see an orthogonal flow which has no impact on the relative bias of any pair of sibling sets except $S_1$ and $S_2$.}
    \label{fig:edge flow}
\end{figure}

\paragraph{Orthogonal Multicommodity Flows.}   A key difficulty in implementing this idea is that the bias on an edge $e$ affects several sets $S_i$, 
(see Figure \ref{fig:edge flow}).
In this example, we would like to use edges between $S_1$ and $S_2$ to create a bias from $S_2$, which has higher average load, to $S_1$. However, this create a bias towards $S_{12}$ compared to $S_{11}$, even though $S_{12}$ already has higher load.
In general, it is not clear how to create edge biases to balance the relative bias {\em simultaneously} for all sibling pairs.

To get around this, we do two things. First the graph decomposition is constructed carefully, as we discuss later. Second, for each non-leaf node $i$ in $T$, we create a flow $F_i$ between vertices in $S_{\ell(i)}$ and $S_{r(i)}$ 
We can view the flow across an edge $e$ as the bias that we would like to assign to $e$ to balance $\overline{L}(S_{\ell(i)})$ and $\overline{L}(S_{r(i)})$. So the direction of the flow $F_i$ depends on which side has higher average load.

However, as the bias on $e$ affects each set $S_j$ that $e$ crosses,
we choose the demands between vertices in $S_{\ell}(i)$ and $S_{r(i)}$ to define $F_i$ such that the flow $F_i$ does not affect the balancing task of any other non-leaf node $j \neq i$.
More formally for any non-leaf node $j \neq i$, the quantity $\overline{L}(S_{\ell(j)}) - \overline{L}(S_{\ell(j)})$ incurs zero bias due to the flow $F_i$. We refer to this as the {\em orthogonal} property of demands and it allows to create biases that simultaneously balance $\overline{L}(S_{\ell(j)}), \overline{L}(S_{r(j)})$ for all pairs. This also is illustrated in Figure~\ref{fig:edge flow}.

To do this, we define a multi-commodity flow for each tuple $(i,u,v)$ with $u \in S_{\ell(i)}$ and $v \in S_{r(i)}$ and assign it a demand $\kappa_i(u,v)$ proportional to $1/|S_{\ell(i)}||S_{r(i)}|$.  If any non-leaf node $j\neq i$ in the subtree rooted at $i$, this ensures that the flow $F_i$ is split in proportion to $|S_{\ell(j)}|$ and $|S_{r(j)}|$ along the two children of $j$, ensuring orthogonality as   $\overline{L}(S_{\ell(j)}) - \overline{L}(S_{r(j)})$ incurs zero bias. Moreover if $i$ and $j$ do not lie along some root-leaf path, the tree structure ensures that the flows $F_i$ and $F_j$ do not affect each other. 

In principle we could have generated these flows ad-hoc, however it is desirable that our allocation rule shall be polylogarithmically fast and simple to implement. The fact that we use a separate commodity for the flow between every pair of sibling sets allows us to simply change the sign of that flow when the load balance between the sets shifts. 

\paragraph{R\"acke Trees.} 
The above describes a strategy on a tree. To realize this strategy using flows, we use Raecke's cut-based decomposition for $G$. Roughly speaking, this ensures that the decomposition does not have bottlenecks and sufficiently high biases can be created between the relevant sets. 
Moreover in this realization the different commodities are \emph{oblivious} to each other, that is, the flow realization for each commodity depends only on the demand of that commodity. This enables the simple implementation of our algorithm.

In general, R\"acke trees need not be binary. To get around this we work with an auxiliary binary tree $T$ obtained from $R$, that we call a \emph{balancing-flow tree}. Even though $T$ is not a valid cut decomposition anymore,
it maintains the properties needed for our purpose.

\section{Algorithm}
Let $G=(V,E)$ be the underlying graph for the graphical process and let $k = k(G)$ denote the edge-connectivity (the size of the minimum cut) of  $G$.

The algorithm consists of two parts. A preprocessing stage, and an online allocation stage. In the preprocessing stage we analyze the graph structure and compute the decomposition tree. In the online allocation stage we use these to realize an allocation strategy that uses $O(\log n)$ operations per request to balance the load at all times.

\subsection{Computing the strategy}
Given $G$, we first  construct a R\"acke cut-based decomposition $R=(V_R,E_R,c_R)$ tree for $G$, together with the flow templates $f_{uv}$ for each  ordered pair $u,v \in V$, as defined in Section \ref{sec:prel}. 

Let $\alpha_G$ be the congestion ratio of $R$ with respect to $G$. We apply the algorithm of \cite{HHR03} to construct $R$, so that $\alpha_G =O(\log^2 n \log \log n)$ and the running time is polynomial.
Let $R_i \subset V$ denote the subset corresponding to $i\in V_R$, and recall that $|R_j| \leq (3/4) |R_i|$ for each  $(i,j) \in E_R$, so that the depth of $R$ is $O(\log n)$. 
Moreover, as $G$ is $k$ edge-connected, each edge $(i,j) \in E_R$ has capacity $c_R(i,j)\geq k$.

Next, we convert $R$ into a {\em binary} hierarchical decomposition tree $T = (V_T,E_T)$ of $G$. The nodes of $T$ will define the sets
whose average load we will balance in the allocation strategy. To do this we will define certain flows on $T$, and we call $T$ a balancing-flow tree. 
We do not define the edge capacities $c_T$, as 
we will not view $T$ as a cut-based decomposition.
The reader may wish to assume that $R$ is already binary, and skip the paragraph below on creating the tree $T$.

\subsubsection*{Creating the tree $T$}
Apply the following  step repeatedly to $R$ until all non-leaf nodes have degree $2$.

Choose any node $i$ with $p>2$ children $j_1,\ldots,j_p$.
Let $w(h) = |S_{j_h}|/|S_i|$ for $h\in [p]$, so that $\sum_{h=1}^p w(h)=1$.

(i) If $w(h) > 1/4$ for some $h$, make $j_h$ the left child of $i$. Create a new right child $b$ with the remaining $p-1$ children other than $j_h$. The node $b$ corresponds to the set $S_i \setminus S_{j_h}$.

(ii) Otherwise, arbitrarily partition $\{j_1,\ldots,j_h\}$ into two sets $A$ and $B$ so that $w(A),w(B) \in [1/4,3/4]$. Create two children $a$ and $b$ of $i$,
where the children of $a$ (resp.~$b$) are the nodes in $A$ (resp.~$B$). The nodes $a$ and $b$ corresponds to sets $\cup_{j \in A} S_j$ and $\cup_{j \in B} S_j$.

Clearly, $T$ is binary. Also $T$ and $R$  have the same set of leaves as $R$ is a contraction of $T$. Also note that for any node $a$, if $|S_a|/|S_{p(a)}|>3/4$ where $p(a)$ is the parent of $a$ (due to step (i) above), then $a$ must be a node in the original tree R\"acke tree $R$.

The following observation will be quite useful later.
\begin{claim}\label{cl:depth}
\label{lem:ratio}
Let $a, b$ be two nodes in $T$ such that $a$ is an ancestor $b$, and let $q$ be the distance from $a$ to $b$. Then $|S_b| \leq (3/4)^{\lfloor q/2 \rfloor } |S_a|$. 
In particular, $T$ has depth $O(\log n)$. 
\end{claim}
\begin{proof}
Consider some path $c,d,e$ of length $2$ in $T$, where $c$ is the ancestor of $e$. It suffices to show that either $|S_e|\leq (3/4) |S_d|$ or $|S_d|\leq (3/4) |S_c|$.
To see this observe that if $d$ was a node in $R$, this follows by the well-balancedness of a R\"{a}cke tree for the edge $(d,e)$. Otherwise, $(c,d)$ was produced according to rule (ii) and thus $|S_e|\le |S_d|\le (3/4)|S_c|$.
\end{proof}

\subsubsection*{Defining the demands on $T$} 
Let $I_T$ denote the set of internal (non-leaf) vertices of $T$.
For $i\in I_T$, let $\ell(i)$ and $r(i)$ denote its left and right children.  
As described in Section \ref{overview}, each node $i \in I_T$ will be associated with a flow $f_i$ which will be used to balance  $\overline{L}(S_{\ell(i)})$ and  $\overline{L}(S_{r(i)})$, the average load
on subsets corresponding to its left and right children.

To this end, we define a collection of demands $D_i$ for each non-leaf node $i\in I_T$. The flow produced by these demands will determine the allocation strategy as we will describe later below.

Let $\beta = 1/(8 \alpha_G)$.
For each $i \in I_T$, the collection $D_i$ consists of $|S_{\ell(i)}| |S_{r(i)}|$ commodities
$(i,u,v)$ for each pair of vertices $u \in S_{\ell(i)}, v \in S_{r(i)}$.
For each such pair, we create the demand  from $u$ to $v$ of 
 \[\kappa_i(u,v) =  \frac{\beta k }{|S_{\ell(i)}||S_{r(i)}|}.\]
For a pair of nodes $j \in V_T$ and $i\in I_T$,  let 
\begin{equation}
\label{eq:dij}
    d_i(j) = \sum_{u \in S_j , v \notin S_j} \kappa_i(u,v) - \sum_{u \notin S_i, v\in S_j} \kappa_i(u,v), 
\end{equation}
denote the total (signed) demand from $D_i$ leaving the set $S_j$. 

Let $T_i$ denote the subtree of $T$ rooted at $i$. Note that $d_i(j) =0$ for any node $j \notin T_i$,  as the demands in $D_i$ are between the vertices in $S_i$. Also, $d_i(i)= \beta k - \beta k =0$.

These demands satisfy a useful orthogonality property, as illustrated in Figure~\ref{fig:edge flow}.
\begin{lemma}
\label{lem:orthogonal}
 For any $i,j \in I_T$, with $i\neq j$ we have that 
 \[  \frac{d_{i}(\ell(j))}{|S_{\ell(j)}|} -     \frac{d_{i}(r(j))}{|S_{r(j)}|}=0.\]
 For $j=i$ we have
  \[  \frac{d_{i}(\ell(j))}{|S_{\ell(j)}|} -     \frac{d_{i}(r(j))}{|S_{r(j)}|}= \beta k \left(\frac{1}{|S_{\ell(i)}|} + \frac{1}{|S_{r(i)}|}  \right) .\]
\end{lemma}
 \begin{proof}
We first assume that $i=j$. The property clearly holds for any node $j \notin T_i$ as $d_{i}(\ell(j)) =d_{i}(r(j)) =0$.  

Consider $j \in T_i$. As $j\neq i$, suppose without loss of generality that $j$ lies in the left subtree at $i$. Then the total demand leaving $\ell(j)$ is exactly 
 \begin{equation}
\label{eq:demij}
     d_i(\ell(j)) =  \sum_{u \in S_{\ell(j)}} \sum_{v \in S_{r(i)}} \kappa_i(u,v) =  |S_{\ell(j)}| |S_{r(i)}| \frac{ \beta k} {|S_{\ell(i)}||S_{r(i)}| } = \beta k \frac{ |S_{\ell(j)}|} {|S_{\ell(i)}|}.
 \end{equation}
 Similarly, $d_i(r(j)) =  \beta k|S_{r(j)}| /(|S_{\ell(i)}|)$  and the claim follows.  If $j$ lies in the right subtree at $i$ (so the demands enter $j$), the only difference is that sign of $d_i(\ell(j))$ is swapped.

 For $j=i$, we have $d_i(\ell(j)) = \beta_k$ and $d_i(r(j))=-\beta$ and the claim follows.
 \end{proof}

The following fact will be useful later in realizing the $T$ flow on $G$.
\begin{lemma}
\label{lem:tot-demand}
The total demand, either entering or leaving, across any edge $(a,b) \in E_T$, with $b$ child of $a$,  is at most $8 \beta k \leq  k/\alpha_G$.
\end{lemma}
\begin{proof}
The only demand leaving (resp.~entering) $b$ is due to demands in $D_i$ for nodes $i\in V_T$ that are ancestors of $b$ and for which $b$ lies in the left (resp.~right) subtree of $i$.
Let $d(b,i)$ be the distance between $b$ and $i$. Then by Claim \eqref{lem:ratio} and \eqref{eq:demij}, the total flow leaving $b$ and entering $b$ due to the demands  $D_i$ is at most 
$\beta k |S_b|/|S_{\ell(i)}| \leq \beta k (3/4)^{\lfloor (d(b,i)-1)/2 \rfloor} $.  Summing up over all the ancestors of $i$ such that $d(b,i)=1,2,\ldots$, this is at most $8 \beta k $.
\end{proof}

Now consider the total load on the edges of $R$ due to  the demands $\kappa_i(u,v)$. 
As $R$ is a contraction of $T$, the flow on an edge $(i,j) \in E_R$, where $j$ is a child of $i$, is exactly the total flow entering and leaving the node $j$ in $T$, which by the Lemma above is $8\beta k$.
In particular, this implies the following on the tree $R$.
\begin{lemma}
\label{lem:R-load}
Consider the multi-commodity flow in $R$ due to commodities $\kappa_i(u,v)$ for all $i\in I_T$ and $u,v \in V$. Then the total flow on any edge is at $8 \beta k$. As each edge in $R$ has capacity at least $k$, the congestion of any edge is at most $8\beta k/k = 1/\alpha_G$.
\end{lemma}
 \subsubsection*{Computing the balancing flows for each node of $I_T$}
 We now map these demands and the associated flow back to $G$. 
 Recall the flow templates $f_{uv}$ for each pair of vertices $u,v \in V$  in $G$, given by the oblivious routing associated with the tree $R$.
 
For each edge $e = (x,y)$ of $G$ and node $i\in I_T$, let
\[g_{i}(x,y) =  \sum_{u,v \in V} f_{uv}(x,y) \kappa_i(u,v),\]
be the total (signed) flow from $x$ to $y$ due to the demands in the collection $D_i$.
  Observe that $g_{i}(x,y) = - g_{i}(y,x)$ for all $x,y$  as $f_{uv}(x,y)=-f_{uv}(y,x)$ for all $u,v,x,y \in V$.

Note that the $g_i(x,y)$ only depend on the graph $G$ (and $R, T, \{f_{uv}\}_{uv}$ that are derived from $G$). In particular, they do not depend on the current load vector $L^t$ at the vertices of $G$, and will be fixed henceforth.

Having computed the $g_i(x,y)$, we are now ready to define the  allocation strategy.

\subsection{Allocation Strategy}
\label{sec:alloc}
At each time step $t+1$, for $t=0,1,2,\ldots$, we allocate the arriving ball using the following strategy.

\begin{enumerate} 
\item Let $e=(x,y)$ be the edge where the ball arrives at time $t+1$.
Choose $i \in I_T$ with probability 
\[p_e(i) := |g_{i}(x,y)|.\]
\item Let $L=L^{t}$ denote the current load vector.\\ Compare $\overline{L}(S_{\ell(i)})$ and  $\overline{L}(S_{r(i)})$ (the average load on  $S_{\ell(i)}$ and $S_{r(i)}$).
\begin{enumerate}
    \item If $\overline{L}(S_{\ell(i)}) > \overline{L}(S_{r(i)})$: allocate the ball to $y$ if $g_{i}(x,y)>0$, else allocate it to $x$.
\item If $\overline{L}(S_{\ell(i)}) \leq  \overline{L}(S_{r(i)})$: allocate the ball to $x$ if $g_{i}(x,y)>0$, else allocate it to $y$.
\end{enumerate}
\item With remaining probability $1-\sum_{i \in I_T} p_{e}(i) $, allocate the ball uniformly and randomly to either $x$ or $y$.
\end{enumerate}

\paragraph{Implementation.} The allocation strategy can be easily implemented to run in $O(\log n)$ time per arriving request as follows.

First, for each $e=(x,y)$, we create a table $G_e$ of size $O(n)$, with entries for each internal node $i \in I_T$, that contains $g_{i}(x,y)$. This table is fixed and does not change over time.

We maintain another table $F$ of size $O(n)$ with entries $\overline{L}(S_j)$ for each node $j\in V_T$ that change over time. Whenever a ball is allocated to some vertex $u$, we increase $\overline{L}(S_j)$ by $1/|S_j|$ for 
each node $j$ on the path from $u$ to the root $r$ in $T$. This requires only $O(\log n)$ update operations.

Upon the request for edge $e$, we lookup the entry for $i$ with probability $p_e(i)$ in the table $G_e$, and the entries $\overline{L}(S_{\ell(i)})$ and $\overline{L}(S_{r(i)})$ in the table $F$ and use the allocation strategy above to allocate the ball.

\section{Analysis}\label{sec:anal}
Our goal is to prove the bound on the gap as given by Theorem \ref{thm:main}.
To this end, we first bound the gap between the average load on sets corresponding to any two siblings in $T$. In particular, we show the following.
\begin{lemma}
\label{lem:main}
For any non-leaf node $i \in I_T$, for every constant $c>0$,
\[ \Pr [\overline{L}(S_{\ell(i)}) -  \overline{L}(S_{r(i)})  > 8( d/k) \alpha_G \, c \log n ] = O(n^{-c}).\]
\end{lemma}
Let us first see how this directly implies Theorem \ref{thm:main}.
\begin{proof}[Proof of Theorem \ref{thm:main}]
Fix any non-leaf node $i \in I_T$ and consider some child $j \in  \{\ell(i), r(i)\}$ of $i$. Let $j'$ be the sibling of $j$.
We first observe that the difference between average load of siblings is no less than the difference for a parent-child pair. This follows as
\begin{equation}
    \label{eq:rela}
    |\overline{L}(S_j) - \overline{L}(S_{j'})|
= \left|\frac{L(S_j)}{|S_j|} - \frac{L(S_i)-L(S_j)}{|S_{j'}|}\right|
= \frac{|S_i|}{|S_{j'}|} | \overline{L}(S_j) -  \overline{L}(S_i)| \geq  | \overline{L}(S_j) -  \overline{L}(S_i)|,
\end{equation} 
where we use the fact that $L(S_i) = L(S_j)+L(S_{j'})$, $|S_i|=|S_j| +|S_{j'}|$ and $|S_i| \geq |S_{j'}|$.

 Fix some vertex $u$ of $G$, and consider the path $u=i_0,i_1,\ldots,i_h=r$  in $T$ from  the leaf $u$ to the root $r$.
 As $L(u) = \overline{L}(u)$ for a leaf $u$ and $\overline{L}(r)$ is the average load over $V$,
 the deviation of load at $u$ from the average is  
 \[ |\overline{L}(u) - \overline{L}(r)| 
 = \left|\sum_{g=1}^h \big(\overline{L}(i_{g-1}) - \overline{L}(i_{g})\big)\right| \leq\sum_{g=1}^h \left|\overline{L}(i_{g-1}) - \overline{L}(i_{g})\right| .\] 
Applying Lemma \ref{lem:main} with $c>1$, taking  
a union bound over the $O(n)$ edges $(i,j) \in E_T$ and using \eqref{eq:rela}, we get that w.h.p. $|\overline{L}(S_i) - \overline{L}(S_j)|\le ( d/k) \alpha_G \, c' \log n ]$ for all edges $(i,j) \in E_T$.  As the height of $T$ is $h_T = O(\log n)$ by Claim~\ref{cl:depth}, 
this gives that,  w.h.p., for every vertex $u$,
\[|\overline{L}(u) -  \overline{L}(r)|  = O(( d/k) \alpha_G \, \log^2 n).\qedhere\]
\end{proof}

To prove Lemma \ref{lem:main} we will show that whenever $\ovl(S_{\ell(i)}) > \ovl(S_{r(i)})$, the strategy creates sufficient drift to decrease $\ovl(S_{\ell(i)}) - \ovl(S_{r(i)})$, and vice versa.

\subsection{Computing the Drifts and Biases} We now compute these quantities.

Consider the ball arriving at time $t+1$ and let $L=L^t$ denote the current load vector.
To describe the allocation strategy more compactly, it will be convenient to define 
\[Q_L(i)= \begin{cases}
 +1    &  \text{ if } \overline{L}(\ell(i)) \leq  \overline{L}(r(i)) \\
 -1  & \text{ otherwise.} 
\end{cases}
\]

For an edge $e=(x,y)$, let $q_e(x)$ denote the probability that the allocation strategy assigns the ball to $x$.
Note that $q_e(x)$ (and most of other quantities below) depend on the vector $Q_L$, but we drop this for ease of notation, as we only consider the allocation at  time $t+1$.

\begin{lemma}
$q_e(x) = \frac{1}{2} + \frac{1}{2} \sum_{i} g_{i}(x,y)) Q_L(i)$.  
\end{lemma}
\begin{proof}
Recall the allocation strategy in Section \ref{sec:alloc} can be written as follows. Upon the request $e=(x,y)$, it samples $i \in I_T$ with probability $p_e(i) = |g_i(x,y)|$ and allocates the ball to $x$ with 
probability $ (1 + \sgn(g_{i}(x,y)) Q_L(i)/2$.  Here $\sgn(a)=1$ if $a>0$ and $-1$ otherwise. So 
\begin{align*}
q_e(x)  & =   \sum_{i \in I_T} |g_{i}(x,y)| \left( \frac{1 + \sgn(g_{i}(x,y)) Q_L(i)}{2} \right) +\bigg(1 -  \sum_{i \in i_T} |g_{i}(x,y)| \bigg) \frac{1}{2} &  \\
 & =   \frac{1}{2} + \frac{1}{2} \bigg(\sum_{i\in I_T} g_{i}(x,y) Q_L(i)  \bigg). & \qedhere
\end{align*}
\end{proof}
Let us define the bias towards $x$ due to the edge $e=(x,y)$ as
\[ b_e(x) :=  2q_e(x)-1 = \sum_{i \in I_T} g_{i}(x,y)) Q_L(i).\]
We next show that the  probabilities $q_e(x,y)$ lie in $[0,1]$.
\begin{lemma}
\label{lem:well-def}
For an edge $e=(x,y)$, we have that $b_e(x) \in [-1,1]$.
\end{lemma}
\begin{proof}
As $Q_L(i)$ is $\pm 1$, 
\[|b_e(x)| \leq  \sum_{i \in I_T} |g_{i}(x,y)| \leq \sum_{i\in I_T}  \sum_{u,v} |f_{uv}(x,y)| |\kappa_i(u,v)|. \]
This is at most the total multi-commodity flow on the edge $(x,y)$
due to the demands in $D_i$ for all $i\in I_T$.
By Lemma \ref{lem:R-load}, and the property of R\"{a}cke decomposition, this is at most $\alpha_G \conge(R,\vec{d}) \leq 1$.
\end{proof}

\paragraph{The biases for sets.}
Let us now compute the probability $q(S_i)$ that the ball at time $t+1$ is assigned to a vertex in the set  $S_i$ corresponding to  some node $i\in V_T$.

Let $N(x)$ be the set of neighbors of vertex $x$ in $G$.
As $q_x(e)$ is the probability of allocating the ball to $x$ when edge $e=(x,y)$ is chosen, and each $e$ arrives uniformly with probability $1/m$,
\begin{eqnarray}
\label{qsi}
q(S_i)  = \frac{1}{m}  \sum_{x \in S_i} \sum_{y \in N(x)} \left(\frac12 + b_x(x,y)\right) 
   =  \frac{|S_i|}{n}  + \frac{1}{m} \sum_{x \in S_i} \sum_{y\in N(x)}  b_x(x,y),
\end{eqnarray}
where we use that $\sum_{x \in S_i} \sum_{y\in N(x)} 1 = d|S_i|$ as $G$ is $d$-regular, and that $m=nd/2$.

Let us define
\begin{equation}
\label{bsi}
    b(S_i):=\frac{1}{m} \sum_{x \in S_i} \sum_{y\in N(x)}  b_x(x,y)=q(S_i)-\frac{|S_i|}{n}
\end{equation} as the bias towards set $S_i$, over the stationary probability $|S_i|/n$.

The following key lemma that relates the demands $D_j$ defined on the nodes $j\in I_T$ to the bias $b(S_i)$.
Recall from \eqref{eq:dij}, that  
$d_j(i)$ is the total flow out of the subtree $T_i$ due to the demands in $D_j$.

\begin{lemma}
\label{lem:set-bias}
For any set $S_i$ corresponding to a node $i$ of $T$
\[ b(S_i)  = \frac{1}{m} \sum_{j\in i_T} Q_L(j) d_j(i). \]
\end{lemma}
\begin{proof}
By the definition of $b_x(x,y)$ and $g_i(x,y)$,
\begin{eqnarray*}  \sum_{x \in S_i} \sum_{y \in  N(X)}  b_x(x,y) & = &  \sum_{x \in S_i} \sum_{y \in N(x)} \sum_{j \in i_T} g_{j}(x,y)) Q_L(j) \\   
 &  = &   \sum_{x \in S_i} \sum_{y \in N(x)} \sum_{j\in i_T} Q_L(j) \sum_{u,v} f_{uv}(x,y) d_j(u,v)  \\
 &  = &  \sum_{j\in I_T} Q_L(j)  \sum_{u,v} d_j(u,v)   \sum_{x \in S_i} \sum_{y \in N(x)} f_{uv}(x,y).   \end{eqnarray*}
Noting that $\sum_{y\in N(x)} f_{uv}(x,y)  = 1_{x=u}-1_{x=v}$ for any $x$,
\[ \sum_{x \in S_i} \sum_{y \in N(x)} f_{uv}(x,y) = \sum_{x \in S_i} (1_{x=u}-1_{x=v}) = 1_{u\in S_i} - 1_{v \in S_i}.\]
Plugging back this gives that
\[  \sum_{x \in S_i} \sum_{y \in N(x)}  b_x(x,y) = \sum_{j\in i_T} Q_L(j)  \sum_{u,v} d_j(u,v) (    1_{u\in S_i} - 1_{v \in S_i}) = \sum_{j\in i_T} Q_L(j) d_j(i). \qedhere \]
\end{proof}

The following quantity determines the relative bias between two siblings and will be crucial in the proof of Lemma \ref{lem:main}.
\begin{lemma}
\label{qsi-normalized-diff}
For a node $i$ and its children $\ell(i)$ and $r(i)$,
\[ \frac{q(S_{\ell(i)})}{|S_{\ell(i)}|} - \frac{q(S_{r(i)})}{|S_{r(i)}|} = \frac {\beta k Q_L(i)}{m} \left(\frac{1}{|S_{\ell(i)}|}+   \frac{1}{|S_{r(i)}|}\right).
\]
\end{lemma}
\begin{proof}
By definition for bias in \eqref{bsi}, we have  $q(S_{\ell(i)}) = |S_{\ell(i)}|/n + b(S_{\ell(i)})$ and 
$q(S_{r(i)}) = |S_{r(i)}|/n + b(S_{r(i)})$.
By Lemma \ref{lem:set-bias}.
\[
\frac{q(S_{\ell(i)})}{|S_{\ell(i)}|} - \frac{q(S_{r(i)})}{|S_{r(i)}|}  = 
\frac{b(S_{\ell(i)})}{|S_{\ell(i)}|} - \frac{b(S_{r(i)})}{|S_{r(i)}|} = 
\sum_{j \in I_T} \frac{1}{m} Q_L(j) \left( \frac{ d_j(\ell(i))}{|S_{\ell(i)}|} - \frac{ d_j(r(i))}{|S_{r(i)}|} \right). \]
By Lemma \ref{lem:orthogonal}, for any $i \neq j$ the right side is zero, the only contribution is due to $j=i$ which is
 \[  \frac{\beta k Q_L(i)}{m} \left(\frac{1}{|S_{\ell(i)}|} +  \frac{1}{|S_{r(i)}|} \right). \qedhere  \]
\end{proof}
\subsection{Proving the concentration}
We  now prove Lemma \ref{lem:main}.
We first develop a concentration lemma to be used in our analysis. 
\paragraph{2-point concentration.} The set up is the following. There are two bins, $1$ and $2$, associated with so called
steady state probabilities $\pi_1, \pi_2$ which satisfy $\pi_1+\pi_2=1$. 
At each time step a ball arrives and
let $\ell^t_1, \ell^t_2$ denote the loads of the bins at the end of time $t$. In addition, fix $b\leq \min(\pi_1,\pi_2)/2$, and assume that the allocation process at $t+1$ is as follows:
a parameter $b_t>b$ also satisfying $b_t\le \max(\pi_1,\pi_2)/2$ is chosen independent of the future random choices.  Then, if $\ell^t_1/\pi_1 > \ell^t_2/\pi_2$ the ball is allocated to either bin 1 with probability $\pi_1-b_t$ or to bin 2 with probability $\pi_2+b_t$. Otherwise, if $\ell^t_1/\pi_1 \leq \ell^t_2/\pi_2$, the ball is allocated to bin 1 with probability $\pi_1 + b_t$, or bin 2 with probability $\pi_2-b_t$.

For any time $t$, let
$d(t) = \ell^t_1/\pi_1 - \ell^t_2/\pi_2$ denote the normalized difference in the loads of the bin. 
\begin{lemma}
\label{lem:2pt}
For any time $t$ and any $x\geq 0$ it holds that 
\[\Pr[|d(t)| \geq  x/b] = O\big(\exp(-x/8)\big).\]
\end{lemma}
\begin{proof}
We define the potential function  
$\Phi(t) =  \cosh (\alpha d(t))$, where  $\alpha = b/8$.
So $\Phi(0) =1$ initially at $t=0$.  

Fix a time $t$ and let $\Delta \Phi = \Phi(t+1) - \Phi(t)$ and denote $d=d(t)$ and  $\Delta d = d(t+1)  - d(t)$.
Then, by Taylor expansion, and using $(d/dx) \cosh(x) = \sinh(x)$ and $(d/dx) \sinh x = \cosh x$,  
\[\Delta \Phi = \sinh (\alpha d) \Big(\alpha \Delta d + (\alpha \Delta d)^3/3! + \ldots\Big)  + \cosh (\alpha d)  \Big(\alpha (\Delta d)^2/2! + (\alpha \Delta d)^4/4! + \ldots\Big). \]
Let $\gamma =1/\pi_1  + 1/\pi_2$. As $b \leq \max(\pi_1,\pi_2)/2$ we have that $\gamma b \leq 1$. As either $\ell_1^t$ or $\ell_2^t$ rises by $1$, we have $|\Delta d| \leq \gamma$, and hence $|\alpha \Delta d| \leq 1/2$.
Bounding $|(\alpha \Delta d)^i | \leq (\alpha \Delta d)^2 2^{-i+2}$ for $i\geq 2$, and using that $ \cosh x -1 \leq |\sinh(x)| \leq \cosh x$ for all $x$,
\[  \Delta \Phi \leq \sinh (\alpha d) (\alpha \Delta d) + \cosh(\alpha d) (\alpha \Delta d)^2 .\] 
For $d>0$,
\[\E (\Delta d)  =  (\pi_1 - b_t)\frac{1}{\pi_1} - (\pi_2 + b_t)\frac{1}{\pi_2}   =  -b_t \left(\frac{1}{\pi_1} + \frac{1}{\pi_2}\right) = -b_t\gamma \le -b \gamma\]
and otherwise,  for $d\leq 0$, we have $\E[\Delta d] = b_t \gamma\ge b \gamma$.

For $d>0$, we have
\[\E[(\Delta d)^2] =  (\pi_1 - b_t) \left(\frac{1}{\pi_1}\right)^2 + (\pi_2+ b)\left(\frac{1}{\pi_2}\right)^2 = \gamma  + b_t(-1/\pi_1^2 + 1/\pi_2^2) \leq \gamma + b_t \gamma^2 \leq 2 \gamma,\]
where the last step uses the fact that $b_t \gamma \leq 1$. The same bound also holds for $d\leq 0$.

This gives
\begin{eqnarray*} \E[\Delta \Phi(t)\ |\ \Phi(t)] & \leq &  \alpha \sinh (\alpha d) \E[\Delta d] + 2 \alpha^2 \cosh(\alpha d) \E[(\Delta d)^2] \\
& \leq  &  - \alpha |\sinh (\alpha d) | b \gamma + 4 \alpha^2 \gamma \cosh(\alpha d) \\
& \leq & -\alpha b \gamma (\Phi(t)-1) + \alpha b \gamma \Phi(t)/2 = -\alpha b \gamma (\Phi(t) -2) /2.
\end{eqnarray*}
Thus, taking expectation and using $\Delta(\Phi(t))=\Phi(t+1)-\Phi(t)$ we obtain
\[\E[\Phi(t+1)]\le \alpha b \gamma +(1-\alpha b \gamma/2)\E[\Phi(t)].\]
Taking induction over $t$ and recalling that $\Phi(0)=1$ we obtain 
\[\E[\Phi(t)]\le 2\big(1-(1-\tfrac{\alpha b\gamma}{2})^t\big)+(1-\tfrac{\alpha b\gamma}{2})^t\le 2.\]
The result now follows by applying Markov's inequality to $\Phi(t)$.
\end{proof}

\subsubsection*{Proof of Lemma \ref{lem:main}.}
We apply the set up above to bound $|\ovl(S_{\ell(i)})-\ovl(S_{r(i)})|$. Let bin 1 correspond to $S_{\ell(i)}$ and bin 2 to $S_{r(i)}$. We only focus on the time steps when the ball is allocated to some vertex in  $S_i$. Let $\pi_1 = |S_{\ell(i)}|/|S|$ and $\pi_2 = |S_{r(i)}|/|S|$ so that $\pi_1+\pi_2=1$.
The probability of allocating the ball to $S_{\ell(i)}$ conditioned on it being allocated to $S_i$ is $q(S_{\ell(i)})/q(S_i)$, and similarly $q(S_{r(i)})/q(S_i)$ for $S_{r(i)}$. Recall that $q$ depends on the load vector (and hence the time) but we suppress these for ease of notation.

Writing $\pi_1 -b_t = q(S_{\ell(i)})/q(S_i)$ and $\pi_2 + b_t = q(S_{r(i)})/q(S_i)$,  gives that 
\[ b_t \left(\frac1{\pi_1}+\frac1{\pi_2}\right)  = \frac{ q(S_{r(i)})}{ \pi_2 q(S_i)}  - \frac{ q(S_{\ell(i)})}{\pi_1 q(S_i)} = \frac{|S_i|}{q(S_i)} \left( \frac{ q(S_{r(i)})}{ |S_{r(i)}|} -   \frac {q(S_{\ell(i)})}{ |S_{\ell(i)}|}    \right),  \]
which by Lemma \ref{qsi-normalized-diff}  equals
\[- \frac{|S_i|}{q(S_i)} \frac{\beta k Q_L(i)}{m} \left(\frac{1}{|S_{\ell(i)}|} +  \frac{1}{|S_{r(i)}|} \right) = - \frac{1}{q(S_i)} \frac{\beta k Q_L(i)}{m} \left(\frac{1}{\pi_1} +  \frac{1}{\pi_2} \right),\]
which gives that 
\begin{equation}
\label{eq:bt}
    b_t = - \frac{\beta k Q_L(i)}{m q(S_i)} 
\end{equation}
We claim that $|b_t| \geq \beta k Q_L(i)/m q(S_i)$. This follows as  $Q_L(i) \in \{-1,1\}$ and $q(S_i) = |S_i|/n + b(S_i) \leq 2 |S_i|/n$, where we use that by Lemma \ref{lem:set-bias} and Lemma \ref{lem:tot-demand}, $b(S_i) \leq 8 \beta k/m = 2k/(\alpha_G nd) \leq 2/(\alpha_G n)  \leq |S_i|/n$.

Let us define 
\begin{equation}\label{eq:bdef}
b :=\beta k/(d|S_i|) |b_t| = \frac{\beta k n}{2m |S_i|},\end{equation}
so that $b \leq |b_t|$.

Applying Lemma \ref{lem:2pt}, we obtain
\[ \Pr\left[ \left| \frac{L(S_{\ell(i)})}{\pi_1} - \frac{L(S_{r(i)})}{\pi_2}  \right| \geq x/b \right] = O\big(\exp(-x/8)\big).\]
As $\pi_1 = |S_{\ell(i)}|/|S_i|,  \pi_2 = |S_{\ell(i)}|/|S_i|$  and using the definition of $b$ in \eqref{eq:bdef}, gives that for any $x>0$,
\[  \Pr\left[ | \ovl(S_{\ell(i)}) - \ovl(S_{r(i)})  | \geq  8x(d/k) \alpha_G\right] = O\big(\exp(-x/8)\big),\]
which implies the desired result.

\section{Lower Bounds}
\label{sec:lb}
We describe various simple but instructive lower bounds. First we prove Theorem \ref{thm:lb} in Section \ref{sec:lb1}. Then, in Section \ref{sec:lb2} we show an $\Omega(\log(n)/\log \log n)$ lower bound on the upper gap for bounded degree graphs. In particular this implies that on bounded degree expanders, the upper gap is roughly of the same order as  the gap, in contrast to the complete graph where the upper gap is $O(\log \log n)$ and the gap is $\Omega(\log n)$. 

We also remark that \cite{PTW15} showed that for complete graphs, under the $1+\beta$ choice model for a fixed $\beta<1$, the upper gap is $\Omega(\log n)$ and hence similar to the gap. Again this is in contrast to $\beta=1$ (the 2-choice model) where the upper gap  $O(\log \log n)$ and gap is $\Omega(\log n)$. 

\subsection{Proof of Theorem \ref{thm:lb}}
\label{sec:lb1}
We show for any $d$-regular $k$ edge-connected graph $G$, under any allocation strategy, the gap is $\Omega(d/k +\log n)$  with at least constant probability.

The $\Omega(\log n)$ bound follows from the following folklore argument. Fix any time $t$ and 
consider any interval $I$ of $O(n \log n)$ steps just before $t$. For a fixed vertex, at any time step, the probability that some edge incident to it is chosen is $d/m = 2/n$. So by standard coupon-collector argument, with $\Omega(1)$ probability, there is some vertex $v$ for which no incident edge is chosen during $I$. So during $I$, the load of $v$ cannot change under any strategy, while the average load increases by $\log n$. 

Hence, to prove Theorem \ref{thm:lb}, it suffices to show the following.
\begin{lemma}
Let $G$ be any $d$-regular, $k$ edge-connected graph. Then for any allocation strategy, and at any time $t$ the gap is $\Omega(d/k)$ with constant probability.
\end{lemma} 
\begin{proof}
Let $C=(S,\overline{S})$ be some minimum cut in $G$, and let $S$ be the larger side with $|S|\geq n/2$.
Fix a time $t$ and 
consider some time interval $I$ of length $T$ just before $t$. We specify $T$ later. Let $Y$ be the number of edges requested with both endpoints in $S$.
As an edge is requested uniformly at random at each time, and as there are $(d|S|-k)/2$ such edges, we obtain
\[ \E[Y] = (d|S|-k)/2 \cdot T/m =  (|S|-k/d)T/n.\]
As $k \leq d$ and  $|S|\geq n/2$, this at least $T/3$, provided that $n\geq 6$.

By standard estimates, with at least constant positive probability, we have $Y \geq  \E[Y] + \sqrt{T}$.
Conditioned on this event, the average load on vertices in $S$ during $I$ exceeds the stationary load of $T/n$ by at least
\[  \frac{Y}{|S|} - \frac{T}{n} \geq  \frac{\E[Y] + \sqrt{T}}{|S|} - \frac{T}{n} =   \frac{(|S|d-k)T}{|S|dn} + \frac{\sqrt{T}}{|S|} -  \frac{T}{n} = \frac{\sqrt{T}}{|S|}  - \frac{k T}{nd|S|}.\]
Choosing $T = cn^2d^2/k^2$ for small enough $c$ and as $|S|\geq n/2$, this is
$\Omega( d/k)$.
\end{proof}
\subsection{Upper gap for bounded degree graphs}
\label{sec:lb2}
Next we show that even for bounded degree expanders, under the greedy strategy the maximum load at time $t$ is typically $t/n + \tilde{\Omega}(\log n)$. This is unlike for complete graphs where the upper gap is $O(\log \log n)$.
\begin{lemma}
For any $d$-regular graph $G=(V,E)$ with $d=O(1)$, at any time $t$ and for any allocation strategy, with constant probability the maximum vertex load is $t/n + \Omega(\log n/\log \log n)$. 
\end{lemma}
\begin{proof} Fix a time $t$, and consider the interval $I$ of length $|E|=dn/2$ just before $t$. Let $t'=t-|I|$ be time at the beginning of $I$, and let $a = t'/n$ denote the average load at $t'$. Let $s=\log n/(4\log\log n)$.
We will show that with constant probability, either $t$ or $t'$ has upper gap $\tilde{\Omega}(\log n)$.

If some vertex has load $a+s/4$ at time $t'$, then we are already done. Otherwise, by an averaging argument, there can be at most $n/4$ vertices with load at most $ a-s$ at time $t'$. Call an edge {\em bad} if it is incident to such a vertex, and let $B$ be the set of bad edges.
As $G$ is $d$-regular, $|B| \leq d(n/4) \leq |E|/2$.

As $|I|=|E|$, and each edge is requested uniformly and independently during $I$, with constant probability, some edge $e\in E\setminus B$ will be requested at least $4s$ times. If this event occurs, then for any allocation strategy, the load on some endpoint of $e$ increases by at least $2s$ during $I$, and hence becomes at least $a-s+2s=a+s$ at time $t$. On the other hand, the average load only increases by $|I|/n=d/2 = O(1)$, to become $a+d/2$ at time $t$. This implies the result.
\end{proof}

\subsection*{Concluding Remarks}
Our result in this paper gives, in particular, an allocation strategy with asymptotically polylogarithmic gap for any $d$-regular $d$-connected graph. One may wonder whether this strategy gives any improvement in practical settings where the number of bins is not too large. To show that it indeed does, we conclude the paper with with a simulation result of our strategy for cycles, compared against the asymptotic behavior of the greedy strategy (Figure~\ref{fig:Flow Gap}).

\begin{figure}[ht!]
    \centering
    \includegraphics[scale=0.5]{greedy_alg_gap3.pdf}
    \caption{The gap for our flow algorithm on cycles for the best possible R\"{a}cke tree, averaged over 32 runs of $n^{2.5}$ balls for cycles of sizes $n=10$ to $8400$.   
    }
    \label{fig:Flow Gap}
\end{figure}

\section*{Acknowledgments}
We thank  Guy Bensky and Shlomo Ron for their coding assistance in making our simulations.
\small{
\bibliographystyle{alpha}
\bibliography{ref}
}
\end{document}